\providecommand{\tabularnewline}{\\}
\theoremstyle{plain}
\newtheorem{thm}{Theorem}
  \theoremstyle{definition}
  \newtheorem{defn}[thm]{Definition}
  \theoremstyle{plain}
  \newtheorem{lem}[thm]{Lemma}
  \theoremstyle{remark}
  \newtheorem*{rem*}{Remark}
  \theoremstyle{plain}
  \newtheorem*{fact*}{Fact}
  \theoremstyle{remark}
  \newtheorem{rem}[thm]{Remark}
  \theoremstyle{plain}
  \newtheorem{cor}[thm]{Corollary}
  \theoremstyle{plain}
  \newtheorem{fact}[thm]{Fact}
\journal{Discrete Applied Mathematics}
\newcommand{\npc}{$NP$-complete}
\newcommand{\npcness}{$NP$-completeness}
\newcommand{\np}{$NP$}
\newcommand{\ksat}{k-SAT}
\newcommand{\threesat}{3-SAT}
\begin{document}

\title{Estimating Satisfiability}

\author[liafa]{Yacine~Boufkhad\fnref{fn1}\corref{cor1}}

\ead{boufkhad@liafa.jussieu.fr}

\author[liafa]{Thomas~Hugel\fnref{fn1}}

\ead{thomas.hugel@liafa.jussieu.fr}

\fntext[fn1]{Partially supported by INRIA project Gang}

\cortext[cor1]{Corresponding author}

\address[liafa]{LIAFA, CNRS UMR 7089, Université Denis Diderot Paris 7, Case 7014,
F-75205 Paris Cedex 13}

\address{}
\begin{abstract}
The problem of estimating the proportion of satisfiable instances
of a given CSP (constraint satisfaction problem) can be tackled through
weighting. It consists in putting onto each solution a non-negative
real value based on its neighborhood in a way that the total weight
is at least $1$ for each satisfiable instance. We define in this
paper a general weighting scheme for the estimation of satisfiability
of general CSPs. First we give some sufficient conditions for a weighting
system to be correct. Then we show that this scheme allows for an
improvement on the upper bound on the existence of non-trivial cores
in \threesat{} obtained by \citet{Maneva2008} to $4.419$. Another
more common way of estimating satisfiability is ordering. This consists
in putting a total order on the domain, which induces an orientation
between neighboring solutions in a way that prevents circuits from
appearing, and then counting only minimal elements. We compare ordering
and weighting under various conditions.\end{abstract}
\begin{keyword}
Constraint Satisfaction Problem \sep Satisfiability \sep First Moment
Method 
\end{keyword}
\maketitle

\section{Introduction}

Constraint satisfaction problems cover a large variety of problems
that arise in many areas of combinatorial optimization. They are central
in complexity theory because they are \npc{} and also because one
particular case - satisfiability of boolean formulas - was the first
problem to be identified in this class. In general, they consist in
defining constraints on a set of \emph{variables }taking their \emph{values
}in a given finite domain. \emph{Constraints} specify which combinations
of values assigned to subsets of variables are allowed (or dually
are forbidden). A \emph{solution} is a valuation (i.e. the assignment
of a value to each variable) that does not violate any constraint.
The satisfiability problem is the following: given an instance, decide
the existence of a solution for it.

Besides the design of algorithms for solving these problems, the research
of structural properties for these problems has attracted much attention
in the recent years. In particular, the empirical evidence of the
existence of a threshold (rigorously established in some particular
cases) in the satisfiability of some classes of CSPs has opened a
field of research: attempts are made to rigorously establish the existence
and the location of this threshold. This involves estimating the proportion
of satisfiable instances in a given set of instances. The \npcness{}
of these problems in general makes it difficult to determine whether
a given instance is satisfiable; that may explain why direct counting
of satisfiable instances is currently unfeasible. However, precisely
because these problems are in \np{}, it is easy to determine whether
some instance is satisfied by a given valuation and then to count
the formulas satisfied by this valuation. Thus counting couples (formulas,
solutions) is only accessible starting from a solution; moreover,
given a solution, it is not complicated to investigate also its immediate
neighborhood. But even at a distance of 2, i.e.~with neighbors of
neighbors, calculations become quite complicated (see \citet{Kirousis1998}).
This fact imposes a strong restriction on the design of both estimation
techniques studied: they can only make use of local information. We
shall refer to this as \emph{the locality condition}.

Using one of the most popular techniques in the probabilistic method
(cf.~\citet{Alon1992}), namely the first moment method, it is possible
to bound from above the probability of satisfiability. The implementation
of the first moment method makes use of Markov's inequality; one needs
to define a non-negative random variable $X$ that must be at least
$1$ for a satisfiable formula (we call that a \emph{correct }random
variable). Ideally, $X$ should be as small as possible; in other
words, it should be $0$ for unsatisfiable instances and as close
to $1$ as possible for satisfiable ones (if $X$ is $1$ for every
satisfiable instance and $0$ for every unsatisfiable instance then
we get the exact probability of satisfiability). The most straightforward
candidate for $X$ is simply the number of solutions. The method consists
in counting for every valuation the number of instances that are satisfied
by it and then summing up over all valuations. But since the number
of solutions is generally too large, the method over-estimates the
proportion of satisfiable instances. 

Many techniques have been developed to overcome this difficulty in
various types of CSPs, Satisfiability of CNF formulas (\citet{Kamath1995,Dubois1997,Kirousis1998,Dubois2000,Dubois2003,Boufkhad2005,Kaporis2007,Diaz2009,Boufkhad2010}),
3-Coloring of graphs (\citet{Achlioptas1999}), Binary CSPs (\citet{Achlioptas1997,Achlioptas2001b})\ldots{}
Most of these methods share a common point: they count minimal elements
under some partial order over solutions. We will refer to this method
as \emph{solution selection through a partial ordering} or for short
\emph{ordering}. Due to the locality condition, the partial order
must be locally computable (i.e. must depend only on the immediate
neighbors of the considered solution). Two solutions of some instance
are neighbors if they disagree only on the value taken by one variable.
Both solutions may be ordered using a predetermined order on the values
for this particular variable in this particular instance. Finally
we count only those solutions having minimal values for all their
variables with respect to their neighbors. 

Recently, \citet{Maneva2007} introduced a novel approach for the
boolean satisfiability problem consisting in weighting partial valuations
and solutions depending on their neighborhood. While not originally
intended to estimate the proportion of satisfiable instances (but
rather to analyze some properties of Belief Propagation algorithms),
it was though specifically used by \citet{Maneva2008} to estimate
the probability of existence of non-trivial cores in random \threesat{}
instances. The existence of non-trivial cores contains an important
information on the structure of the space of solutions; moreover it
is related to the clustering that has been proved to exist in \ksat{}
for $k\geq9$ \citet{Achlioptas2006}. \citet{Maneva2008} show that
in the \threesat{} instances, non-trivial cores do not exist for
ratios of clauses to variables greater that $4.453$. To do so they
use \emph{valid partial valuations} (i.e. satisfying some properties
related to boolean satisfiability) and weight them according to their
values and their neighborhood.

\section{Overview of Results}

Our first result consists in giving some sufficient conditions to
make a weighting scheme correct for the estimation of satisfiability
on general CSPs (theorem \ref{Weight-Conservation-Theorem}, \emph{Weight
Conservation Theorem}). Then we propose a general weighting scheme
obeying these conditions (theorem \ref{thm:dispatchers-are-good}).
This scheme is based on:
\begin{enumerate}
\item a \emph{weighting seed} that expresses the relative importance of
each value with respect to a variable and an instance; the seed is
such that if all valuations were solutions, then their total weight
would be exactly $1$;
\item a \emph{dispatching function }expressing how the weights of forbidden
valuations are dispatched among solutions to insure that counting
weighted solutions will yield at least $1$ for any satisfiable instance.
\end{enumerate}
We will refer to this method as \emph{solution weighting }or\emph{
}for short \emph{weighting.} 

In theorem \ref{thm:Maneva}, we show that the estimation of satisfiability
used by \citet{Maneva2008} can be improved upon by using a weighting
scheme based on a $3$-valued CSP and obeying the conditions of our
\emph{Weight Conservation Theorem} (which shows that these conditions
are somehow relevant). Thanks to this weighting system, we improve
on the upper bound on the existence of non-trivial cores to $4.419$.
We completely reuse the proof of \citet{Maneva2008} for our new weighting
system, showing that the improvement on the value of the bound is
indeed due to a better weighting system.

Till now the only way to compare ordering and weighting was to compute
the estimations of satisfiability obtained by each of them on a certain
set of instances and to choose the best one. We give some results
comparing these two ways of estimating satisfiability in the following
cases:
\begin{itemize}
\item weighting and ordering can be instance dependent when such syntactic
properties as the number of occurrences of variables and values etc.~can
guide the design of weighting functions and orderings. We show that
in the general case where the weighting function is instance dependent
and when the weighting is \emph{homogeneous} (i.e. when weighting
seeds and dispatching functions are equal), weighting cannot be better
than a well chosen instance dependent ordering (theorem \ref{thm:orientation-better});
\item in the case where ordering and weighting are instance independent
(which is the case of problems where the values are indistinguishable
like graph coloring for example) and in the case of sets of instances
closed under value renaming (which is the case of almost all sets
of instances considered in the literature), we show that weighting
and ordering are equivalent on average (theorem \ref{thm:set-equality}).
\end{itemize}

\section{Framework}

A \emph{CSP} (Constraint Satisfaction Problem) is a triple $F=\left\langle X,D,C\right\rangle $
where $X$ is a set of \emph{variables} taking their values in the
same finite \emph{domain} $D$ of values, and $C$ is a set of \emph{constraints}.
A \emph{constraint} is a couple $\left\langle \boldsymbol{x},R\right\rangle $
where $\boldsymbol{x}\in X^{k}$ and $R\subseteq D^{k}$ for some
integer $k$. $R$ is interpreted as the tuples of allowed values.
A \emph{valuation} is a vector $v\in D^{X}$; access to coordinate
$x\in X$ of $v$ will be denoted as $v\left(x\right)$. It satisfies
some constraint $\left\langle \left(x_{1},x_{2},...,x_{k}\right),R\right\rangle $
iff $\left(v\left(x_{1}\right),v\left(x_{2}\right),...,v\left(x_{k}\right)\right)\in R$.
A valuation is said to be a \emph{solution} of a CSP instance iff
it satisfies all of its constraints.

We consider some sets $\mathcal{F}$ of CSP instances sharing the
same set $X$ of variables and the same domain $D$.  In the rest
of the paper $n=\left|X\right|$ denotes the number of variables,
$d=\left|D\right|$ the size of the domain. Given a CSP instance $F$,
let $\mathcal{S}\left(F\right)$ denotes the set of its solutions.

We are interested in the neighborhood of valuations. Given a valuation
$v$ and $a\in D$, we define $v_{x\gets a}$ as the valuation obtained
from $v$ by changing the value of $x$ to $a$ (including the case
when already $a=v\left(x\right)$). Given a variable $x$, two solutions
are called \emph{$x$-adjacent} if they agree on all variables but
$x$: in other words $\sigma$ and $\tau$ are $x$-adjacent iff $\tau=\sigma_{x\gets\tau\left(x\right)}$.
Note that for each variable $x$, $x$-adjacency is an equivalence
relation on solutions. Bringing together the $x$-adjacency relations
with respect to every variable and removing the loops $\left(\sigma,\sigma\right)$
we get an non-oriented graph on $\mathcal{S}\left(F\right)$ that
we call \emph{solutions network}. Let $N_{F}\left(\sigma,x\right)$
denote the equivalence class of $\sigma$ under $x$-adjacency (i.e.
the neighborhood of $\sigma$ for variable $x$); note that $N_{F}\left(\sigma,x\right)$
is a clique for $x$-adjacency. Such a clique will play a central
role in our weighting system. We are also interested in the different
values that $x$ takes in this equivalence class, so we define $A_{F}\left(\sigma,x\right)=\left\{ \tau\left(x\right)\right\} _{\tau\in N_{F}\left(\sigma,x\right)}$.
For example in figure \ref{fig:basic-examples-orientations}, solutions
$ab$ and $aa$ are $y$-adjacent, $N_{F}\left(ab,y\right)=\left\{ ab,aa\right\} $
and $A_{F}\left(ab,y\right)=\left\{ b,a\right\} $.

Most of the results in this paper apply to any set of solutions, regardless
of which CSP instance has generated them. The sole solutions network
can be thought of as the input of the problem. However it should be
borne in mind that weightings and orderings cannot be defined using
the global knowledge of the whole set of solutions, because of the
locality condition: one can only count instances having a given solution,
and for each instance the solutions that are neighbors of this solution
(rather than all solutions of a given instance). A convenient way
to visualize this limitation is to imagine a network of processors
(a processor representing a solution) where each processor has knowledge
of its neighbors only and must compute from this knowledge its own
weight or determine the orientation with respect to its neighbors.

\section{Weighting of Solutions}

First we define a weighting system for all valuations (solutions or
not) which sums up to $1$. Then we give sufficient conditions on
a weighting system on solutions only, such that a transfer between
this weighting system and the previous one may be possible. Doing
this we establish a general framework for putting weights onto solutions,
and use it to derive two particular weighting systems: the first one
addresses general CSPs and the second one is built to improve on the
weighting system introduced by \citet{Maneva2007,Maneva2008,Ardila2009a}.
The purpose of such a transfer is to estimate the global weight in
the weighting system on solutions by means of the global weight of
the weighting system on all valuations (which is easier to compute).

\subsection{Weighting Seeds}
\begin{defn}
\label{def:U}For a CSP $F=\left\langle X,D,C\right\rangle $ a \emph{weighting
seed} is a function $s_{F}\,:\, X\times D\rightarrow\boldsymbol{R}^{+}$.
We say that $s_{F}$ is \emph{unitary }iff $\forall x\in X,\sum_{a\in D}s_{F}\left(x,a\right)=1$.

Now we define the \emph{unladen weight }of any valuation $v$ (solution
or not) with respect to some weighting seed $s_{F}$ as : \begin{equation}
U_{F}\left(v\right)=\prod_{x\in X}s_{F}\left(x,v\left(x\right)\right)\enskip.\end{equation}

As for the \emph{actual weight} of a solution, we want to take into
account the neighborhood of the solution, so we put the weight $w_{F}\left(\sigma,x\right)$
on each variable $x$ of solution $\sigma$. We will see later how
to build $w_{F}$ from $s_{F}$.  

The actual weight of a solution is: \begin{equation}
W_{F}\left(\sigma\right)=\prod_{x\in X}w_{F}\left(\sigma,x\right)\enskip.\end{equation}

By extension, the weight of a set $S$ of solutions is:\begin{equation}
W_{F}\left(S\right)=\sum_{\sigma\in S}W_{F}\left(\sigma\right)\enskip.\end{equation}
\end{defn}
\begin{lem}
\label{lem:unitary}If the weighting seed $s_{F}$ is unitary, then
the total unladen weight of all valuations is $1$: $\sum_{v\in D^{X}}U_{F}\left(v\right)=1$.\end{lem}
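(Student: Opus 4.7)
The plan is to use the generalized distributive law to factor the sum over all valuations $v \in D^X$ into a product of independent sums, one for each variable $x \in X$. Concretely, I would first rewrite the quantity of interest using the definition of $U_F$:
$$\sum_{v \in D^X} U_F(v) = \sum_{v \in D^X} \prod_{x \in X} s_F(x, v(x)).$$

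Next, I would observe that a valuation $v \in D^X$ is nothing other than a choice of a value in $D$ independently for each variable $x \in X$, so summing over all valuations is the same as summing independently over the value assigned to each variable. This lets me swap sum and product:
$$\sum_{v \in D^X} \prod_{x \in X} s_F(x, v(x)) = \prod_{x \in X} \sum_{a \in D} s_F(x, a).$$
This step is the only substantive point; it is a routine application of the distributive law for finite sums, which can be verified by induction on $n = |X|$ if one wants to be pedantic.

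Finally, I would invoke the hypothesis that $s_F$ is unitary, meaning $\sum_{a \in D} s_F(x, a) = 1$ for every $x \in X$, so each factor in the product equals $1$ and the whole product equals $1$. There is no real obstacle here; the lemma is essentially the observation that unladen weights form a product probability distribution on $D^X$ when the seed is unitary, which is the natural reason this notion of ``unitary'' was chosen in Definition~\ref{def:U}.
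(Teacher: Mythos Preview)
Your proposal is correct and follows exactly the same approach as the paper: expand $U_F$ by its definition, interchange the sum over valuations with the product over variables via the distributive law, and then use unitarity to conclude each factor is $1$. The paper's proof is just the four-line display of these same steps, without the additional commentary you provide.
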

\begin{proof}
\begin{eqnarray*}
\sum_{v\in D^{X}}U_{F}\left(v\right) & = & \sum_{v\in D^{X}}\prod_{x\in X}s_{F}\left(x,v\left(x\right)\right)\\
 & = & \prod_{x\in X}\sum_{a\in D}s_{F}\left(x,a\right)\\
 & = & \prod_{x\in X}1\\
 & = & 1\enskip.\end{eqnarray*}

\end{proof}
This weight $U_{F}$ is indeed simple to handle. The purpose is now
to connect it with $W_{F}$. Just as we defined weights $W_{F}$ of
solutions in a product form variable per variable, so shall we build
our transfer system.

\subsection{Decomposers}
\begin{defn}
\label{def:T} We say that $w_{F}$ is \emph{decomposable} by a family
$\left(\delta_{F,\sigma,x,a}\right)$ iff for all solution $\sigma$
of $F$ and all variable $x$, $w_{F}\left(\sigma,x\right)=\sum_{a\in D}\delta_{F,\sigma,x,a}$.
Such a family will be referred to as a \emph{decomposer}. We define
onto it the following transfer quantities between a solution $\sigma$
and a valuation $v$: \begin{eqnarray}
T_{F,\sigma\to v} & = & \prod_{x\in X}\delta_{F,\sigma,x,v\left(x\right)}\enskip.\end{eqnarray}
\end{defn}
\begin{lem}
\label{lem:Transfer}(Transfer lemma). Let $F$ be a CSP instance
and $\sigma$ any of its solutions. If $w_{F}$ is decomposable by
family $\left(\delta_{F,\sigma,x,a}\right)$, then\begin{eqnarray}
W_{F}\left(\sigma\right) & = & \sum_{v\in D^{X}}T_{F,\sigma\to v}\enskip.\end{eqnarray}
\end{lem}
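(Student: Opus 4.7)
The plan is to derive this directly by distributivity: the identity is essentially the expansion of a product of sums of $d$ terms into a sum of $d^{|X|}$ products, indexed by functions $X \to D$, i.e.\ by valuations $v \in D^X$.

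First I would start from the definition $W_F(\sigma) = \prod_{x \in X} w_F(\sigma, x)$ and substitute the decomposition hypothesis $w_F(\sigma, x) = \sum_{a \in D} \delta_{F,\sigma,x,a}$ into each factor. This gives
\[
W_F(\sigma) \;=\; \prod_{x \in X} \sum_{a \in D} \delta_{F,\sigma,x,a}.
\]
Next I would expand this product of sums via iterated distributivity. Each term in the expansion is obtained by picking, independently for each variable $x \in X$, one value $a_x \in D$ and taking the product of the corresponding $\delta_{F,\sigma,x,a_x}$. Such a choice function $x \mapsto a_x$ is exactly a valuation $v \in D^X$, with $v(x) = a_x$. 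Hence
\[
\prod_{x \in X} \sum_{a \in D} \delta_{F,\sigma,x,a} \;=\; \sum_{v \in D^X} \prod_{x \in X} \delta_{F,\sigma,x,v(x)} \;=\; \sum_{v \in D^X} T_{F,\sigma \to v},
\]
where the last equality is just the definition of $T_{F,\sigma \to v}$ from Definition \ref{def:T}.

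There is essentially no obstacle here: the lemma is a purely formal identity, and the only thing to be careful about is that the variable set $X$ is finite (which is implicit in the CSP framework, since $n = |X|$), so that the finite distributive law applies and the product/sum swap is justified termwise. One could, if desired, write the distributivity step as a short induction on $|X|$, but I would just invoke the finite generalized distributive law directly.
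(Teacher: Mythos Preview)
Your proof is correct and follows essentially the same approach as the paper: start from $W_F(\sigma)=\prod_{x\in X} w_F(\sigma,x)$, substitute the decomposability hypothesis, expand the product of sums by distributivity indexed by valuations $v\in D^X$, and identify each term with $T_{F,\sigma\to v}$.
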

\begin{proof}
It is sufficient to expand the weight of a solution as follows: \begin{eqnarray*}
W_{F}\left(\sigma\right) & = & \prod_{x\in X}w_{F}\left(\sigma,x\right)\\
 & = & \prod_{x\in X}\sum_{a\in D}\delta_{F,\sigma,x,a}\\
 & = & \sum_{v\in D^{X}}\prod_{x\in X}\delta_{F,\sigma,x,v\left(x\right)}\\
 & = & \sum_{v\in D^{X}}T_{F,\sigma\rightarrow v}\enskip.\end{eqnarray*}

\end{proof}
We want to insure that transfers made towards a valuation are at least
its unladen weight, hence we define the following property of covering.
\begin{defn}
Let $S$ be a subset of $\mathcal{S}\left(F\right)$; we say that
$\left(T_{F},S\right)$ \emph{covers }$U_{F}$ iff $\forall v\in D^{x},\sum_{\sigma\in S}T_{F,\sigma\to v}\geq U_{F}\left(v\right)$.
\end{defn}
We can now state some general conditions that are sufficient for a
weighting scheme to be correct.

\subsection{Weight Conservation Theorem}
\begin{thm}
\label{Weight-Conservation-Theorem}(Weight Conservation Theorem).
If the following assumptions hold:
\begin{enumerate}
\item the weighting seed $s_{F}$ is unitary,
\item the actual weight $w_{F}$ is decomposable by family $\left(\delta_{F,\sigma,x,a}\right)$,
\item $\left(T_{F},S\right)$ covers $U_{F}$,
\end{enumerate}
then $W_{F}\left(S\right)\geq1$.\end{thm}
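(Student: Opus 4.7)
The plan is to chain the three hypotheses together into a short double-summation argument, with essentially no work beyond swapping the order of summation. The setup in the preceding definitions has been arranged precisely so that the conclusion drops out: unitarity controls the total unladen weight, decomposability rewrites each solution weight as a sum of transfers, and covering bounds transfers received by each valuation from below by the unladen weight.

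Concretely, I would first expand $W_F(S) = \sum_{\sigma \in S} W_F(\sigma)$ and apply the Transfer Lemma (which is available because $w_F$ is decomposable by $(\delta_{F,\sigma,x,a})$, hypothesis~2) to rewrite each $W_F(\sigma)$ as $\sum_{v \in D^X} T_{F,\sigma \to v}$. This yields a double sum over $(\sigma,v) \in S \times D^X$. Since both index sets are finite and all quantities involved are non-negative reals, I can interchange the order of summation and regroup as $\sum_{v \in D^X} \sum_{\sigma \in S} T_{F,\sigma \to v}$.

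Next, I would apply the covering hypothesis~3, which says $\sum_{\sigma \in S} T_{F,\sigma \to v} \geq U_F(v)$ for every valuation $v \in D^X$. Summing this inequality over all $v$ gives $\sum_{v \in D^X} \sum_{\sigma \in S} T_{F,\sigma \to v} \geq \sum_{v \in D^X} U_F(v)$. Finally, hypothesis~1 (unitary weighting seed) together with Lemma~\ref{lem:unitary} identifies the right-hand side as exactly $1$, completing the chain $W_F(S) \geq 1$.

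There is no real obstacle here; the whole content of the statement is that the three definitions fit together cleanly. The only thing to be careful about is that everything in sight is non-negative (so that the term-by-term inequality from covering can be summed without sign issues, and so that the interchange of finite sums is unambiguous), which is guaranteed by $s_F$ taking values in $\mathbf{R}^+$ and by the product structure of $U_F$, $W_F$, and $T_{F,\sigma \to v}$.
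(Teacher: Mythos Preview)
Your proposal is correct and follows exactly the same route as the paper: expand $W_F(S)$ via the Transfer Lemma, swap the finite double sum, apply covering termwise, and finish with Lemma~\ref{lem:unitary}. The only addition you make is the explicit remark on non-negativity, which is harmless and indeed justifies the termwise inequality.
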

\begin{proof}
Since $w_{F}$ is decomposable by family $\left(\delta_{F,\sigma,x,a}\right)$,
lemma \ref{lem:Transfer} asserts that $\forall\sigma\in S,W_{F}\left(\sigma\right)=\sum_{v\in D^{X}}T_{F,\sigma\to v}$.
Thus \begin{eqnarray*}
W_{F}\left(S\right) & = & \sum_{\sigma\in S}W_{F}\left(\sigma\right)\\
 & = & \sum_{\sigma\in S}\sum_{v\in D^{X}}T_{F,\sigma\to v}\mbox{ by lemma \ref{lem:Transfer}}\\
 & = & \sum_{v\in D^{X}}\sum_{\sigma\in S}T_{F,\sigma\to v}\\
 & \geq & \sum_{v\in D^{X}}U_{F}\left(v\right)\mbox{ since \ensuremath{\left(T_{F},S\right)}\,\ covers \ensuremath{U_{F}}}\enskip.\end{eqnarray*}

Moreover by lemma \ref{lem:unitary}, since $s_{F}$ is unitary, $\sum_{v\in D^{X}}U_{F}\left(v\right)=1$.
\end{proof}
Thus we have exhibited three sufficient conditions to get a weight
conservation theorem. These conditions might not be necessary; however
not any weighting system $w_{F}$ will be correct, as shown in example
on figure \ref{fig:bad}. So let us introduce a way to build $w_{F}$
from $s_{F}$ in a way that is intended to match the conditions of
our Weight Conservation Theorem.

\subsection{Generators}

All weights we put onto solutions (either in section \ref{sub:Seeds-and-dispatchers}
or in section \ref{sec:Better-than-Maneva}) are built from a \emph{weight
generator}, as follows.
\begin{defn}
\label{def:generators}A \emph{generator} is a function $\omega_{F}\,:\, X\times D\times\mathcal{P}\left(D\right)\to\boldsymbol{R}^{+}$.
We say that $\omega_{F}$ is \emph{unitary }iff for all variable $x$
and all nonempty subset $\Delta$ of $D$, $\sum_{a\in\Delta}\omega_{F}\left(x,a,\Delta\right)=1$.

From the weight generator $\omega_{F}$ we now define the \emph{actual
weight} $w_{F}$ of a variable in a solution:\begin{eqnarray}
w_{F}\left(\sigma,x\right) & = & \omega_{F}\left(x,\sigma\left(x\right),A_{F}\left(\sigma,x\right)\right)\enskip.\end{eqnarray}
\end{defn}
\begin{rem*}
If $\sigma$ and $\tau$ are 2 solutions such that $\sigma\left(x\right)=\tau\left(x\right)$
and $A_{F}\left(\sigma,x\right)=A_{F}\left(\tau,x\right)$, then $w_{F}\left(\sigma,x\right)=w_{F}\left(\tau,x\right)$.
This is what \citet{Boufkhad2010} call a \emph{uniform} weighting.
\end{rem*}
This may suggest that it could be sufficient to put any weights such
that the sum of weights on any clique would be $1$; but it is not
the case (cf. example on figure \ref{fig:bad}).

\subsection{\label{sub:Seeds-and-dispatchers}Dispatchers}
\begin{defn}
\label{def:dispatchers}A \emph{dispatcher} is a function $d_{F}\,:\, X\times D\to\boldsymbol{R}_{*}^{+}$.

Using the weighting seed $s_{F}$ and the dispatcher $d_{F}$ we now
build the weight generator $\omega_{F}$ of variables in a solution.
Each variable will keep its seed $s_{F}$; moreover the weights of
forbidden values will be dispatched to allowed values thanks to $d_{F}$,
in the following way: \begin{eqnarray}
\omega_{F}\left(x,a,\Delta\right) & = & \begin{cases}
s_{F}\left(x,a\right)+\frac{d_{F}\left(x,a\right)}{\sum_{b\in\Delta}d_{F}\left(x,b\right)}\sum_{b\in D\backslash\Delta}s_{F}\left(x,b\right) & \mbox{if \ensuremath{a\in\Delta}};\\
0 & \mbox{otherwise}\enskip.\end{cases}\label{eq:omegaF}\end{eqnarray}

$\Delta$ represents a category of set of allowed values; so the dispatcher
$d_{F}$ dispatches the total weighting seed of forbidden values among
allowed values.\end{defn}
\begin{fact*}
If $s_{F}$ is unitary, so is $\omega_{F}$.\end{fact*}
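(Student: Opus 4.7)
The plan is to unfold the definition of $\omega_F$ directly and to exploit the unitarity of $s_F$ after reorganizing the sum; no clever combinatorial step should be needed. Fix a variable $x \in X$ and a nonempty subset $\Delta \subseteq D$. I want to compute $\sum_{a \in \Delta}\omega_F(x,a,\Delta)$ and show it equals $1$ (values outside $\Delta$ contribute $0$ by the second branch of (\ref{eq:omegaF}), so they can be ignored).

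Next, I would split the sum into two pieces according to the two additive terms of the first branch of (\ref{eq:omegaF}): the seed piece $\sum_{a\in\Delta}s_F(x,a)$, and the dispatched piece
\[
\sum_{a\in\Delta}\frac{d_F(x,a)}{\sum_{b\in\Delta}d_F(x,b)}\sum_{c\in D\setminus\Delta}s_F(x,c).
\]
In the dispatched piece the factor $\sum_{c\in D\setminus\Delta}s_F(x,c)$ does not depend on $a$ and factors out of the outer sum; what remains is $\sum_{a\in\Delta}d_F(x,a)/\sum_{b\in\Delta}d_F(x,b) = 1$. Note that the denominator is strictly positive because $d_F$ takes values in $\boldsymbol{R}_*^+$ and $\Delta\neq\emptyset$, so the expression is well defined. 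Hence the dispatched piece collapses to $\sum_{c\in D\setminus\Delta}s_F(x,c)$.

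Adding the two pieces gives $\sum_{a\in\Delta}s_F(x,a)+\sum_{c\in D\setminus\Delta}s_F(x,c) = \sum_{a\in D}s_F(x,a)$, which equals $1$ by the assumed unitarity of $s_F$. Since $x$ and $\Delta$ were arbitrary, $\omega_F$ is unitary.

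There is really no obstacle here: the statement is a direct algebraic verification and the dispatched contribution was designed precisely so that the seed mass removed from $D\setminus\Delta$ is put back onto $\Delta$. The only thing to be mindful of is that $\Delta$ must be nonempty for the denominator to make sense, which is exactly the assumption in Definition~\ref{def:generators}.
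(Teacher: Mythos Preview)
Your argument is correct; it is precisely the one-line algebraic check the paper has in mind, which is why the authors state this as a bare Fact without proof. There is nothing to add or compare.
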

\begin{defn}
We say that the weighting system is \emph{homogeneous} when $d_{F}=s_{F}$.
In this noticeable case the same function is used to assign a weighting
seed and to dispatch remaining weights among neighbors.
\end{defn}

\paragraph*{Examples of Weightings}

As one can see in figure \ref{fig:bad}, even if we put a total weight
of $1$ on each clique,  the overall weight can be less than $1$.
To prevent such bad configurations we let our weights take the form
of seeds+dispatchers (figures \ref{fig:homogeneous} and \ref{fig:heterogeneous}).
The purpose of building weights from seeds and dispatchers is to prevent
the same kind of inversions that we encountered for orientations (which
led to circuits): in figure \ref{fig:bad}, in clique $\left\{ a,b\right\} $
for variable $x$, $a$ is given a much smaller weight than $b$,
whereas in clique $\left\{ a,b,c\right\} $ for the same variable
$x$, the opposite occurs. In fact dispatchers allow some reshuffling
of weights between different cliques of the same variable (see the
weights given to values $a$ and $b$ in cliques $\left\{ a,b\right\} $
and $\left\{ a,b,c\right\} $ for variable $x$ on figure \ref{fig:heterogeneous}),
but the fact that seeds and dispatchers are assigned to each individual
couple (variable, value) enables a kind of consistency between a clique
and its sub-cliques, preventing circuit-like structures from appearing.
\begin{itemize}
\item figure \ref{fig:homogeneous} was obtained by the following choice
of $s_{F}$ and $d_{F}$ (homogeneous case, so $d_{F}=s_{F}$):\\
\begin{tabular}{|c|c|c|c|}
\hline 
$s_{F}$ & $a$ & $b$ & $c$\tabularnewline
\hline
\hline 
$x$ & $0.1$ & $0.2$ & $0.7$\tabularnewline
\hline 
$y$ & $0.4$ & $0.3$ & $0.3$\tabularnewline
\hline
\end{tabular}\qquad{}\begin{tabular}{|c|c|c|c|}
\hline 
$d_{F}$ & $a$ & $b$ & $c$\tabularnewline
\hline
\hline 
$x$ & $0.1$ & $0.2$ & $0.7$\tabularnewline
\hline 
$y$ & $0.4$ & $0.3$ & $0.3$\tabularnewline
\hline
\end{tabular}~;
\item figure \ref{fig:heterogeneous} was obtained by the following choice
of $s_{F}$ and $d_{F}$:\\
\begin{tabular}{|c|c|c|c|}
\hline 
$s_{F}$ & $a$ & $b$ & $c$\tabularnewline
\hline
\hline 
$x$ & $0.1$ & $0.2$ & $0.7$\tabularnewline
\hline 
$y$ & $0.4$ & $0.3$ & $0.3$\tabularnewline
\hline
\end{tabular}\qquad{}\begin{tabular}{|c|c|c|c|}
\hline 
$d_{F}$ & $a$ & $b$ & $c$\tabularnewline
\hline
\hline 
$x$ & $0.6$ & $0.3$ & $0.1$\tabularnewline
\hline 
$y$ & $0.2$ & $0.5$ & $0.3$\tabularnewline
\hline
\end{tabular}~.
\end{itemize}
\begin{figure}
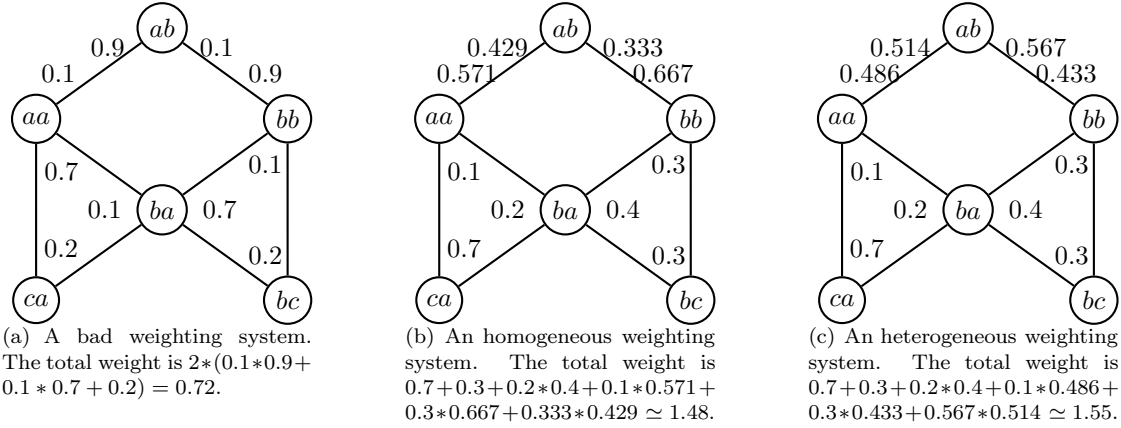

\centering
\subfigure[A bad weighting system. The total weight is $2*(0.1*0.9+0.1*0.7+0.2)=0.72$.]{\label{fig:bad}
$ \psmatrix[colsep=1cm,rowsep=0.5cm,mnode=circle] 
&ab\\  
aa&&bb\\
&ba\\
ca&&bc
\nput[labelsep=0.3]{67.5}{4,1}{0.2}
\nput[labelsep=0.3]{-67.5}{2,1}{0.7} \nput[labelsep=0.2]{67.5}{2,1}{0.1}
\nput[labelsep=0.2]{-157.5}{1,2}{0.9} \nput[labelsep=0.2]{-22.5}{1,2}{0.1}
\nput[labelsep=0.2]{180}{3,2}{0.1} \nput[labelsep=0.2]{0}{3,2}{0.7}
\nput[labelsep=0.2]{112.5}{2,3}{0.9} \nput[labelsep=0.2]{-112.5}{2,3}{0.1}
\nput[labelsep=0.2]{112.5}{4,3}{0.2}
\ncline{-}{2,1}{1,2}
\ncline{-}{1,2}{2,3} 
\ncline{-}{3,2}{2,3}
\ncline{-}{2,1}{3,2}
\ncline{-}{2,1}{4,1}
\ncline{-}{4,1}{3,2}
\ncline{-}{2,3}{4,3}
\ncline{-}{3,2}{4,3}
\endpsmatrix  $}
\hspace{1cm}
\subfigure[An homogeneous weighting system. The total weight is $0.7+0.3+0.2*0.4+0.1*0.571+0.3*0.667+0.333*0.429 \simeq 1.48$.]{\label{fig:homogeneous}
$ \psmatrix[colsep=1cm,rowsep=0.5cm,mnode=circle] 
&ab\\  
aa&&bb\\
&ba\\
ca&&bc
\nput[labelsep=0.3]{67.5}{4,1}{0.7}
\nput[labelsep=0.3]{-67.5}{2,1}{0.1} \nput[labelsep=0.2]{67.5}{2,1}{0.571}
\nput[labelsep=0.2]{-157.5}{1,2}{0.429} \nput[labelsep=0.2]{-22.5}{1,2}{0.333}
\nput[labelsep=0.2]{180}{3,2}{0.2} \nput[labelsep=0.2]{0}{3,2}{0.4}
\nput[labelsep=0.2]{112.5}{2,3}{0.667} \nput[labelsep=0.2]{-112.5}{2,3}{0.3}
\nput[labelsep=0.2]{112.5}{4,3}{0.3}
\ncline{-}{2,1}{1,2}
\ncline{-}{1,2}{2,3} 
\ncline{-}{3,2}{2,3}
\ncline{-}{2,1}{3,2}
\ncline{-}{2,1}{4,1}
\ncline{-}{4,1}{3,2}
\ncline{-}{2,3}{4,3}
\ncline{-}{3,2}{4,3}
\endpsmatrix  $}
\hspace{1cm}
\subfigure[An heterogeneous weighting system. The total weight is $0.7+0.3+0.2*0.4+0.1*0.486+0.3*0.433+0.567*0.514 \simeq 1.55$.]{\label{fig:heterogeneous}
$ \psmatrix[colsep=1cm,rowsep=0.5cm,mnode=circle] 
&ab\\  
aa&&bb\\
&ba\\
ca&&bc
\nput[labelsep=0.3]{67.5}{4,1}{0.7}
\nput[labelsep=0.3]{-67.5}{2,1}{0.1} \nput[labelsep=0.2]{67.5}{2,1}{0.486}
\nput[labelsep=0.2]{-157.5}{1,2}{0.514} \nput[labelsep=0.2]{-22.5}{1,2}{0.567}
\nput[labelsep=0.2]{180}{3,2}{0.2} \nput[labelsep=0.2]{0}{3,2}{0.4}
\nput[labelsep=0.2]{112.5}{2,3}{0.433} \nput[labelsep=0.2]{-112.5}{2,3}{0.3}
\nput[labelsep=0.2]{112.5}{4,3}{0.3}
\ncline{-}{2,1}{1,2}
\ncline{-}{1,2}{2,3} 
\ncline{-}{3,2}{2,3}
\ncline{-}{2,1}{3,2}
\ncline{-}{2,1}{4,1}
\ncline{-}{4,1}{3,2}
\ncline{-}{2,3}{4,3}
\ncline{-}{3,2}{4,3}
\endpsmatrix  $}
\caption{Some basic examples of weights. Notations are the same as in figure \ref{fig:basic-examples-orientations}.}
\label{fig:basic-examples-weights}
\end{figure}

We come back to our weighting system $w_{F}$ built from $s_{F}$
and $d_{F}$ and show that it may be used to estimate satisfiability
if $s_{F}$ is unitary. So our first result concerning this weighting
system states that this system is correct for the estimation of satisfiability
(theorem \ref{thm:dispatchers-are-good} below). To prove it, we use
our Weight Conservation Theorem, using the following decomposers:

\begin{eqnarray}
\delta_{F,\sigma,x,a} & = & \begin{cases}
s_{F}\left(x,a\right) & \mbox{if }\sigma\left(x\right)=a;\\
\frac{d_{F}\left(x,\sigma\left(x\right)\right)}{\sum_{a\in A_{F}\left(\sigma,x\right)}d_{F}\left(x,a\right)}s_{F}\left(x,a\right) & \mbox{if \ensuremath{a\notin A_{F}\left(\sigma,x\right)};}\\
0 & \mbox{otherwise\enskip.}\end{cases}\label{eq:delta-dispatchers}\end{eqnarray}

We must now prove that the conditions of our Weight Conservation Theorem
are satisfied: $w_{F}$ is decomposable family $\left(\delta_{F,\sigma,x,a}\right)$
and $\left(T_{F},g\right)$ covers $U_{F}$.
\begin{lem}
\label{lem:wF-is-decomposable}$w_{F}$ is decomposable by family
$\left(\delta_{F,\sigma,x,a}\right)$.\end{lem}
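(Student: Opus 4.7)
The plan is a direct verification, unfolding the definition of $w_F(\sigma,x)$ and of $\delta_{F,\sigma,x,a}$ and matching terms.

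First I would rewrite $w_F(\sigma,x)$ using its definition via the generator $\omega_F$: since $\sigma(x)\in A_F(\sigma,x)$, equation \eqref{eq:omegaF} gives
\[
w_F(\sigma,x)=s_F\bigl(x,\sigma(x)\bigr)+\frac{d_F\bigl(x,\sigma(x)\bigr)}{\sum_{b\in A_F(\sigma,x)}d_F(x,b)}\sum_{b\in D\setminus A_F(\sigma,x)}s_F(x,b).
\]
Then I would partition the sum $\sum_{a\in D}\delta_{F,\sigma,x,a}$ according to the three cases appearing in \eqref{eq:delta-dispatchers}. Since $\sigma(x)\in A_F(\sigma,x)$, these cases form a disjoint partition of $D$: the singleton $\{\sigma(x)\}$, the set $D\setminus A_F(\sigma,x)$, and the remainder $A_F(\sigma,x)\setminus\{\sigma(x)\}$ on which $\delta$ vanishes.

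Summing the nonzero contributions gives
\[
\sum_{a\in D}\delta_{F,\sigma,x,a}=s_F\bigl(x,\sigma(x)\bigr)+\sum_{a\in D\setminus A_F(\sigma,x)}\frac{d_F\bigl(x,\sigma(x)\bigr)}{\sum_{b\in A_F(\sigma,x)}d_F(x,b)}\,s_F(x,a).
\]
Factoring the constant multiplier out of the second sum reproduces exactly the expression for $w_F(\sigma,x)$ obtained above, establishing decomposability.

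There is no real obstacle here; the only point requiring a line of care is noting that $\sigma(x)\in A_F(\sigma,x)$ (from the definition of $N_F(\sigma,x)$, which is a clique for $x$-adjacency containing $\sigma$), so that the three cases in \eqref{eq:delta-dispatchers} indeed cover $D$ without overlap.
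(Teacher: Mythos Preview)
Your proof is correct and is essentially the same as the paper's: both unfold the definition of $\delta_{F,\sigma,x,a}$, split the sum over $D$ according to whether $a=\sigma(x)$, $a\notin A_F(\sigma,x)$, or $a\in A_F(\sigma,x)\setminus\{\sigma(x)\}$, and identify the result with $\omega_F\bigl(x,\sigma(x),A_F(\sigma,x)\bigr)$. The only cosmetic difference is that the paper writes the case split with indicator functions rather than an explicit partition of $D$.
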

\begin{proof}
By definitions: \begin{eqnarray*}
\sum_{a\in D}\delta_{F,\sigma,x,a} & = & \sum_{a\in D}\left(s_{F}\left(x,a\right)\boldsymbol{1}_{a=\sigma\left(x\right)}+\frac{d_{F}\left(x,\sigma\left(x\right)\right)}{\sum_{a\in A_{F}\left(\sigma,x\right)}d_{F}\left(x,a\right)}s_{F}\left(x,a\right)\boldsymbol{1}_{a\notin A_{F}\left(\sigma,x\right)}\right)\\
 & = & s_{F}\left(x,\sigma\left(x\right)\right)+\frac{d_{F}\left(x,\sigma\left(x\right)\right)}{\sum_{a\in A_{F}\left(\sigma,x\right)}d_{F}\left(x,a\right)}\sum_{a\notin A_{F}\left(\sigma,x\right)}s_{F}\left(x,a\right)\\
 & = & w_{F}\left(\sigma,x\right)\enskip.\end{eqnarray*}

\end{proof}
As the unladen weight of a valuation is scattered among lots of solutions,
in the proof of the following lemma we use an algorithm building a
tree in order to catch enough solutions to insure the covering condition.
The proof is somewhat technical and may be skipped at first reading.
\begin{lem}
\label{lem:(T_F,g)-covers-U_F}Let $g$ be any connected component
of the solutions network $\mathcal{S}\left(F\right)$. Then $\left(T_{F},g\right)$
covers $U_{F}$.\end{lem}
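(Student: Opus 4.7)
The plan is to fix an arbitrary valuation $v\in D^{X}$ and prove $\sum_{\sigma\in g}T_{F,\sigma\to v}\geq U_{F}(v)$. Dividing by $U_{F}(v)=\prod_{x}s_{F}(x,v(x))$ (any factor $s_{F}(x,v(x))=0$ forces $U_{F}(v)=0$ and makes the inequality trivial) and setting $\alpha_{\sigma}(x):=\delta_{F,\sigma,x,v(x)}/s_{F}(x,v(x))$ reduces the claim to
\[
\sum_{\sigma\in g}\prod_{x\in X}\alpha_{\sigma}(x)\;\geq\;1.
\]
If $v$ itself lies in $g$, then $\alpha_{v}(x)=1$ for every $x$ and the single term $\sigma=v$ already saturates the bound; so the interesting case is $v\notin g$.

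The cornerstone of the argument is the following clique identity: for every $\sigma\in g$ and every $x\in X$,
\[
\sum_{\tau\in N_{F}(\sigma,x)}\alpha_{\tau}(x)\;=\;1.
\]
I would prove this by splitting on whether $v(x)\in A_{F}(\sigma,x)$. In the affirmative case only $\tau=\sigma_{x\gets v(x)}$ contributes, with $\alpha_{\tau}(x)=1$, since every other neighbor satisfies $\tau(x)\neq v(x)$ together with $v(x)\in A_{F}(\tau,x)=A_{F}(\sigma,x)$ and so is killed by the third branch of~\eqref{eq:delta-dispatchers}. In the negative case each $\tau\in N_{F}(\sigma,x)$ contributes its dispatcher share $d_{F}(x,\tau(x))/\sum_{b\in A_{F}(\sigma,x)}d_{F}(x,b)$, and since $\tau(x)$ ranges exactly over $A_{F}(\sigma,x)$ these shares sum to one.

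With the clique identity in hand I would build a finite tree rooted at an arbitrary $\sigma_{0}\in g$ whose nodes are pairs $(\sigma,P)$ with $P\subseteq X$ the set of variables already processed. At the root $P=\emptyset$; at each internal node an unprocessed variable $x\in X\setminus P$ is selected and one child $(\tau,P\cup\{x\})$ is created per $\tau\in N_{F}(\sigma,x)$, the edge carrying weight $\alpha_{\tau}(x)$. Neighbors of a solution in $g$ are again in $g$, so every label is a solution in $g$; and the clique identity forces the children weights to sum to their parent weight, so by induction the total weight carried by the leaves (reached when $P=X$) is exactly one.

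The main obstacle, and the reason the proof is advertised as technical, is that each edge-weight $\alpha_{\tau}(x)$ is read off the \emph{current} neighborhood $A_{F}(\tau,x)$, whereas the target sum evaluates each $\alpha_{\sigma}(x)$ at the \emph{final} leaf solution $\sigma$; the neighborhood $A_{F}(\cdot,x)$ may change once later variables are reassigned, so the product of edge-weights along a root-to-leaf path need not coincide with $\prod_{x}\alpha_{\sigma}(x)$. The heart of the proof is therefore to choose the branching order carefully, and if necessary re-expand certain leaves via another application of the clique identity, so that when the leaves are regrouped by their terminal solution $\sigma\in g$ the total path-weight of each group is bounded above by $\prod_{x}\alpha_{\sigma}(x)$. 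Summing over groups and using that the total leaf weight is one then yields $\sum_{\sigma\in g}\prod_{x}\alpha_{\sigma}(x)\geq 1$, as desired.
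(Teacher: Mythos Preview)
Your normalization by $U_F(v)$ and the clique identity $\sum_{\tau\in N_F(\sigma,x)}\alpha_\tau(x)=1$ are correct, and the idea of propagating unit mass through a tree is in the spirit of the paper's argument. You also correctly identify the central difficulty: the edge weight $\alpha_\tau(x)$ recorded at branching time depends on $A_F(\tau,x)$, whereas the target product uses $A_F(\sigma,x)$ for the terminal leaf $\sigma$, and these sets can differ once later variables are reassigned.

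However, the proposal stops precisely where the real work begins. ``Choose the branching order carefully'' and ``re-expand certain leaves'' are not a proof; the paper supplies two concrete devices that your sketch lacks. First, the paper does not root its tree at an arbitrary solution: it begins by selecting a \emph{maximal} restriction $v_0$ of $v$ that is extensible to a solution in $g$, and then branches only on the variables left unset by $v_0$. Maximality of $v_0$ is exactly what guarantees $v(x_i)\notin A_F(\sigma,x_i)$ for every leaf $\sigma$ and every branched variable $x_i$. Without this, your tree can produce a leaf with positive path weight but $\alpha_\sigma(x)=0$: take $X=\{x,y\}$, $D=\{0,1,2\}$, solution set $\{00,01,11,21\}$, $v=22$, root $\sigma_0=00$, and process $x$ before $y$; the leaf $01$ carries positive path weight, yet $A_F(01,x)=\{0,1,2\}\ni v(x)$ while $01(x)\neq v(x)$, so $\alpha_{01}(x)=0$ and the per-leaf comparison fails. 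Second, even once one is in the regime $v(x_i)\notin A_F(\cdot,x_i)$, the denominator $\sum_{b\in A_F(\cdot,x_i)}d_F(x_i,b)$ can still \emph{increase} between the branching node and the leaf. The paper handles this by choosing, at each partial valuation $\eta$, a reference solution $\tau(\eta)$ among all extensions of $\eta$ in $g$ that \emph{maximizes} this dispatcher sum; that choice is what forces the fictitious edge weight to underestimate the true $\alpha_\sigma(x_i)$ at the leaf and yields $f(\sigma)\le T_{F,\sigma\to v}$.

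Both devices are essential to closing the gap you flag, and neither appears in your outline. Your ``re-expand'' idea could perhaps be developed into a substitute for the maximality step (it amounts to moving towards $v$ whenever possible), but you would still need an analogue of the maximizing choice of $\tau(\eta)$, and none is given.
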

\begin{proof}
First we need some definitions. A \emph{partial valuation $\eta$
over $Y\subseteq X$} is a function from $Y$ to the set $D$. The
\emph{domain }of $\eta$ is $\mathrm{Dom}\left(\eta\right)=Y$. The
\emph{level }(of undetermination) of $\eta$ is $\mathrm{Level}\left(\eta\right)=\left|X\backslash Y\right|$.
Let $Z\subseteq Y\subseteq X$, let $\iota$ be a partial valuation
over $Z$ and $\eta$ be a partial valuation over $Y$. Since $Z\subseteq Y$,
we say that $\iota\leq_{\mathrm{Dom}}\eta$. Of course $\leq_{\mathrm{Dom}}$
is a partial order relation. We say that $\eta$ is an \emph{extension}
of $\iota$ iff $\forall z\in Z,\eta\left(z\right)=\iota\left(z\right)$,
in which case we also say that $\iota$ is the restriction of $\eta$
to $Z$: $\iota=\eta_{|Z}$. In the particular case when $Y=Z\cup\left\{ x\right\} $
with $x\notin Z$, we denote by $\iota_{x\mapsto a}$, the extension
of $\iota$ to $Y$ assigning value $a$ to $x$. Let $g$ be a connected
component of the solutions network. Note that the empty valuation
$\epsilon$ (with domain $\emptyset$) is extensible to a solution
in $g$ as soon as $g\neq\emptyset$. Given a partial valuation $\eta$,
we call $E_{g}\left(\eta\right)$ the set of its extensions which
are elements of $g$ and $r_{g}\left(\eta\right)$ the set of restrictions
of $\eta$ extensible to a solution in $g$ (i.e. restrictions $r$
of $\eta$ such that $E_{g}\left(r\right)\neq\emptyset$).

Let us take any valuation $v$. We must prove that $\sum_{\sigma\in g}T_{F,\sigma\to v}\geq U_{F}\left(v\right)$.
Since $g\neq\emptyset$, $\epsilon\in r_{g}\left(v\right)$ so $r_{g}\left(v\right)\neq\emptyset$
and we can pick an element $v_{0}$ in $r_{g}\left(v\right)$ maximal
with respect to the order $\leq_{\mathrm{Dom}}$. We arbitrarily put
indices $1\dots n_{0}$ onto the remaining $n_{0}=\mathrm{Level}\left(v_{0}\right)$
variables: $x_{1},\dots,x_{n_{0}}$ (i.e. variables not set by $v_{0}$).
In the following algorithm we shall bind a \emph{fictitious weight
}$f\left(\eta\right)$ and a solution $\tau\left(\eta\right)$ to
a partial valuation $\eta$. At the beginning $f\left(v_{0}\right)=U_{F}\left(v\right)$,
and we make a call of $\mathtt{Extend}\left(v_{0}\right)$.

\begin{algorithm}
\caption{\label{alg:Extend}{Extensions of a partial valuation.}}

\begin{algorithmic}[1]

\Procedure{$\mathtt{Extend}$}{$\eta$}

\State$i\gets\mathtt{Level}\left(\eta\right)$

\If{$i=0$}

\State$S\gets S\cup\left\{ \eta\right\} $

\Else

\State\label{tau}$\tau\left(\eta\right)\gets$ a solution maximizing
$\sum_{b\in A_{F}\left(\sigma,x_{i}\right)}d_{F}\left(x_{i},b\right)$
among $\sigma\in E_{g}\left(\eta\right)$ 

\ForAll{$a\in A_{F}\left(\tau\left(\eta\right),x_{i}\right)$}

\State$f\left(\eta_{x_{i}\mapsto a}\right)\gets\frac{d_{F}\left(x_{i},a\right)}{\sum_{b\in A_{F}\left(\tau\left(\eta\right),x\right)}d_{F}\left(x_{i},b\right)}f\left(\eta\right)$

\State$\mathtt{Extend}\left(\eta_{x_{i}\mapsto a}\right)$

\EndFor

\EndIf

\EndProcedure

\end{algorithmic}
\end{algorithm}

Informally we are building a tree and propagating weights from the
root $v_{0}$ (at level $n_{0}$) to leafs which are solutions (at
level $0$) in a conservative way: the total fictitious weight on
level $i$ will be the same as that of level $i+1$.

Formally, what can we insure along this process?
\begin{enumerate}
\item The first thing to notice is that the algorithm stops; namely the
nested calls of $\mathtt{Extend}\left(\eta\right)$ decrement $\mathrm{Level}\left(\eta\right)$
till it reaches $0$.
\item \label{S-in-g}Secondly $S$ is indeed a set of solutions in $g$
extending $v_{0}$. Namely at each call of $\mathtt{Extend}\left(\eta\right)$,
$\eta$ is extensible to a solution in $g$ and the set of unset variables
of $\eta$ is $\left\{ x_{1},\dots,x_{i}\right\} $, where $i=\mathrm{Level}\left(\eta\right)$.
Thus when $i=0$, $\eta$ is a solution in $g$. We prove this by
induction:

\begin{enumerate}
\item at the beginning: $v_{0}\in r_{g}\left(v\right)$, $v_{0}$ trivially
extends itself, $E_{g}\left(v_{0}\right)\neq\emptyset$ and the set
of unset variables of $v_{0}$ is $\left\{ x_{1},\dots,x_{n_{0}}\right\} $;
\item now suppose that $E_{g}\left(\eta\right)\neq\emptyset$, $\eta$ extends
$v_{0}$ and the unset variables of $\eta$ are $\left\{ x_{1},\dots,x_{i}\right\} $;
given $\tau\left(\eta\right)\in E_{g}\left(\eta\right)$, let $a\in A_{F}\left(\tau\left(\eta\right),x_{i}\right)$;
then the valuation $\tau\left(\eta\right)_{x_{i}\gets a}$ is a solution
by definition of $A_{F}\left(\tau\left(\eta\right),x_{i}\right)$;
moreover it is connected to $\tau\left(\eta\right)$, thus $\tau\left(\eta\right)_{x_{i}\gets a}$
is an element of component $g$. Moreover since $\tau\left(\eta\right)$
is an extension of $\eta$ and $x_{i}$ is unset in $\eta$, $\tau\left(\eta\right)_{x_{i}\gets a}$
is an extension of $\eta_{x_{i}\mapsto a}$. Thus $\tau\left(\eta\right)_{x_{i}\gets a}\in E_{g}\left(\eta_{x_{i}\mapsto a}\right)$,
so $E_{g}\left(\eta_{x_{i}\mapsto a}\right)\neq\emptyset$. Of course,
$\eta_{x_{i}\mapsto a}$ extends $v_{0}$, the unset variables of
$\eta_{x_{i}\mapsto a}$ are $\left\{ x_{1},\dots,x_{i-1}\right\} $
and $\mathrm{Level}\left(\eta_{x_{i}\mapsto a}\right)=\mathrm{Level}\left(\eta\right)-1=i-1$.
\end{enumerate}
\item \label{S-yields-U}$\sum_{\sigma\in S}f\left(\sigma\right)=U_{F}\left(v\right)$;
namely among partial valuations considered in the process, $\eta\in S$
iff $\mathrm{Level}\left(\eta\right)=0$. Moreover we now prove by
induction that $\sum_{\mathrm{Level}\left(\eta\right)=i}f\left(\eta\right)=U_{F}\left(v\right)$:

\begin{enumerate}
\item at the beginning when $i=n_{0}$, the only partial valuation of level
$n_{0}$ is $v_{0}$ and $f\left(v_{0}\right)=U_{F}\left(v\right)$;
\item now suppose that $\sum_{\mathrm{Level}\left(\eta\right)=i}f\left(\eta\right)=U_{F}\left(v\right)$;
in our process each partial valuation $\eta$ of level $i-1$ has
one and only one parent in level $i$, which is given by the restriction
$\eta'$ of $\eta$ to $\mathrm{Dom}\left(v_{0}\right)\cup\left\{ x_{i+1},\dots,x_{n_{0}}\right\} $;
thus \begin{eqnarray*}
\sum_{\mathrm{Level}\left(\eta\right)=i-1}f\left(\eta\right) & = & \sum_{\mathrm{Level}\left(\eta'\right)=i}\sum_{a\in A_{F}\left(\tau\left(\eta'\right),x_{i}\right)}f\left(\eta'_{x_{i}\mapsto a}\right)\\
 & = & \sum_{\mathrm{Level}\left(\eta'\right)=i}\sum_{a\in A_{F}\left(\tau\left(\eta'\right),x_{i}\right)}\frac{d_{F}\left(x_{i},a\right)}{\sum_{b\in A_{F}\left(\tau\left(\eta'\right),x_{i}\right)}d_{F}\left(x_{i},b\right)}f\left(\eta'\right)\\
 & = & \sum_{\mathrm{Level}\left(\eta'\right)=i}f\left(\eta'\right)\\
 & = & U_{F}\left(v\right)\enskip.\end{eqnarray*}

\end{enumerate}
\item \label{equal}$\forall\sigma\in S,\forall i\in\left\{ 1,\dots,n_{0}\right\} ,v\left(x_{i}\right)\notin A_{F}\left(\sigma,x_{i}\right)$.
Suppose on the contrary that $\exists\sigma\in S,\exists i\in\left\{ 1,\dots,n_{0}\right\} ,v\left(x_{i}\right)\in A_{F}\left(\sigma,x_{i}\right)$;
the partial valuation $v_{0\, x_{i}\mapsto v\left(x_{i}\right)}$
is still a restriction of $v$; moreover, since by item \ref{S-in-g},
$\sigma$ is an extension of $v_{0}$, $\sigma_{x_{i}\gets v\left(x_{i}\right)}$
is an extension of $v_{0\, x_{i}\mapsto v\left(x_{i}\right)}$; and
since $v\left(x_{i}\right)\in A_{F}\left(\sigma,x_{i}\right)$, $\sigma_{x_{i}\gets v\left(x_{i}\right)}$
is a solution. Thus $v_{0\, x_{i}\mapsto v\left(x_{i}\right)}\in r_{g}\left(v\right)$
and $v_{0\, x_{i}\mapsto v\left(x_{i}\right)}>_{\mathrm{Dom}}v_{0}$,
contradicting the maximality of $v_{0}$ in $r_{g}\left(v\right)$.
\item \label{lower-Omega}$\forall\sigma\in S,f\left(\sigma\right)\leq T_{F,\sigma\to v}$;
namely, let us take any $\sigma\in S$:\begin{eqnarray*}
T_{F,\sigma\to v} & = & \prod_{x\in X}\delta_{F,\sigma,x,v\left(x\right)}\mbox{ by definition \ref{def:T}}\\
 & = & \prod_{x\in X}s_{F}\left(x,v\left(x\right)\right)\left(\boldsymbol{1}_{v\left(x\right)=\sigma\left(x\right)}+\frac{d_{F}\left(x,\sigma\left(x\right)\right)}{\sum_{a\in A_{F}\left(\sigma,x\right)}d_{F}\left(x,a\right)}\boldsymbol{1}_{v\left(x\right)\notin A_{F}\left(\sigma,x\right)}\right)\mbox{ by eq. \ref{eq:delta-dispatchers}}\\
 & = & \prod_{x\in\mathrm{Dom}\left(v_{0}\right)}s_{F}\left(x,v\left(x\right)\right)\prod_{i=1}^{n_{0}}\frac{d_{F}\left(x_{i},\sigma\left(x_{i}\right)\right)s_{F}\left(x_{i},v\left(x_{i}\right)\right)}{\sum_{a\in A_{F}\left(\sigma,x_{i}\right)}d_{F}\left(x_{i},a\right)}\mbox{ by item \ref{S-in-g} and \ref{equal}}\\
 & = & U_{F}\left(v\right)\prod_{i=1}^{n_{0}}\frac{d_{F}\left(x_{i},\sigma\left(x_{i}\right)\right)}{\sum_{a\in A_{F}\left(\sigma,x_{i}\right)}d_{F}\left(x_{i},a\right)}\mbox{ by definition \ref{def:U}}\enskip.\end{eqnarray*}
 Moreover note that \begin{eqnarray*}
f\left(\sigma\right) & = & f\left(\sigma_{|\mathrm{Dom}\left(v_{0}\right)}\right)\prod_{i=1}^{n_{0}}\frac{f\left(\sigma_{|\mathrm{Dom}\left(v_{0}\right)\cup\left\{ x_{i},\dots,x_{n_{0}}\right\} }\right)}{f\left(\sigma_{|\mathrm{Dom}\left(v_{0}\right)\cup\left\{ x_{i+1},\dots,x_{n_{0}}\right\} }\right)}\\
 & = & f\left(v_{0}\right)\prod_{i=1}^{n_{0}}\frac{f\left(\sigma_{|\mathrm{Dom}\left(v_{0}\right)\cup\left\{ x_{i+1},\dots,x_{n_{0}}\right\} ,x_{i}\mapsto\sigma\left(x_{i}\right)}\right)}{f\left(\sigma_{|\mathrm{Dom}\left(v_{0}\right)\cup\left\{ x_{i+1},\dots,x_{n_{0}}\right\} }\right)}\\
 & = & U_{F}\left(v\right)\prod_{i=1}^{n_{0}}\frac{d_{F}\left(x_{i},\sigma\left(x_{i}\right)\right)}{\sum_{a\in A_{F}\left(\tau\left(\sigma_{|\mathrm{Dom}\left(v_{0}\right)\cup\left\{ x_{i+1},\dots,x_{n_{0}}\right\} }\right),x_{i}\right)}d_{F}\left(x_{i},a\right)}\enskip.\end{eqnarray*}
Since of course, for all $i$ between $1$ and $n_{0}$, $\sigma\in E_{g}\left(\sigma_{|\mathrm{Dom}\left(v_{0}\right)\cup\left\{ x_{i+1},\dots,x_{n_{0}}\right\} }\right)$,
by choice of $\tau\left(\eta\right)$ in line \ref{tau} of algorithm
\ref{alg:Extend}, we have that $\sum_{a\in A_{F}\left(\tau\left(\sigma_{|\mathrm{Dom}\left(v_{0}\right)\cup\left\{ x_{i+1},\dots,x_{n_{0}}\right\} }\right),x_{i}\right)}d_{F}\left(x_{i},a\right)\geq\sum_{a\in A_{F}\left(\sigma,x_{i}\right)}d_{F}\left(x_{i},a\right)$,
whence $f\left(\sigma\right)\leq U_{F}\left(v\right)\prod_{i=1}^{n_{0}}\frac{d_{F}\left(x_{i},\sigma\left(x_{i}\right)\right)}{\sum_{a\in A_{F}\left(\sigma,x_{i}\right)}d_{F}\left(x_{i},a\right)}=T_{F,\sigma\to v}$.
\end{enumerate}
Thus we finally get that \begin{eqnarray*}
\sum_{\sigma\in g}T_{F,\sigma\to v} & \geq & \sum_{\sigma\in g}f\left(\sigma\right)\mbox{ by item \ref{lower-Omega}}\\
 & \geq & \sum_{\sigma\in S}f\left(\sigma\right)\mbox{ because \ensuremath{S\subseteq g}}\enskip.\end{eqnarray*}

Moreover, by item \ref{S-yields-U}, $\sum_{\sigma\in S}f\left(\sigma\right)=U_{F}\left(v\right)$;
thus $\left(T_{F},g\right)$ covers $U_{F}$.
\end{proof}
From lemmas \ref{lem:wF-is-decomposable} and \ref{lem:(T_F,g)-covers-U_F},
we conclude that our weighting system built from seeds and dispatchers
obeys the conditions of the Weight Conservation Theorem.
\begin{thm}
\label{thm:dispatchers-are-good}Let $F$ be a satisfiable CSP instance
and $g$ the solutions in a connected component of the solutions network
of $F$. Weights $w_{F}$ are built from seeds $s_{F}$ and dispatchers
$d_{F}$, as in definition \ref{def:dispatchers}. If the weighting
seed $s_{F}$ is unitary, then $W_{F}\left(g\right)\geq1$.\end{thm}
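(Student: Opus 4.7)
The plan is to deduce this theorem directly from the Weight Conservation Theorem (theorem \ref{Weight-Conservation-Theorem}) applied to the set $S = g$ and to the particular decomposer family $\left(\delta_{F,\sigma,x,a}\right)$ introduced in equation \ref{eq:delta-dispatchers}. Essentially all of the substantive work has already been isolated into lemmas \ref{lem:wF-is-decomposable} and \ref{lem:(T_F,g)-covers-U_F}, so the role of this final proof is really just to check that the three hypotheses of the Weight Conservation Theorem line up with what is given or already proved.

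First I would recall the three hypotheses required: unitarity of $s_F$, decomposability of $w_F$ by a decomposer family, and the covering condition $\forall v\in D^{X},\ \sum_{\sigma\in S}T_{F,\sigma\to v}\geq U_{F}\left(v\right)$ for the chosen $S$. The first of these is the explicit assumption of the theorem. The second is exactly the content of lemma \ref{lem:wF-is-decomposable}: the decomposer given by equation \ref{eq:delta-dispatchers} satisfies $\sum_{a\in D}\delta_{F,\sigma,x,a}=w_{F}\left(\sigma,x\right)$ for every solution $\sigma$ and every variable $x$. The third is exactly lemma \ref{lem:(T_F,g)-covers-U_F}, which asserts that for any connected component $g$ of the solutions network, $\left(T_{F},g\right)$ covers $U_{F}$; this is where the assumption that $g$ is a connected component (rather than an arbitrary subset of $\mathcal{S}\left(F\right)$) is genuinely used, since the algorithm $\mathtt{Extend}$ propagates fictitious weights along neighbor edges of the solutions network and therefore only reaches solutions inside the component.

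Having thus matched all three hypotheses, I would invoke the Weight Conservation Theorem with $S=g$ to conclude $W_{F}\left(g\right)\geq 1$, which is the desired inequality. I do not anticipate any real obstacle: the non-trivial combinatorics (bounding the transfer quantities $T_{F,\sigma\to v}$ from below by the unladen weight $U_{F}\left(v\right)$ via the tree-building procedure) is already discharged by lemma \ref{lem:(T_F,g)-covers-U_F}, and the algebraic identity proving decomposability is dispatched by lemma \ref{lem:wF-is-decomposable}. The only point worth mentioning explicitly in the writeup is that the theorem's satisfiability assumption guarantees $g\neq\emptyset$, which is the hypothesis needed to start the extension algorithm from the empty partial valuation in lemma \ref{lem:(T_F,g)-covers-U_F}.
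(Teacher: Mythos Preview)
Your proposal is correct and mirrors the paper's own argument exactly: the paper states that theorem \ref{thm:dispatchers-are-good} follows from lemmas \ref{lem:wF-is-decomposable} and \ref{lem:(T_F,g)-covers-U_F} together with the Weight Conservation Theorem, which is precisely the verification of hypotheses you outline. Your additional remark that satisfiability ensures $g\neq\emptyset$ (needed to start the $\mathtt{Extend}$ procedure) is a nice explicit touch the paper leaves implicit.
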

\begin{rem*}
In this paper we do not address the question of choosing the best
$s_{F}$ and $d_{F}$ for a given instance $F$ or for a given family
of instances, which must be custom-tailored depending on the considered
problem. 
\end{rem*}

\section{\label{sec:Better-than-Maneva}Boolean Case: a Better Upper Bound
on the Existence of Non-Trivial Cores}

\subsection{\label{sub:A-Better-Weighting}A Better Weighting for Partial Valuations}

In order to estimate boolean satisfiability of formulas, \citet{Maneva2008}
use a so called \emph{Weight Preservation Theorem}. Valuations here
are mappings from $X$ to $D=\left\{ 0,1,*\right\} $. The value $*$
is to be interpreted as $0$ \emph{or} $1$. They call a valuation
\emph{valid} iff each clause contains at least one true literal or
two starred literals.  In this section, one has to be aware about
the fact that we define a \emph{boolean solution} as a valid valuation
taking its values in $\left\{ 0,1\right\} $ only! $\mathcal{S}\left(F\right)$
still denotes the set of valid valuations of an instance $F$ (with
values in $\left\{ 0,1,*\right\} $) and $A_{F}\left(\sigma,x\right)$
still refers to neighborhood in $\mathcal{S}\left(F\right)$. Note
that any formula has at least one valid valuation: the one that gives
the value $*$ to every variable (the so-called trivial core), so
the existence of valid valuations does not guarantee the existence
of boolean solutions. Nevertheless, counting weighted valid valuations
can be used to estimate boolean satisfiability.

Maneva and Sinclair choose their weights as follows: each variable
has a weighting seed $s_{0}\left(x\right)$, $s_{*}\left(x\right)$
such that $s_{0}\left(x\right)+s_{*}\left(x\right)=1$, and for all
valid valuation $\sigma$ and all variable $x$ they put the following
weight:\begin{eqnarray}
q_{F}\left(\sigma,x\right) & = & \begin{cases}
s_{*}\left(x\right) & \mbox{if \ensuremath{\sigma\left(x\right)=*}\enskip;}\\
s_{0}\left(x\right) & \mbox{if \ensuremath{\sigma\left(x\right)\neq*}\,\ and \ensuremath{*\in A_{F}\left(\sigma,x\right)}\enskip;}\\
s_{0}\left(x\right)+s_{*}\left(x\right) & \mbox{otherwise\enskip.}\end{cases}\label{eq:qF}\end{eqnarray}

As shown by \citet{Maneva2007}, the sum of the weights of all valid
valuations reachable from any boolean solution is exactly $1$. The
reachability property is defined as the existence of a path from the
boolean solution to the valid valuation where at each step a variable
is given the value $*$ while maintaining the validity property. Since
a given valid valuation may be reachable from lots of different boolean
solutions (but sometimes from no one), counting the weighted partial
valuations hopefully enables to count less than the number of boolean
solutions.

Using these weights, \citet{Maneva2008} count the so called non-trivial
cores $\sigma$; a non-trivial core $\sigma$ is a valid valuation
with a linear number of non-starred and non-starrable variables (i.e.~such
that $A_{F}\left(\sigma,x\right)=\{\sigma\left(x\right)\}$). Many
non-trivial cores are not extensible to solutions; a core is extensible
to a solution when there is a boolean valuation of the starred literals
which is a boolean solution. They manage to count only cores which
are extensible to a boolean solution, and they estimate the satisfiability
of the starred part of the formula by weighting valid assignments
as defined in equation \ref{eq:qF}. In this section we define our
new weights and show that they are correct, and in section \ref{sub:Non-trivial-cores}
we use them to improve on Maneva \& Sinclair's upper bound, from $4.453$
to $4.419$.

Before we give this improvement and show its correctness, we want
to stress an important difference between the weighting of solutions
of general CSPs as defined in the previous sections and the weighting
defined in this section: in the previous sections, an unsatisfiable
formula has always a total weight of $0$ while in the present one,
an unsatisfiable formula (a formula with no boolean solution) will
have a non-zero total weight (provided the weights of the value $*$
are not $0$). This is the price one has to pay to lower the weights
of satisfiable formulas. This fact makes difficult to establish a
general comparison between both methods, because they are highly dependent
on the set of instances that are considered and in particular on the
proportion of unsatisfiable instances among them.

To improve on Maneva et al.'s estimation system, we choose the following
weights: each variable $x$ has a unitary weighting seed $s_{F}\left(x,0\right)$,
$s_{F}\left(x,1\right)$ and $s_{F}\left(x,*\right)$. From this seed
$s_{F}$ we define the weight generator $\omega_{F}$ as follows:

\begin{eqnarray}
\omega_{F}\left(x,a,\Delta\right) & = & \begin{cases}
s_{F}\left(x,a\right) & \mbox{if \ensuremath{a=*}\,\ and \ensuremath{a\in\Delta}}\enskip;\\
s_{F}\left(x,a\right)+\sum_{b\in D\backslash\Delta}s_{F}\left(x,b\right) & \mbox{if \ensuremath{a\neq*}\,\ and \ensuremath{a\in\Delta}}\enskip;\\
0 & \mbox{if \ensuremath{a\notin\Delta}}\enskip.\end{cases}\label{eq:pF<Maneva}\end{eqnarray}

As before in section \ref{sub:Seeds-and-dispatchers}, we define the
actual weight $w_{F}\left(\sigma,x\right)=\omega_{F}\left(x,\sigma\left(x\right),A_{F}\left(\sigma,x\right)\right)$.
\begin{rem}
Noticeable values of $\omega_{F}$:\begin{eqnarray*}
\omega_{F}\left(x,0,\left\{ 0\right\} \right)=\omega_{F}\left(x,1,\left\{ 1\right\} \right) & = & s_{F}\left(x,0\right)+s_{F}\left(x,1\right)+s_{F}\left(x,*\right)=1\enskip;\\
\omega_{F}\left(x,0,\left\{ 0,*\right\} \right)=\omega_{F}\left(x,1,\left\{ 1,*\right\} \right) & = & s_{F}\left(x,0\right)+s_{F}\left(x,1\right)\enskip;\\
\omega_{F}\left(x,a,\left\{ 0,1,*\right\} \right) & = & s_{F}\left(x,a\right)\enskip;\\
\omega_{F}\left(x,*,\Delta\right) & = & s_{F}\left(x,*\right)\mbox{ if \ensuremath{*\in\Delta}}\enskip.\end{eqnarray*}

\end{rem}

\begin{rem}
\label{rem:almost-unitary}$\omega_{F}$ is almost unitary, since
for all nonempty $\Delta\subseteq D$, $\Delta\neq\left\{ *\right\} $,
$\Delta\neq\left\{ 0,1\right\} $,\\
 $\sum_{a\in\Delta}\omega\left(x,a,\Delta\right)=1$; $\left\{ 0,1\right\} $
cannot be a clique in this model of validity, because if both $0$
and $1$ are allowed, so is $*$. However $\left\{ *\right\} $ can
be a clique, and in this case $\omega\left(x,*,\left\{ *\right\} \right)=s_{F}\left(x,*\right)$.
\end{rem}
Our system can be seen as a split of $1-s_{*}\left(x\right)$ into
$s_{F}\left(x,0\right)$ and $s_{F}\left(x,1\right)$ (instead of
just $s_{0}\left(x\right)$ for Maneva) in the case when $\sigma\left(x\right)\neq*$
and $A_{F}\left(\sigma,x\right)=\left\{ 0,1,*\right\} $; thus our
weights are smaller than Maneva's, though we are able to insure that
they are correct.

This system is different from the system seeds+dispatchers, because
here a fixed variable at value $*$ is given a weight of $s_{F}\left(x,*\right)$,
whereas dispatchers would give it a weight of $1$. However we are
able to use our Weight Conservation Theorem, using the following decomposers:
\begin{eqnarray}
\delta_{F,\sigma,x,a} & = & \begin{cases}
s_{F}\left(x,a\right) & \mbox{if }\begin{cases}
\mbox{\ensuremath{\sigma\left(x\right)=a}}\\
\mbox{or (\ensuremath{\sigma\left(x\right)\neq*}\,\ and \ensuremath{a\notin A_{F}\left(\sigma,x\right)})}\end{cases};\\
0 & \mbox{otherwise\enskip.}\end{cases}\label{eq:delta-Maneva}\end{eqnarray}

We must now prove that the conditions of our Weight Conservation Theorem
are satisfied: $w_{F}$ is decomposable family $\left(\delta_{F,\sigma,x,a}\right)$
and $\left(T_{F},g\right)$ covers $U_{F}$.
\begin{lem}
$w_{F}$ is decomposable by family $\left(\delta_{F,\sigma,x,a}\right)$.\end{lem}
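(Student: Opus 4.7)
The plan is to verify the identity $w_F(\sigma,x) = \sum_{a\in D}\delta_{F,\sigma,x,a}$ by a direct case analysis on the value $\sigma(x)$, working from the definitions in equations \ref{eq:pF<Maneva} and \ref{eq:delta-Maneva}. Since $\delta_{F,\sigma,x,a}$ vanishes except in two explicitly listed situations (namely $a=\sigma(x)$, and $\sigma(x)\neq*$ together with $a\notin A_F(\sigma,x)$), the sum collapses to at most $s_F(x,\sigma(x))$ plus a partial sum of $s_F(x,\cdot)$ over values lying outside $A_F(\sigma,x)$, which is precisely the shape of $\omega_F$.

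First I would dispose of the case $\sigma(x)=*$. Here $*\in A_F(\sigma,x)$ and by equation \ref{eq:pF<Maneva} we have $w_F(\sigma,x)=s_F(x,*)$. On the decomposer side, the second clause of equation \ref{eq:delta-Maneva} is disabled (it requires $\sigma(x)\neq*$), so only the term $a=\sigma(x)=*$ contributes, yielding $s_F(x,*)$. The two sides match.

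Next I would treat the case $\sigma(x)\neq*$, i.e.\ $\sigma(x)\in\{0,1\}$. By equation \ref{eq:pF<Maneva},
\[
w_F(\sigma,x)=s_F(x,\sigma(x))+\sum_{b\in D\setminus A_F(\sigma,x)}s_F(x,b).
\]
On the other side, $\delta_{F,\sigma,x,a}=s_F(x,a)$ exactly when $a=\sigma(x)$ or when $a\notin A_F(\sigma,x)$, and is zero otherwise. Since $\sigma(x)\in A_F(\sigma,x)$, these two situations are disjoint, so summing over $a\in D$ gives $s_F(x,\sigma(x))+\sum_{a\in D\setminus A_F(\sigma,x)}s_F(x,a)$, which again matches $w_F(\sigma,x)$.

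I do not anticipate any real obstacle: the lemma is a bookkeeping verification, and the only thing to be careful about is that the second clause of \ref{eq:delta-Maneva} is gated by $\sigma(x)\neq*$, which is exactly what prevents double-counting in the $\sigma(x)=*$ case and ensures the asymmetric treatment of $*$ present in $\omega_F$. Once the two cases above are written out, the identity $w_F(\sigma,x)=\sum_{a\in D}\delta_{F,\sigma,x,a}$ holds, which is the definition of decomposability.
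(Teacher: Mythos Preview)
Your proof is correct and follows essentially the same approach as the paper: a two-case analysis on whether $\sigma(x)=*$, using the definitions in equations \ref{eq:pF<Maneva} and \ref{eq:delta-Maneva} to collapse the sum. Your explicit remark that $\sigma(x)\in A_F(\sigma,x)$ makes the two contributing situations disjoint is a nice touch that the paper leaves implicit.
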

\begin{proof}
By definitions:
\begin{enumerate}
\item if $\sigma\left(x\right)=*$: $\sum_{a\in D}\delta_{F,\sigma,x,a}=\sum_{a\in D}s_{F}\left(x,a\right)\boldsymbol{1}_{a=\sigma\left(x\right)}=s_{F}\left(x,\sigma\left(x\right)\right)=w_{F}\left(\sigma,x\right)$;
\item if $\sigma\left(x\right)\neq*$: \begin{eqnarray*}
\sum_{a\in D}\delta_{F,\sigma,x,a} & = & \sum_{a\in D}s_{F}\left(x,a\right)\left(\boldsymbol{1}_{a=\sigma\left(x\right)}+\boldsymbol{1}_{a\notin A_{F}\left(\sigma,x\right)}\right)\\
 & = & s_{F}\left(x,\sigma\left(x\right)\right)+\sum_{b\notin A_{F}\left(\sigma,x\right)}s_{F}\left(x,b\right)\\
 & = & w_{F}\left(\sigma,x\right)\enskip.\end{eqnarray*}

\end{enumerate}
\end{proof}
\begin{lem}
Let $v$ be a valuation and $g$ be any connected component of the
network of valid valuations $\mathcal{S}\left(F\right)$ containing
a boolean solution. Then there exists a valid valuation $\sigma\in g$
such that $U_{F}\left(v\right)=T_{F,\sigma\to v}$.\end{lem}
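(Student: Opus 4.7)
The plan is to construct the required $\sigma$ explicitly via a greedy ``descent'' starting from a Boolean solution in $g$. Since both $U_F(v) = \prod_{x\in X} s_F(x,v(x))$ and $T_{F,\sigma\to v} = \prod_{x\in X} \delta_{F,\sigma,x,v(x)}$ factor over $X$, the target equality reduces to the per-variable condition $\delta_{F,\sigma,x,v(x)} = s_F(x,v(x))$ for every $x\in X$. Reading off equation~(\ref{eq:delta-Maneva}), this amounts to requiring that for each $x$ either $\sigma(x)=v(x)$, or else $\sigma(x)\neq *$ and $v(x)\notin A_F(\sigma,x)$. In particular, it is crucial never to produce $\sigma(x)=*$ when $v(x)\neq *$, otherwise $\delta_{F,\sigma,x,v(x)}$ collapses to $0$.

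I would start from any Boolean solution $\beta\in g$ (which exists by hypothesis) and iterate greedy moves until none is applicable. A \emph{starring move} replaces $\sigma$ by $\sigma_{x\gets *}$ whenever $v(x)=*$, $\sigma(x)\neq *$ and $*\in A_F(\sigma,x)$; a \emph{flipping move} replaces $\sigma$ by $\sigma_{x\gets v(x)}$ whenever $v(x)\neq *$, $\sigma(x)=1-v(x)$ and $v(x)\in A_F(\sigma,x)$. Each move swaps $\sigma$ for an $x$-adjacent valid valuation, so $\sigma$ never leaves~$g$. Since stars are introduced only at variables with $v(x)=*$, and $\beta$ is Boolean to begin with, the forbidden situation ``$v(x)\neq *$ and $\sigma(x)=*$'' is never produced. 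Moreover, each move strictly increases the cardinality of $\{x\in X : \sigma(x)=v(x)\}$, so the process halts in at most $n$ steps.

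At termination no greedy move is available, so for every $x$ with $v(x)=*$ either $\sigma(x)=*$ or $\sigma(x)$ is Boolean with $*\notin A_F(\sigma,x)$, and for every $x$ with $v(x)\neq *$, $\sigma(x)$ is Boolean and either equals $v(x)$ or satisfies $v(x)\notin A_F(\sigma,x)$. In all cases~(\ref{eq:delta-Maneva}) yields $\delta_{F,\sigma,x,v(x)} = s_F(x,v(x))$, and multiplying over $x$ gives $T_{F,\sigma\to v} = U_F(v)$. The delicate point to keep in mind is that a move on some variable $y$ can alter $A_F(\sigma,x)$ for other variables $x$, possibly ``reopening'' a variable that was temporarily stuck; this is harmless because the greedy loop runs as long as \emph{any} good move exists, and the monotone counting argument above still guarantees convergence in spite of such reshuffles.
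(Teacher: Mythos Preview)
Your proof is correct and follows essentially the same approach as the paper. The paper also starts from a Boolean solution $\sigma_0\in g$ and greedily replaces $\sigma$ by $\sigma_{x\gets v(x)}$ whenever this preserves validity, then observes that at termination every variable satisfies either $\sigma(x)=v(x)$ or ($\sigma(x)\neq *$ and $v(x)\notin A_F(\sigma,x)$); your separation into ``starring'' and ``flipping'' moves and your explicit invariant that $\sigma(x)$ stays Boolean whenever $v(x)\neq *$ are just a more detailed rendering of the same argument.
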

\begin{proof}
Let us take any $v\in D^{X}$ and a boolean solution $\sigma_{0}\in g$.
At the beginning we put $\sigma=\sigma_{0}$. Consider the following
procedure:
\begin{itemize}
\item If there is a variable $x\in X$ such that $\sigma\left(x\right)\neq v\left(x\right)$
and $\sigma_{x\gets v\left(x\right)}$ remains a valid valuation,
then change $\sigma$ to $\sigma_{x\gets v\left(x\right)}$.
\end{itemize}
We iterate this procedure till there is no variable $x\in X$ such
that $\sigma\left(x\right)\neq v\left(x\right)$ and $\sigma_{x\gets v\left(x\right)}$
remains a valid valuation. This eventually happens because at each
step we make a move towards $v$, and $X$ is finite. So in the end,
each variable in $\sigma$ has either its initial boolean value in
$\sigma_{0}$ or the value given by $v$. In other words, for any
$x\in X$, either $\sigma\left(x\right)=v\left(x\right)$ or ($\sigma\left(x\right)\neq*$
and $v\left(x\right)\notin A_{F}\left(\sigma,x\right)$). Thus by
equation \ref{eq:delta-Maneva}, $\delta_{F,\sigma,x,v\left(x\right)}=s_{F}\left(x,v\left(x\right)\right)$,
which in turn by definitions \ref{def:U} and \ref{def:T} yields
$T_{F,\sigma\to v}=U_{F}\left(v\right)$. Moreover, by construction,
the ending $\sigma$ is also in $g$.\end{proof}
\begin{cor}
Let $g$ be any connected component of the network of valid valuations
$\mathcal{S}\left(F\right)$ containing a boolean solution. Then $\left(T_{F},g\right)$
covers $U_{F}$.
\end{cor}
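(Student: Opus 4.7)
The plan is short because the preceding lemma has already done essentially all the work. Fix an arbitrary valuation $v \in D^X$; I need to verify the covering inequality $\sum_{\sigma \in g} T_{F,\sigma \to v} \geq U_F(v)$ for this $v$.

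First I invoke the preceding lemma to obtain a single witness $\sigma^* \in g$ with $T_{F,\sigma^* \to v} = U_F(v)$. Next I note that every $\delta_{F,\sigma,x,a}$ defined in equation \ref{eq:delta-Maneva} is non-negative (it is either $s_F(x,a)$, and the seed is $\boldsymbol{R}^+$-valued, or $0$), so each transfer quantity $T_{F,\sigma\to v} = \prod_{x\in X}\delta_{F,\sigma,x,v(x)}$ is non-negative as well. Therefore dropping all but the term indexed by $\sigma^*$ can only decrease the sum:
\begin{equation*}
\sum_{\sigma \in g} T_{F,\sigma \to v} \;\geq\; T_{F,\sigma^* \to v} \;=\; U_F(v).
\end{equation*}
Since $v$ was arbitrary, this is exactly the definition of $(T_F,g)$ covering $U_F$.

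There is no real obstacle here; the only thing worth double-checking is that the witness produced by the lemma genuinely lies in $g$ (which the lemma's construction guarantees, since the iterative procedure starts from a boolean solution $\sigma_0\in g$ and at every step replaces the current $\sigma$ by a valuation $\sigma_{x\gets v(x)}$ that is valid and differs from $\sigma$ at a single variable, hence remains $x$-adjacent to $\sigma$ in the solutions network, so it stays inside the same connected component $g$).
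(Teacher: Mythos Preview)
Your argument is correct and matches the paper's intent exactly: the corollary is immediate from the preceding lemma together with the non-negativity of the $\delta$'s, and the paper in fact states it without proof. Your additional remark about the witness remaining in $g$ is a faithful summary of the lemma's construction.
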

Thus our weighting system obeys the Weight Conservation Theorem, and
we can conclude that $\gamma\left(F\right)=W_{F}\left(\mathcal{S}\left(F\right)\right)\geq W_{F}\left(g\right)\geq1$
and state the following theorem:
\begin{thm}
\label{thm:Maneva}$w_{F}$ as defined in equation \ref{eq:pF<Maneva}
yields $\gamma\left(F\right)\geq1$ whenever $F$ admits a boolean
solution.\end{thm}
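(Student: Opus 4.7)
The plan is to apply the Weight Conservation Theorem (Theorem \ref{Weight-Conservation-Theorem}) directly to the weighting system defined by equation \ref{eq:pF<Maneva} and the decomposer given by equation \ref{eq:delta-Maneva}, since all three sufficient hypotheses have essentially been verified by the two lemmas and the corollary that immediately precede the theorem statement.

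First I would fix a boolean solution $\sigma_0$ of $F$ and let $g$ be the connected component of the solutions network $\mathcal{S}\left(F\right)$ (i.e.~the network of valid valuations) containing $\sigma_0$; by hypothesis such a component exists. Then I would verify the three assumptions of the Weight Conservation Theorem in order. Unitarity of $s_F$ is immediate from the construction of the seed (the paragraph introducing equation \ref{eq:pF<Maneva} specifies $s_F\left(x,0\right)+s_F\left(x,1\right)+s_F\left(x,*\right)=1$). Decomposability of $w_F$ by the family $\left(\delta_{F,\sigma,x,a}\right)$ defined in equation \ref{eq:delta-Maneva} is exactly the content of the first lemma of this subsection. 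The covering property $\left(T_F,g\right)$ covers $U_F$ is exactly the statement of the corollary just above the theorem.

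With these three assumptions in hand, the Weight Conservation Theorem gives $W_F\left(g\right)\geq 1$. To conclude the statement about $\gamma\left(F\right)=W_F\left(\mathcal{S}\left(F\right)\right)$, I would observe that $w_F\left(\sigma,x\right)=\omega_F\left(x,\sigma\left(x\right),A_F\left(\sigma,x\right)\right)\geq 0$ by construction, so weights of valid valuations are non-negative and $W_F$ is monotone with respect to set inclusion. Since $g\subseteq\mathcal{S}\left(F\right)$, this yields $\gamma\left(F\right)\geq W_F\left(g\right)\geq 1$, which is the desired conclusion.

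No step requires any new calculation: the genuine work was already carried out in the two preceding lemmas, the first reducing everything to checking a pointwise identity on decomposers, and the second (the more delicate one) exhibiting, by a greedy move-towards-$v$ argument starting from a boolean solution, a single valid valuation $\sigma\in g$ with $T_{F,\sigma\to v}=U_F\left(v\right)$. The only thing worth being careful about when writing the final proof is that the Weight Conservation Theorem demands unitarity of the seed $s_F$, not of the generator $\omega_F$; Remark \ref{rem:almost-unitary} makes clear that $\omega_F$ is only almost unitary, but that remark does not interfere because the hypothesis of Theorem \ref{Weight-Conservation-Theorem} is the former, not the latter.
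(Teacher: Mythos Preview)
Your proposal is correct and follows essentially the same route as the paper: verify the three hypotheses of the Weight Conservation Theorem via the two preceding lemmas and their corollary, apply it to the connected component $g$ containing a boolean solution, and then pass to $\gamma\left(F\right)$ by monotonicity of $W_F$ under set inclusion. The paper compresses all of this into the single sentence ``Thus our weighting system obeys the Weight Conservation Theorem, and we can conclude that $\gamma\left(F\right)=W_{F}\left(\mathcal{S}\left(F\right)\right)\geq W_{F}\left(g\right)\geq1$'', so your write-up is in fact more explicit than the original.
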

\begin{rem*}
We cannot apply theorem \ref{thm:set-equality} because there is no
closure by renaming; namely $\left\{ 0,1\right\} $ cannot be a clique
(if both $0$ and $1$ are allowed, so is $*$), whereas $\left\{ 0,*\right\} $
and $\left\{ 1,*\right\} $ can. 
\end{rem*}
Note that in the particular case where for all $x\in X$, $s_{F}\left(x,0\right)=0$,
$s_{F}\left(x,1\right)=1$ and $s_{F}\left(x,*\right)=0$, we count
what \citet{Dubois1997} call Negatively Prime Solutions (NPSs). Moreover,
as soon as $s_{F}\left(x,*\right)=0$, this weighting can be seen
as seeds+dispatchers on a boolean domain (so this weighting is homogeneous).

We used the weighting defined in equation \ref{eq:pF<Maneva} to compute
an upper bound of the threshold of random \threesat: taking seeds
independent of $F$ and $x$, we obtained the best estimation when
$s_{F}\left(x,*\right)=0$ (and the corresponding upper bound is $4.643$,
just like with NPSs). We conjecture that even if one takes seeds dependent
on $F$ or $x$, the best choice of $s_{F}\left(x,*\right)$ to estimate
boolean unsatisfiability remains indeed $0$. The reason why we think
so, is that, as described in remark \ref{rem:almost-unitary}, $\omega_{F}$
is almost unitary, except for the clique $\left\{ *\right\} $, in
which case $\omega_{F}\left(x,*,\left\{ *\right\} \right)=s_{F}\left(x,*\right)$.

\subsection{\label{sub:Non-trivial-cores}Application: Non-Existence of Non-Trivial
Cores in \threesat}

We apply here the weighting defined in \ref{sub:A-Better-Weighting}
to improve on the upper bound on the existence of non-trivial cores
in \threesat{} shown by \citet{Maneva2008}. First, we recall the
basic notions defined by \citet{Maneva2008} and we reformulate them
according to our notations. Starting from solutions (which are also
valid partial assignments), the following process is iterated: whenever
there is a starrable variable $x$ (i.e. a variable such that under
the current valuation $\sigma$, we have $\sigma\left(x\right)\ne*$
and $A\left(\sigma,x\right)$ contains $*$), consider for the next
step the valuation obtained from $\sigma$ by assigning $*$ to $x$.
This process stops when no such variables exist, and the resulting
partial assignment is called a \emph{core}. Only the cores $\sigma$
for which the set of non-starred variables is not empty are interesting;
they are said to be non-trivial. Such cores contain an important information
on the geometry of the space of solutions and underlie the so-called
clustering that explains the difficulty of solving such instances.

To study the existence of cores, \citet{Maneva2008} make use of \emph{covers}.
A cover is a partial valuation where every non-starred variable is
non-starrable. Obviously, each core is also a cover but the converse
is not true. Indeed, by construction, for a core there exist always
a way of assigning the starred variables so that the formula is satisfied,
while for covers no such condition is guaranteed. The size of a core
or a cover is the number of non-starred variables.

\citet{Maneva2008} compute their upper bound in two steps:
\begin{enumerate}
\item First, they use the first moment method to upper-bound the probability
of existence of covers of certain sizes. This allows them to discard
the ranges of sizes where covers do not exist, and within these ranges
cores cannot exist either;
\item As for the ranges that are not discarded by the previous step, the
probability that covers can be extended to solutions is upper-bounded
using the first moment method through the weighting system.
\end{enumerate}
Implementing this method as described in \citet{Maneva2008} we obtain
the following improvement.
\begin{thm}
\label{thm:4.419}Random instances of \threesat{} with density greater
than $4.419$ have no non-trivial cores with high probability.
\end{thm}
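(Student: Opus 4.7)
The plan is to follow the two-step approach of \citet{Maneva2008} verbatim, replacing their weighting system $q_F$ from equation \ref{eq:qF} with our improved weighting $w_F$ from equation \ref{eq:pF<Maneva}, for which theorem \ref{thm:Maneva} guarantees $\gamma(F) \geq 1$ whenever $F$ is satisfiable.

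First I would handle the ``cover'' step, which is independent of the weighting. For each fractional size $\alpha \in [0,1]$, I apply the first moment method to the number of covers with exactly $\alpha n$ non-starred variables in a random \threesat{} instance of density $r$. The condition that every non-starred variable is non-starrable (i.e.~belongs to at least one clause in which it is the unique non-$*$ literal with the correct polarity) yields a probability factor per variable that can be combined with entropy to produce a function $g_1(\alpha, r)$. Ranges of $\alpha$ where $g_1(\alpha, r) < 0$ are discarded: no cover, and hence no core, can have such size with positive probability.

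Second, for the remaining range of $\alpha$, I would compute the expected total weight $\mathbb{E}[\gamma(F)]$ restricted to valid valuations that extend a would-be core of size $\alpha n$. Since we assume instance-independent seeds $s_F(x,0),\, s_F(x,1),\, s_F(x,*)$ symmetric in $0$ and $1$, the weight $w_F(\sigma,x) = \omega_F(x,\sigma(x),A_F(\sigma,x))$ takes one of the four values listed in the remark following equation \ref{eq:pF<Maneva}, depending on which of $\{0\}, \{1\}, \{0,*\}, \{1,*\}, \{0,1,*\}, \{*\}$ is the clique $A_F(\sigma, x)$. Over the random choice of clauses, computing $\mathbb{E}\bigl[\prod_x w_F(\sigma,x) \cdot \mathbf{1}_{\sigma \text{ valid}}\bigr]$ factorizes into per-variable and per-clause contributions exactly as in Maneva \& Sinclair; the only change is that the per-variable weights are strictly smaller than their $q_F$ counterparts in the $A_F(\sigma,x) = \{0,1,*\}$ case, because $s_F(x,0) + s_F(x,1)$ is split into two separate contributions rather than being collapsed into $s_0(x)$. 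Standard entropy counting then gives an exponential rate $g_2(\alpha, r, s_F)$, and by theorem \ref{thm:Maneva} plus Markov's inequality, $g_2 < 0$ at some $\alpha$ forbids satisfiable cores of that size.

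Finally I would combine: for density $r$, a non-trivial core exists only if for some $\alpha > 0$ both $g_1(\alpha, r) \geq 0$ and $g_2(\alpha, r, s_F) \geq 0$. Numerically optimizing the seeds $s_F(x,0) = s_F(x,1)$ and $s_F(x,*)$ to minimize the maximum of $g_2$ over the surviving $\alpha$ range, and then finding the least $r$ for which the two curves have no common positive $\alpha$, yields the value $4.419$. The main obstacle is not conceptual but computational: the optimization over both $\alpha$ and the two free seed parameters, together with verifying that the strict inequality holds uniformly on the relevant compact set, must be carried out carefully to match the announced constant. Conceptually the proof is essentially a faithful re-run of \citet{Maneva2008} with the strictly smaller weights $w_F$ replacing $q_F$, and the improvement from $4.453$ to $4.419$ is entirely attributable to this change of weighting.
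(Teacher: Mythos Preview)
Your proposal is correct and takes essentially the same two-step route as the paper: first the cover bound (the function $f$, unchanged from \citet{Maneva2008}), then the weighted-core bound $f+h$ with the new weights of equation~\ref{eq:pF<Maneva}, followed by numerical maximization. The only refinement the paper adds beyond your sketch is that, since $w_F$ now distinguishes invertible variables ($A_F(\sigma,x)=\{0,1,*\}$, weight $(1-\rho)/2$) from merely starrable ones (weight $1-\rho$), the first-moment computation of $h$ requires an extra summation variable $v$ on top of Maneva--Sinclair's $s,t,u$, and the optimal seed is taken to depend on the core size, $\rho(a)=0.3758\,a+0.7067$.
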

As we mimic \citet{Maneva2008}'s proof, we defer it to the appendix.
Let us just mention the weights we use : $s_{F}\left(x,*\right)=\rho$,
$s_{F}\left(x,1\right)=s_{F}\left(x,0\right)=\left(1-\rho\right)/2$.
We determined the best values for $\rho$ by numerical simulations,
and it depends on the parameter $a$ (size of cores). By symmetry,
any combination of values given to the $s_{F}\left(x,1\right)$ and
$s_{F}\left(x,0\right)$ summing up to $1-\rho$ will give the same
result. 

One can wonder whether this bound can be improved upon using boolean
solutions and ordering. The answer is no since that case is a particular
case of our weighting system (eq.~\ref{eq:pF<Maneva}): the case
where $s_{F}\left(x,*\right)=0$, $s_{F}\left(x,1\right)=1$ and $s_{F}\left(x,0\right)=0$
or vice-versa. It is easy to answer this question in the boolean case
because any ordering can be seen as a particular weighting (by choosing
the weight $1$ for a value and $0$ for the other one). But in the
general case of larger domain CSPs the answer is not so easy. In the
following section, we give general comparisons between orderings and
weightings in different general cases.

\section{Weighting versus Ordering}

\subsection{Partial Ordering of Solutions \label{sub:Partial-ordering} }

Given a CSP instance $F$, various partial orders $\prec_{F}$ can
be defined on the set of solutions such that for every two adjacent
solutions $\sigma$ and $\tau$ of $F$, we have either $\sigma\prec_{F}\tau$
or $\tau\prec_{F}\sigma$. The aim of the partial order here is to
provide a measure on the solutions network through the number of its
minimal elements. Let $\mathcal{M}_{\prec_{F}}\left(F\right)$ be
the set of minimal solutions of $F$ with respect to the order $\prec_{F}$.

In the solutions network of $F$, a partial order $\prec_{F}$ can
be seen as a circuit-free orientation of the edges of the graph such
that an edge goes from $\tau$ to $\sigma$ iff $\sigma\prec_{F}\tau$;
then minimal elements are vertices with no outgoing edges. In general
one seeks partial orderings that have the least number of minimal
elements; however the choice is limited because orderings must be
chosen according to local criteria only.

\paragraph*{Construction of an Ordering}
\begin{defn}
Given a variable $x\in X$, a total strict order $<_{F,x}$ on $D$
gives an orientation between neighboring solutions: $\sigma\prec_{F,x}\tau$
iff $\sigma$ and $\tau$ are $x$-adjacent and $\sigma\left(x\right)<_{F,x}\tau\left(x\right)$.
Note that $\prec_{F,x}$ is a partial strict order on the set of solutions,
but a total strict order in each clique $N_{F}\left(\sigma,x\right)$.

We can bring all partial orders $\prec_{F,x}$ together on the set
of solutions, as follows: if $\sigma$ and $\tau$ are $x$-adjacent
and different, then $\sigma\prec_{F}\tau$ iff $\sigma\prec_{F,x}\tau$.
This is possible because two different solutions $\sigma$ and $\tau$
cannot be both $x$-adjacent and $y$-adjacent for two different variables
$x$ and $y$. We say that $\prec_{F}$ is the orientation on $\mathcal{S}\left(F\right)$
\emph{induced }by the set $\left\{ \left(x,<_{F,x}\right)\right\} _{x\in X}$
.\end{defn}
\begin{lem}
\label{lem:circuit-free}If $\prec_{F}$ is the orientation on $\mathcal{S}\left(F\right)$
induced by a set $\left\{ \left(x,<_{F,x}\right)\right\} _{x\in X}$,
then $\prec_{F}$ is circuit-free.\end{lem}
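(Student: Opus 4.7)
The plan is to argue by contradiction: assume there is a directed circuit in the orientation induced by $\prec_F$, and show that it forces every coordinate to remain constant along the circuit, which is impossible because adjacent solutions must differ somewhere.

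Concretely, I would begin by supposing that $\sigma_0 \prec_F \sigma_1 \prec_F \cdots \prec_F \sigma_k$ with $\sigma_k = \sigma_0$ is a hypothetical circuit. By the construction of $\prec_F$, each step $\sigma_{i-1} \prec_F \sigma_i$ corresponds to a unique variable $x_i \in X$ such that $\sigma_{i-1}$ and $\sigma_i$ are $x_i$-adjacent (and distinct), with $\sigma_{i-1}(x_i) <_{F,x_i} \sigma_i(x_i)$; on every other variable $y \neq x_i$ we have $\sigma_{i-1}(y) = \sigma_i(y)$.

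Next, I would fix an arbitrary variable $x \in X$ and track the sequence of values $\sigma_0(x), \sigma_1(x), \ldots, \sigma_k(x)$. For each index $i$, either $x_i \neq x$, in which case $\sigma_{i-1}(x) = \sigma_i(x)$, or $x_i = x$, in which case $\sigma_{i-1}(x) <_{F,x} \sigma_i(x)$. Thus the sequence is non-decreasing with respect to $<_{F,x}$, and every step where $x_i = x$ is a strict increase. Since $<_{F,x}$ is a total strict order on the finite set $D$, the equality $\sigma_k(x) = \sigma_0(x)$ forces the sequence to be constant, so no index $i$ can satisfy $x_i = x$.

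Applying this argument to every $x \in X$ yields that no index $i \in \{1, \ldots, k\}$ has an associated variable $x_i$, which contradicts the fact that each consecutive pair $(\sigma_{i-1}, \sigma_i)$ is $x_i$-adjacent and distinct. The only delicate point in the whole argument is the observation that distinct $x$-adjacency and $y$-adjacency relations do not interfere (which is already used when $\prec_F$ is well-defined from the $\prec_{F,x}$'s), so the variable $x_i$ in each step is unambiguous; once this is in hand, the coordinate-wise monotonicity argument closes the proof cleanly.
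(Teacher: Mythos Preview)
Your proof is correct and uses essentially the same coordinate-wise monotonicity idea as the paper: along any putative circuit, the value at a fixed variable is non-decreasing under $<_{F,x}$ and hence must be constant, which is incompatible with the existence of an edge. The paper's version is slightly more economical in that it picks just one variable---the one witnessing the first step $\sigma_1\prec_{F,x}\sigma_2$---and derives the contradiction immediately from $\sigma_1(x)<_{F,x}\sigma_2(x)\leq_{F,x}\cdots\leq_{F,x}\sigma_1(x)$, rather than quantifying over all $x\in X$.
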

\begin{proof}
Suppose on the contrary that there exists a circuit $\sigma_{1}\prec_{F}\dots\prec_{F}\sigma_{l}\prec_{F}\sigma_{1}$
for some $l\geq2$. Let us consider the variable $x$ such that $\sigma_{1}\prec_{F,x}\sigma_{2}$.
For any $i\leq l$, either $\sigma_{i}\left(x\right)=\sigma_{i+1}\left(x\right)$
(if $\sigma_{i}$ and $\sigma_{i+1}$ are not $x$-adjacent) or $\sigma_{i}\left(x\right)<_{F,x}\sigma_{i+1}\left(x\right)$
(if $\sigma_{i}$ and $\sigma_{i+1}$ are $x$-adjacent). Thus $\sigma_{1}\left(x\right)<_{F,x}\sigma_{2}\left(x\right)$
and $\sigma_{2}\left(x\right)\leq_{F,x}\sigma_{3}\left(x\right)\leq_{F,x}\dots\leq_{F,x}\sigma_{l}\left(x\right)\leq_{F,x}\sigma_{1}\left(x\right)$:
a contradiction.\end{proof}
\begin{cor}
The transitive closure of $\prec_{F}$ is a strict order relation.
\end{cor}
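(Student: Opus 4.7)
The plan is to derive the corollary as a direct consequence of Lemma \ref{lem:circuit-free}. A strict order relation requires transitivity and irreflexivity (asymmetry then follows from these two). Transitivity of the transitive closure $\prec_F^*$ is automatic from its definition, so the real content is to verify irreflexivity.

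For irreflexivity, I would argue by contradiction: suppose $\sigma \prec_F^* \sigma$ for some solution $\sigma$. By definition of transitive closure, this means there exist solutions $\sigma_1, \dots, \sigma_l$ with $l \geq 2$ (taking $\sigma_1 = \sigma$) such that $\sigma_1 \prec_F \sigma_2 \prec_F \dots \prec_F \sigma_l \prec_F \sigma_1$. But this is precisely a circuit of $\prec_F$ of the form ruled out by Lemma \ref{lem:circuit-free}, giving the contradiction.

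Once irreflexivity is in hand, asymmetry is automatic: if both $\sigma \prec_F^* \tau$ and $\tau \prec_F^* \sigma$ held, transitivity would yield $\sigma \prec_F^* \sigma$, contradicting irreflexivity. Thus $\prec_F^*$ satisfies all the axioms of a strict order.

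There is no real obstacle here; the corollary is essentially a packaging of the previous lemma with the standard fact that ``acyclic relation'' and ``relation whose transitive closure is a strict partial order'' are equivalent notions on a finite set. The only care needed is to unfold the definition of transitive closure correctly so that a self-relation translates into a closed chain of $\prec_F$-steps of length $\geq 2$, which is exactly the notion of circuit used in Lemma \ref{lem:circuit-free}.
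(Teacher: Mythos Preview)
Your proof is correct and is exactly the intended argument: the paper states the corollary without proof, leaving it as an immediate consequence of Lemma~\ref{lem:circuit-free}, and you have spelled out the standard derivation (transitivity is built in, irreflexivity is equivalent to circuit-freeness, asymmetry follows). There is nothing to add.
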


\paragraph*{Instance Dependent or not}
\begin{itemize}
\item \emph{Instance dependent ordering.} In this case, we put for each
variable $x\in X$ and each CSP instance $F$ a total order $<_{F,x}$
onto the domain $D$ of possible values. As mentioned above, we (partially)
order solutions as follows: let $\sigma\in\mathcal{S}\left(F\right)$
and $\tau\in N_{F}\left(\sigma,x\right)$; we have $\sigma\prec_{F}\tau$
if and only if $\sigma\left(x\right)<_{F,x}\tau\left(x\right)$. The
motivation for the instance dependent ordering is that some syntactic
properties of the CSP instance $F$ can be exploited to define a suitable
order for that instance.
\item \emph{Instance independent ordering}. This is a particular case of
the above ordering, when the total order $<_{x}$ on $D$ does not
depend on $F$. For some problems, no preferred order can be defined
given some instance. This happens in particular when values are indistinguishable
because of the symmetry of the problem (e.g. colors in graph coloring).

\end{itemize}

\paragraph*{Examples of Orientations}

We first give an example (figure \ref{fig:bad-orientation}) of an
orientation which is not circuit-free, even though it was built from
the following local orderings on each individual clique:
\begin{itemize}
\item in cliques $\left\{ a,b\right\} $ for variables $x$ and $y$, we
have $b<a$;
\item in cliques $\left\{ a,b,c\right\} $ for variables $x$ and $y$,
we have $a<c<b$.
\end{itemize}
The problem comes from the fact that $a$ and $b$ are ordered differently
in clique $\left\{ a,b,c\right\} $ and its sub-clique $\left\{ a,b\right\} $,
which led us to consider only orientations built in the following
way: we choose for each variable $x$ a total order $<_{x}$ on the
domain $D$ and use it for each sub-clique of $D$. This is what \citet{Boufkhad2010}
call a \emph{uniform} orientation. Example in figure \ref{fig:good-orientation}
was obtained by the following orders: $c<_{x}b<_{x}a$ and $c<_{y}a<_{y}b$.
This orientation is circuit-free and has two minimal elements. Now
among good orientations, the less minimal elements they have, the
better they are; figure \ref{fig:very-good-orientation}, which was
obtained by the following orders: $c<_{x}b<_{x}a$ and $a<_{y}c<_{y}b$,
gives an example of an orientation with just one minimal element.

\begin{figure}
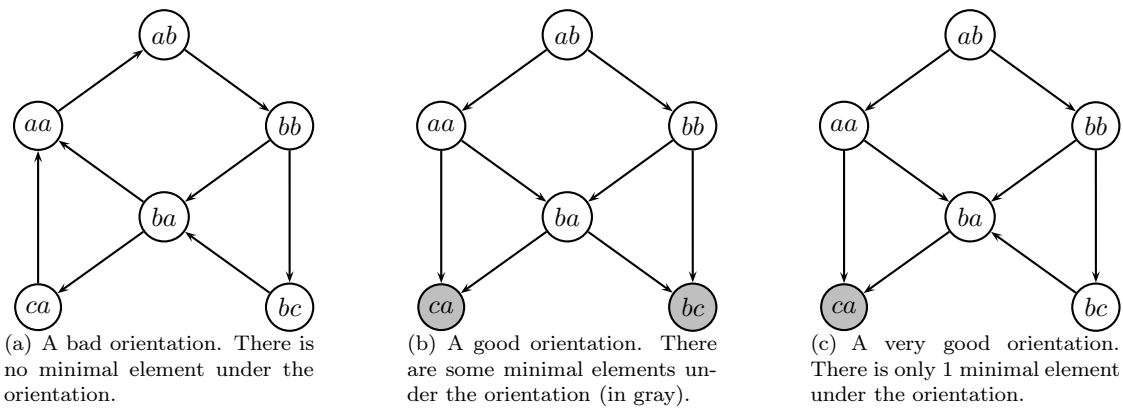

\centering
\subfigure[A bad orientation. There is no minimal element under the orientation.]{\label{fig:bad-orientation}
$ \psmatrix[colsep=1cm,rowsep=0.5cm,mnode=circle] 
&ab\\  
aa&&bb\\
&ba\\
ca&&bc
\ncline{->}{2,1}{1,2}
\ncline{->}{1,2}{2,3} 
\ncline{<-}{3,2}{2,3}
\ncline{<-}{2,1}{3,2}
\ncline{<-}{2,1}{4,1}
\ncline{<-}{4,1}{3,2}
\ncline{->}{2,3}{4,3}
\ncline{<-}{3,2}{4,3}
\endpsmatrix  $}
\hspace{1cm}
\subfigure[A good orientation. There are some minimal elements under the orientation (in gray).]{\label{fig:good-orientation}
$ \psmatrix[colsep=1cm,rowsep=0.5cm,mnode=circle] 
&ab\\  
aa&&bb\\
&ba\\
[fillstyle=solid,fillcolor=gray!50]ca&&[fillstyle=solid,fillcolor=gray!50]bc
\ncline{<-}{2,1}{1,2}
\ncline{->}{1,2}{2,3} 
\ncline{<-}{3,2}{2,3}
\ncline{->}{2,1}{3,2}
\ncline{->}{2,1}{4,1}
\ncline{<-}{4,1}{3,2}
\ncline{->}{2,3}{4,3}
\ncline{->}{3,2}{4,3}
\endpsmatrix  $}
\hspace{1cm}
\subfigure[A very good orientation. There is only 1 minimal element under the orientation.]{\label{fig:very-good-orientation}
$\begin{psmatrix}[colsep=1cm,rowsep=0.5cm,mnode=circle]
&ab\\  
aa&&bb\\
&ba\\
[fillstyle=solid,fillcolor=gray!50]ca&&bc
\ncline{<-}{2,1}{1,2}
\ncline{->}{1,2}{2,3} 
\ncline{<-}{3,2}{2,3}
\ncline{->}{2,1}{3,2}
\ncline{->}{2,1}{4,1}
\ncline{<-}{4,1}{3,2}
\ncline{->}{2,3}{4,3}
\ncline{<-}{3,2}{4,3}
\end{psmatrix}$}
\caption{Some basic examples of orientations. We consider a network of 6 solutions over the domain $D=\{a,b,c\}$ for variables $X=\{x,y\}$; by shortcut $ab$ we mean that variable $x$ takes value $a$ and variable $y$ takes value $b$. $\{aa,ba,ca\}$ is a clique for variable $x$.}
\label{fig:basic-examples-orientations}
\end{figure}

\subsection{Homogeneous Case: Weighting Is Not Better than Ordering}

As we have seen, the weighting is based upon two functions:
\begin{enumerate}
\item the weighting seed $s_{F}$ that determines the intrinsic weight of
each value and then allows to compute the intrinsic unladen weight
of each valuation;
\item the dispatcher $d_{F}$ that represents how the weights of forbidden
valuations are scattered among the authorized ones.
\end{enumerate}
A natural case to investigate is when these two quantities are equal,
namely when each allowed value is dispatched a complimentary weight
proportional to its intrinsic weight. So we deal here with the homogeneous
case $d_{F}=s_{F}$ and show that whatever $s_{F}$ may be, there
will exist an ordering which is at least as good as the weighting
system, as will be stated in theorem \ref{thm:orientation-better}.
The proof consists in choosing variable per variable the order $<_{F,x}$
in a way that does not increase the global weight. For our recurrence
to work we use the homogeneity property. Just as we defined a generator
$\omega_{F}$ for a weight $w_{F}$, so need we now to define a generator
$\mu_{F}$ for an orientation $m_{F}$.
\begin{defn}
We define the following binary weight function: \begin{eqnarray}
\mu_{F}\left(x,a,\Delta\right) & = & \begin{cases}
1 & \mbox{if \ensuremath{a}\,\ is the minimum of \ensuremath{\Delta}\,\ for \ensuremath{<_{F,x}}};\\
0 & \mbox{otherwise}\enskip.\end{cases}\end{eqnarray}
\begin{eqnarray}
m_{F}\left(\sigma,x\right) & = & \mu_{F}\left(x,\sigma\left(x\right),A_{F}\left(\sigma,x\right)\right)\enskip.\end{eqnarray}

\end{defn}
At each step of the recurrence, some variables are ordered while the
other ones are weighted. That leads us to introduce the following
definitions. We are going to substitute binary weights $m_{F}$'s
to original weights $w_{F}$'s variable per variable, so we call $\Xi$
the set of couples of (variables $x$, orders $<_{F,x}$) where $m_{F}$'s
are used and we define
\begin{defn}
\begin{eqnarray}
\Omega_{F}\left(\sigma,\Xi\right) & = & \prod_{x\in\Xi}m_{F}\left(\sigma,x\right)\prod_{x\in X\backslash\Xi}w_{F}\left(\sigma,x\right)\end{eqnarray}

and we extend it to a set $S$ of solutions by\begin{eqnarray}
\Omega_{F}\left(S,\Xi\right) & = & \sum_{\sigma\in S}\Omega_{F}\left(\sigma,\Xi\right)\enskip.\end{eqnarray}
\end{defn}
\begin{rem}
\label{rem:empty-Xsi} What happens when $\Xi$ is empty? \begin{eqnarray}
\Omega_{F}\left(\mathcal{S}\left(F\right),\emptyset\right) & = & W_{F}\left(\mathcal{S}\left(F\right)\right)\enskip.\end{eqnarray}
Namely, by definition, for any solution $\sigma\in\mathcal{S}\left(F\right)$,
$\Omega_{F}\left(\sigma,\emptyset\right)=\prod_{x\in X}w_{F}\left(\sigma,x\right)=W_{F}\left(\sigma\right)$.
But $\Omega_{F}\left(\mathcal{S}\left(F\right),\emptyset\right)=\sum_{\sigma\in\mathcal{S}\left(F\right)}\Omega_{F}\left(\sigma,\emptyset\right)$
and $W_{F}\left(\mathcal{S}\left(F\right)\right)=\sum_{\sigma\in\mathcal{S}\left(F\right)}W_{F}\left(\sigma\right)$.
\end{rem}

\begin{rem}
\label{rem:full-Xsi} What happens when $\Xi$ is full? Suppose that
for all variable $x$, $<_{F,x}$ is a total order on $D$. Let $\prec_{F}$
be the orientation induced by $\left\{ \left(x,<_{F,x}\right)\right\} _{x\in X}$.
Then \begin{eqnarray}
\Omega_{F}\left(\mathcal{S}\left(F\right),\left\{ \left(x,<_{F,x}\right)\right\} _{x\in X}\right) & = & \left|\mathcal{M}_{\prec_{F}}\left(F\right)\right|\enskip.\end{eqnarray}
Namely, let us recall that for any solution $\sigma\in\mathcal{S}\left(F\right)$,
$\Omega_{F}\left(\sigma,\left\{ \left(x,<_{F,x}\right)\right\} _{x\in X}\right)=\prod_{x\in X}m_{F}\left(\sigma,x\right)$.
Thus $\Omega_{F}\left(\sigma,\left\{ \left(x,<_{F,x}\right)\right\} _{x\in X}\right)=1$
iff $\forall x\in X$, $\sigma$ is the minimum of $N_{F}\left(\sigma,x\right)$
for $<_{F,x}$ (or equivalently for $\prec_{F}$); in other words
$\sigma$ is minimal among all of its neighbors, which means that
$\sigma$ is minimal (since $\prec_{F}$ compares neighboring solutions
only). Thus $\Omega_{F}\left(\left\{ \left(x,<_{F,x}\right)\right\} _{x\in X}\right)$
is the number of minimal elements of the underlying orientation $\prec_{F}$.
\end{rem}
We are now ready to state the main lemma in this section.
\begin{lem}
\label{lem:step-x0}Suppose that $s_{F}$ is unitary and $d_{F}=s_{F}$.
Then for each set $\Xi,$  each variable $x_{0}\notin\Xi$, there
exists a total order $<_{F,x_{0}}$ on $D$ such that $\Omega_{F}\left(\mathcal{S}\left(F\right),\Xi\cup\left\{ \left(x_{0},<_{F,x_{0}}\right)\right\} \right)\leq\Omega_{F}\left(\mathcal{S}\left(F\right),\Xi\right)$.
\end{lem}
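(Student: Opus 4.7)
The plan is to prove existence of a good order $<_{F,x_0}$ by an averaging argument: I will exhibit a probability distribution on total orders of $D$ such that the expectation of $\Omega_F(\mathcal{S}(F),\Xi\cup\{(x_0,<_{F,x_0})\})$ under this distribution equals exactly $\Omega_F(\mathcal{S}(F),\Xi)$. Some deterministic order must then attain a value at most this expectation, yielding the desired inequality.

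First I would exploit the homogeneity assumption $d_F=s_F$ together with the fact that $s_F$ is unitary to simplify equation (\ref{eq:omegaF}) into the explicit form
\begin{equation*}
\omega_F(x_0,a,\Delta)=\frac{s_F(x_0,a)}{\sum_{b\in\Delta}s_F(x_0,b)}\qquad\text{for }a\in\Delta,\ \Delta\neq\emptyset.
\end{equation*}
This is precisely a Plackett--Luce selection probability on $\Delta$, which suggests the random model to use.

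Next I would define a random total order $<_{F,x_0}$ on $D$ by sampling a sequence $(a_1,\dots,a_d)$ without replacement, where at each step the next value is drawn from what remains with probability proportional to $s_F(x_0,\cdot)$, and setting $a_1<_{F,x_0}a_2<_{F,x_0}\cdots<_{F,x_0}a_d$. The crucial property I would establish is independence of irrelevant alternatives: for every non-empty $\Delta\subseteq D$ and $a\in\Delta$,
\begin{equation*}
\Pr\bigl[a\text{ is the }<_{F,x_0}\text{-minimum of }\Delta\bigr]=\frac{s_F(x_0,a)}{\sum_{b\in\Delta}s_F(x_0,b)}=\omega_F(x_0,a,\Delta).
\end{equation*}
I would prove this by induction on $|D\setminus\Delta|$: the base case $\Delta=D$ is immediate by definition of the first draw, and the inductive step consists in conditioning on whether $a_1\in\Delta$, using $s_F$ being unitary.

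To conclude, I would partition $\mathcal{S}(F)$ into the $x_0$-cliques $N=N_F(\sigma,x_0)$ with value sets $\Delta=A_F(\sigma,x_0)$, and write, for each solution $\sigma$,
\begin{equation*}
P(\sigma)=\prod_{x\in\Xi}m_F(\sigma,x)\prod_{x\in X\setminus(\Xi\cup\{x_0\})}w_F(\sigma,x),
\end{equation*}
so that $\Omega_F(\mathcal{S}(F),\Xi)=\sum_N\sum_{\sigma\in N}P(\sigma)\,\omega_F(x_0,\sigma(x_0),\Delta)$. Once $<_{F,x_0}$ is fixed, in the spirit of remark \ref{rem:full-Xsi} applied only to coordinate $x_0$, we have $\Omega_F(\mathcal{S}(F),\Xi\cup\{(x_0,<_{F,x_0})\})=\sum_N P(\sigma^*_N)$, where $\sigma^*_N\in N$ is the unique solution whose $x_0$-value is $<_{F,x_0}$-minimal in $\Delta$. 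Taking expectation and applying the key property gives
\begin{equation*}
E\Bigl[\Omega_F(\mathcal{S}(F),\Xi\cup\{(x_0,<_{F,x_0})\})\Bigr]=\sum_N\sum_{\sigma\in N}P(\sigma)\,\omega_F(x_0,\sigma(x_0),\Delta)=\Omega_F(\mathcal{S}(F),\Xi),
\end{equation*}
whence some deterministic realization of $<_{F,x_0}$ must satisfy the inequality.

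The main obstacle is the independence-of-irrelevant-alternatives identity: it is classical for Plackett--Luce but needs a careful inductive argument. Everything else is bookkeeping; one should just note that although solutions in the same $x_0$-clique share their values on all $x\neq x_0$, their sets $A_F(\cdot,x)$ (and hence their factors $w_F(\cdot,x)$ and $m_F(\cdot,x)$ in $P$) may still depend on the particular representative, so $P(\sigma)$ cannot be factored out of the inner sum and the averaging must really be performed solution-wise, as done above.
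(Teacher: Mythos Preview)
Your proof is correct and takes a cleaner, probabilistic route than the paper's. The paper works deterministically: it introduces auxiliary sums $\zeta_{F,\Xi,x_0}(a,E)$ and $\xi_{F,\Xi,x_0}(E)$, proves a one-step removal lemma (Fact~\ref{lem:removal}) to the effect that the $d_F(x_0,\cdot)$-weighted average of $\zeta(a,E)+\xi(E\setminus\{a\})$ over $a\in E$ equals $\xi(E)$, takes the minimizer as the next smallest element, and iterates to build $a_1<_{F,x_0}\cdots<_{F,x_0}a_d$. Your Plackett--Luce sampling is the global, one-shot version of that same averaging: unfolding the paper's iterated identity with weights $d_F=s_F$ amounts exactly to computing the expectation under your random order, and Luce's choice axiom is the closed form you extract. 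What you gain is the single identity $\mathbf{E}[\mu_F(x_0,\sigma(x_0),\Delta)]=\omega_F(x_0,\sigma(x_0),\Delta)$ and essentially no auxiliary bookkeeping; what the paper's greedy argument buys is an explicit construction of the order and a recurrence (Fact~\ref{lem:recurrence-piF}) isolating exactly where homogeneity enters. One minor remark on your sketch: the induction for the choice property is most naturally phrased for arbitrary positive weights (equivalently, on the size of the ambient domain), since after conditioning on $a_1=c\notin\Delta$ the residual Plackett--Luce process on $D\setminus\{c\}$ no longer has unitary weights; this is harmless because the property itself does not rely on unitarity.
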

At first reading it might be convenient to jump directly to theorem
\ref{thm:orientation-better}, because the proof of lemma \ref{lem:step-x0}
is somewhat technical and requires some more notations and sub-lemmas.
We fix a variable $x_{0}\notin\Xi$. Let $a$ be an element of $D$
and $\Delta$ be a subset of $D$. We consider the preimages of $\left(a,\Delta\right)$
obtained through mapping a solution $\sigma$ of instance $F$ to
$\left(\sigma\left(x_{0}\right),A_{F}\left(\sigma,x_{0}\right)\right)$.
We denote these preimages as follows: \begin{eqnarray}
\Sigma_{F,x_{0}}\left(a,\Delta\right) & = & \left\{ \sigma\in\mathcal{S}\left(F\right),\sigma\left(x_{0}\right)=a\mbox{ and }A_{F}\left(\sigma,x_{0}\right)=\Delta\right\} \enskip.\end{eqnarray}
Note that:
\begin{enumerate}
\item when $a\notin\Delta$, $\Sigma_{F,x_{0}}\left(a,\Delta\right)=\emptyset$;
\item \label{enu:partition-adelta}the $\Sigma_{F,x_{0}}\left(a,\Delta\right)$
are pairwise disjoint and $\bigsqcup_{\substack{\Delta\subseteq D\\
a\in\Delta}
}\Sigma_{F,x_{0}}\left(a,\Delta\right)=\mathcal{S}\left(F\right)$;
\item \label{enu:uniform-Sigma}if $\sigma,\tau\in\Sigma_{F,x_{0}}\left(a,\Delta\right)$,
then $w_{F}\left(\sigma,x_{0}\right)=w_{F}\left(\tau,x_{0}\right)=\omega_{F}\left(x_{0},a,\Delta\right)$\\
and $m_{F}\left(\sigma,x_{0}\right)=m_{F}\left(\tau,x_{0}\right)=\mu_{F}\left(x_{0},a,\Delta\right)$;
\item \label{enu:Z}we call \begin{eqnarray}
Z_{F,\Xi,x_{0}}\left(a,\Delta\right) & = & \sum_{\sigma\in\Sigma_{F,x_{0}}\left(a,\Delta\right)}\prod_{x\in\Xi}m_{F}\left(\sigma,x\right)\prod_{x\in X\backslash\left(\Xi\cup\left\{ x_{0}\right\} \right)}w_{F}\left(\sigma,x\right)\enskip;\end{eqnarray}
then by item \ref{enu:uniform-Sigma}, \begin{eqnarray}
\Omega_{F}\left(\Sigma_{F,x_{0}}\left(a,\Delta\right),\Xi\right) & = & \omega_{F}\left(x_{0},a,\Delta\right)\cdot Z_{F,\Xi,x_{0}}\left(a,\Delta\right)\enskip;\label{eq:omegaZ}\\
\Omega_{F}\left(\Sigma_{F,x_{0}}\left(a,\Delta\right),\Xi\cup\left\{ \left(x_{0},<_{F,x_{0}}\right)\right\} \right) & = & \mu_{F}\left(x_{0},a,\Delta\right)\cdot Z_{F,\Xi,x_{0}}\left(a,\Delta\right)\enskip.\label{eq:muZ}\end{eqnarray}

\end{enumerate}
We now need to explore further both quantities we want to compare.
It will be convenient to use the following quantities: let $E\subseteq D$
and $a\in E$; we define the following quantities:\begin{eqnarray}
\zeta_{F,\Xi,x_{0}}\left(a,E\right) & = & \sum_{\substack{\Delta\subseteq E\\
\Delta\ni a}
}Z_{F,\Xi,x_{0}}\left(a,\Delta\right)\enskip;\\
\xi_{F,\Xi,x_{0}}\left(E\right) & = & \sum_{\Delta\subseteq E}\sum_{a\in\Delta}\omega_{F}\left(x_{0},a,\Delta\right)\cdot Z_{F,\Xi,x_{0}}\left(a,\Delta\right)\enskip.\end{eqnarray}

So what is the purpose of introducing these extra quantities? They
will help us prove lemma \ref{lem:step-x0} through the following
facts.
\begin{fact}
\label{lem:Xsi+x0}If $a_{1}<_{F,x_{0}}a_{2}<_{F,x_{0}}\dots<_{F,x_{0}}a_{d}$,
then \begin{eqnarray*}
\Omega_{F}\left(\mathcal{S}\left(F\right),\Xi\cup\left\{ \left(x_{0},<_{F,x_{0}}\right)\right\} \right) & = & \sum_{i=1}^{d}\zeta_{F,\Xi,x_{0}}\left(a_{i},D\backslash\left\{ a_{1},\dots,a_{i-1}\right\} \right)\enskip.\end{eqnarray*}
\end{fact}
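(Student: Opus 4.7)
The plan is to expand the left-hand side using the partition from item~\ref{enu:partition-adelta} and the identity in equation \ref{eq:muZ}, and then recognize the resulting sum as the claimed sum of $\zeta$'s. Specifically, I would start from
\begin{eqnarray*}
\Omega_{F}\!\left(\mathcal{S}(F),\Xi\cup\{(x_{0},<_{F,x_{0}})\}\right) & = & \sum_{\substack{\Delta\subseteq D\\ a\in\Delta}}\Omega_{F}\!\left(\Sigma_{F,x_{0}}(a,\Delta),\Xi\cup\{(x_{0},<_{F,x_{0}})\}\right)\\
 & = & \sum_{\substack{\Delta\subseteq D\\ a\in\Delta}}\mu_{F}(x_{0},a,\Delta)\cdot Z_{F,\Xi,x_{0}}(a,\Delta),
\end{eqnarray*}
where the first equality uses that the $\Sigma_{F,x_{0}}(a,\Delta)$ partition $\mathcal{S}(F)$ and the second is equation \ref{eq:muZ}.

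The next step is to exploit the 0/1 nature of $\mu_{F}$: by definition, $\mu_{F}(x_{0},a,\Delta)=1$ if and only if $a$ is the $<_{F,x_{0}}$-minimum of $\Delta$, and $0$ otherwise. Writing $a=a_{i}$ for the index of this minimum, the condition that $a_{i}$ is the minimum of $\Delta$ is equivalent to $\Delta\ni a_{i}$ and $\Delta\subseteq\{a_{i},a_{i+1},\dots,a_{d}\}=D\setminus\{a_{1},\dots,a_{i-1}\}$. Reindexing the double sum by $i$ instead of $(a,\Delta)$ therefore gives
\begin{eqnarray*}
\Omega_{F}\!\left(\mathcal{S}(F),\Xi\cup\{(x_{0},<_{F,x_{0}})\}\right) & = & \sum_{i=1}^{d}\sum_{\substack{\Delta\subseteq D\setminus\{a_{1},\dots,a_{i-1}\}\\ \Delta\ni a_{i}}}Z_{F,\Xi,x_{0}}(a_{i},\Delta),
\end{eqnarray*}
which is exactly $\sum_{i=1}^{d}\zeta_{F,\Xi,x_{0}}(a_{i},D\setminus\{a_{1},\dots,a_{i-1}\})$ by definition of $\zeta$.

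This proof is essentially pure bookkeeping; there is no real obstacle. The only point requiring a moment of care is the equivalence between ``$a_{i}$ is the minimum of $\Delta$'' and the combined membership/inclusion condition on $\Delta$, which must be checked on both sides so that no $\Delta$ is counted twice or missed when we switch from summing over $(a,\Delta)$ to summing over $i$ and $\Delta\subseteq D\setminus\{a_{1},\dots,a_{i-1}\}$ containing $a_{i}$.
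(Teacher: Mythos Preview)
Your proof is correct and follows essentially the same route as the paper: partition $\mathcal{S}(F)$ via the $\Sigma_{F,x_{0}}(a,\Delta)$, apply equation~\ref{eq:muZ}, use that $\mu_{F}(x_{0},a,\Delta)$ is the indicator of ``$a$ is the $<_{F,x_{0}}$-minimum of $\Delta$'', and reindex by $i$ to recover the definition of $\zeta$. The equivalence you flag as the only delicate point is exactly the step the paper relies on as well.
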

\begin{proof}
We use the partition mentioned in item \ref{enu:partition-adelta}:

\begin{eqnarray*}
\Omega_{F}\left(\mathcal{S}\left(F\right),\Xi\cup\left\{ \left(x_{0},<_{F,x_{0}}\right)\right\} \right) & = & \Omega_{F}\left(\bigsqcup_{\substack{\Delta\subseteq D\\
a\in\Delta}
}\Sigma_{F,x_{0}}\left(a,\Delta\right),\Xi\cup\left\{ \left(x_{0},<_{F,x_{0}}\right)\right\} \right)\\
 & = & \sum_{\substack{\Delta\subseteq D\\
a\in\Delta}
}\Omega_{F}\left(\Sigma_{F,x_{0}}\left(a,\Delta\right),\Xi\cup\left\{ \left(x_{0},<_{F,x_{0}}\right)\right\} \right)\\
 & = & \sum_{\substack{\Delta\subseteq D\\
a\in\Delta}
}\mu_{F}\left(x_{0},a,\Delta\right)\cdot Z_{F,\Xi,x_{0}}\left(a,\Delta\right)\mbox{ by eq. \ref{eq:muZ}}\\
 & = & \sum_{a\in D}\sum_{\substack{\Delta\subseteq D\\
\Delta\ni a}
}\mu_{F}\left(x_{0},a,\Delta\right)\cdot Z_{F,\Xi,x_{0}}\left(a,\Delta\right)\\
 & = & \sum_{a\in D}\sum_{\substack{\Delta\subseteq D\\
\Delta\ni a}
}\boldsymbol{1}_{\mbox{\ensuremath{a}\,\ is the minimum of \ensuremath{\Delta}\,\ for \ensuremath{<_{F,x_{0}}}}}\cdot Z_{F,\Xi,x_{0}}\left(a,\Delta\right)\\
 & = & \sum_{i=1}^{d}\sum_{\substack{\Delta\subseteq D\backslash\left\{ a_{1},\dots,a_{i-1}\right\} \\
\Delta\ni a_{i}}
}Z_{F,\Xi,x_{0}}\left(a_{i},\Delta\right)\mbox{ since \ensuremath{a_{1}<_{F,x_{0}}\dots<_{F,x_{0}}a_{d}}}\\
 & = & \sum_{i=1}^{d}\zeta_{F,\Xi,x_{0}}\left(a_{i},D\backslash\left\{ a_{1},\dots,a_{i-1}\right\} \right)\enskip.\end{eqnarray*}
\end{proof}
\begin{fact}
\label{lem:Xsi}For all $x_{0}\notin\Xi$, $\Omega_{F}\left(\mathcal{S}\left(F\right),\Xi\right)=\xi_{F,\Xi,x_{0}}\left(D\right)$.\end{fact}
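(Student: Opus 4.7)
The plan is to follow the same partition strategy used in Fact \ref{lem:Xsi+x0}, but without the additional ordering constraint on $x_0$. Specifically, I would partition $\mathcal{S}(F)$ according to the pair $(\sigma(x_0), A_F(\sigma,x_0))$, which by item \ref{enu:partition-adelta} gives the disjoint union
\[
\mathcal{S}(F) \;=\; \bigsqcup_{\substack{\Delta\subseteq D\\ a\in\Delta}} \Sigma_{F,x_0}(a,\Delta).
\]
Since $\Omega_F(\cdot,\Xi)$ is additive over disjoint unions of solution sets (it is defined as a sum over solutions), this immediately yields
\[
\Omega_F(\mathcal{S}(F),\Xi) \;=\; \sum_{\substack{\Delta\subseteq D\\ a\in\Delta}} \Omega_F\bigl(\Sigma_{F,x_0}(a,\Delta),\Xi\bigr).
\]

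Next I would invoke equation \ref{eq:omegaZ}, which was derived precisely from the uniformity noted in item \ref{enu:uniform-Sigma} (all $\sigma \in \Sigma_{F,x_0}(a,\Delta)$ share the same value of $w_F(\sigma,x_0) = \omega_F(x_0,a,\Delta)$, so this factor pulls out of the sum defining $Z_{F,\Xi,x_0}(a,\Delta)$). Substituting gives
\[
\Omega_F(\mathcal{S}(F),\Xi) \;=\; \sum_{\substack{\Delta\subseteq D\\ a\in\Delta}} \omega_F(x_0,a,\Delta)\cdot Z_{F,\Xi,x_0}(a,\Delta),
\]
which is exactly the defining expression for $\xi_{F,\Xi,x_0}(D)$ after swapping the order of summation to $\sum_{\Delta\subseteq D}\sum_{a\in\Delta}$. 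I would close the proof by matching the two expressions.

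I do not anticipate any real obstacle: unlike Fact \ref{lem:Xsi+x0}, there is no indicator $\mu_F$ to evaluate and no total order to linearize over, so no case analysis on which element of $\Delta$ is minimal is required. The argument is essentially a bookkeeping identity that shows $\xi_{F,\Xi,x_0}(D)$ was defined precisely to capture $\Omega_F(\mathcal{S}(F),\Xi)$ after factoring out $x_0$. The only thing to keep in mind is that the hypothesis $x_0\notin\Xi$ is what makes equation \ref{eq:omegaZ} applicable, since $Z_{F,\Xi,x_0}$ is built from the product over $X\setminus(\Xi\cup\{x_0\})$ together with $m_F$-factors over $\Xi$, matching the decomposition of $\Omega_F(\cdot,\Xi)$ once the $w_F(\sigma,x_0)$ factor has been pulled out.
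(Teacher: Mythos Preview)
Your proposal is correct and follows essentially the same approach as the paper: partition $\mathcal{S}(F)$ via item~\ref{enu:partition-adelta}, use additivity of $\Omega_F$, apply equation~\ref{eq:omegaZ}, and identify the result with the definition of $\xi_{F,\Xi,x_0}(D)$. Your remark explaining why the hypothesis $x_0\notin\Xi$ is needed is a nice addition that the paper leaves implicit.
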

\begin{proof}
We use again the partition mentioned in item \ref{enu:partition-adelta}:
\begin{eqnarray*}
\Omega_{F}\left(\mathcal{S}\left(F\right),\Xi\right) & = & \Omega_{F}\left(\bigsqcup_{\substack{\Delta\subseteq D\\
a\in\Delta}
}\Sigma_{F,x_{0}}\left(a,\Delta\right),\Xi\right)\\
 & = & \sum_{\substack{\Delta\subseteq D\\
a\in\Delta}
}\Omega_{F}\left(\Sigma_{F,x_{0}}\left(a,\Delta\right),\Xi\right)\\
 & = & \sum_{\substack{\Delta\subseteq D\\
a\in\Delta}
}\omega_{F}\left(x_{0},a,\Delta\right)\cdot Z_{F,\Xi,x_{0}}\left(a,\Delta\right)\mbox{ by eq. \ref{eq:omegaZ}}\\
 & = & \xi_{F,\Xi,x_{0}}\left(D\right)\enskip.\end{eqnarray*}
\end{proof}
\begin{fact}
\label{lem:recurrence-piF}If $E\subseteq D$, $\Delta\subseteq E$,
$a\in\Delta$, $s_{F}$ is unitary and $d_{F}=s_{F}$ then\begin{eqnarray*}
\omega_{F}\left(x,a,\Delta\right)\sum_{b\in E}d_{F}\left(x,b\right) & = & d_{F}\left(x,a\right)+\omega_{F}\left(x,a,\Delta\right)\sum_{b\in E\backslash\Delta}d_{F}\left(x,b\right)\enskip.\end{eqnarray*}
\end{fact}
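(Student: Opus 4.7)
The plan is a direct algebraic verification that pivots on rewriting $\omega_F(x,a,\Delta)$ in a particularly clean closed form under the homogeneity assumption $d_F = s_F$.

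First I would observe that since $\Delta \subseteq E$, the sum on the left-hand side splits as $\sum_{b \in E} d_F(x,b) = \sum_{b \in \Delta} d_F(x,b) + \sum_{b \in E \setminus \Delta} d_F(x,b)$. Moving the $E \setminus \Delta$ contribution to the right-hand side, the identity we need reduces to the single claim
\begin{equation*}
\omega_F(x,a,\Delta) \sum_{b \in \Delta} d_F(x,b) = d_F(x,a).
\end{equation*}

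Second, I would unfold $\omega_F$ using equation \ref{eq:omegaF}. Since $a \in \Delta$, we have
\begin{equation*}
\omega_F(x,a,\Delta) = s_F(x,a) + \frac{d_F(x,a)}{\sum_{b \in \Delta} d_F(x,b)} \sum_{b \in D \setminus \Delta} s_F(x,b).
\end{equation*}
Specializing to the homogeneous case $d_F = s_F$ and factoring $s_F(x,a)/\sum_{b \in \Delta} s_F(x,b)$, this collapses to $\omega_F(x,a,\Delta) = s_F(x,a) \cdot \sum_{b \in D} s_F(x,b) / \sum_{b \in \Delta} s_F(x,b)$. Now the unitarity hypothesis on $s_F$ kicks in: $\sum_{b \in D} s_F(x,b) = 1$, so
\begin{equation*}
\omega_F(x,a,\Delta) = \frac{s_F(x,a)}{\sum_{b \in \Delta} s_F(x,b)} = \frac{d_F(x,a)}{\sum_{b \in \Delta} d_F(x,b)}.
\end{equation*}

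Multiplying by $\sum_{b \in \Delta} d_F(x,b)$ immediately yields the reduced identity $\omega_F(x,a,\Delta) \sum_{b \in \Delta} d_F(x,b) = d_F(x,a)$, completing the argument. There is no real obstacle here: the only subtlety is recognizing that both hypotheses (unitarity of $s_F$ and homogeneity $d_F = s_F$) are needed precisely to turn the two-term definition of $\omega_F$ into a single ratio, after which the claim is a one-line rearrangement.
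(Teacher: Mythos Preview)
Your proof is correct and follows essentially the same route as the paper: both reduce the claim to the key identity $\omega_F(x,a,\Delta)\sum_{b\in\Delta}d_F(x,b)=d_F(x,a)$, obtained by unfolding the definition of $\omega_F$ and invoking $d_F=s_F$ together with unitarity. The only cosmetic difference is that the paper multiplies first and then simplifies, whereas you first simplify $\omega_F$ to the ratio $d_F(x,a)/\sum_{b\in\Delta}d_F(x,b)$ and then multiply.
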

\begin{proof}
If $a\in\Delta$, then by equation \ref{eq:omegaF}, $\omega_{F}\left(x,a,\Delta\right)\sum_{b\in\Delta}d_{F}\left(x,b\right)=s_{F}\left(x,a\right)\sum_{b\in\Delta}d_{F}\left(x,b\right)+d_{F}\left(x,a\right)\sum_{b\in D\backslash\Delta}s_{F}\left(x,b\right)$.
By equality $d_{F}=s_{F}$ and the fact that $s_{F}$ is unitary,
we get $d_{F}\left(x,a\right)=\omega_{F}\left(x,a,\Delta\right)\sum_{b\in\Delta}d_{F}\left(x,b\right)$.\end{proof}
\begin{fact}
\label{lem:removal}Let $x_{0}\notin\Xi$ and $E$ any nonempty subset
of $D$. Suppose that $s_{F}$ is unitary and $d_{F}=s_{F}$. Then
there exists $a\in E$ such that $\xi_{F,\Xi,x_{0}}\left(E\right)\geq\zeta_{F,\Xi,x_{0}}\left(a,E\right)+\xi_{F,\Xi,x_{0}}\left(E\backslash\left\{ a\right\} \right)$.\end{fact}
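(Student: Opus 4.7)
The plan is to prove the inequality by an averaging argument: I will show that $\xi_{F,\Xi,x_0}(E)$ equals a convex combination of the quantities $\zeta_{F,\Xi,x_0}(a,E)+\xi_{F,\Xi,x_0}(E\setminus\{a\})$ over $a\in E$, which immediately yields the existence of some $a$ achieving the claimed inequality. Since the dispatcher takes strictly positive values, the natural coefficients will be $d_F(x_0,a)$, and the total coefficient will be $S=\sum_{b\in E}d_F(x_0,b)>0$.

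First I would multiply $\xi_{F,\Xi,x_0}(E)$ by $S$ and invoke Fact \ref{lem:recurrence-piF} (with $x=x_0$) inside the sum to rewrite
\[
\omega_F(x_0,a,\Delta)\cdot S \;=\; d_F(x_0,a)\;+\;\omega_F(x_0,a,\Delta)\sum_{b\in E\setminus\Delta}d_F(x_0,b),
\]
valid for every $\Delta\subseteq E$ and $a\in\Delta$. This splits $S\cdot\xi_{F,\Xi,x_0}(E)$ into two sums over $\Delta\subseteq E$ and $a\in\Delta$: one with factor $d_F(x_0,a)$, the other with factor $\omega_F(x_0,a,\Delta)\sum_{b\in E\setminus\Delta}d_F(x_0,b)$.

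Next I would regroup each piece. For the first sum, exchanging the order of summation gives
\[
\sum_{\Delta\subseteq E}\sum_{a\in\Delta}d_F(x_0,a)\,Z_{F,\Xi,x_0}(a,\Delta)=\sum_{a\in E}d_F(x_0,a)\sum_{\substack{\Delta\subseteq E\\\Delta\ni a}}Z_{F,\Xi,x_0}(a,\Delta)=\sum_{a\in E}d_F(x_0,a)\,\zeta_{F,\Xi,x_0}(a,E),
\]
directly by the definition of $\zeta_{F,\Xi,x_0}$. For the second sum, I pull out the variable $b\in E\setminus\Delta$, so summing over $\Delta\not\ni b$ with $\Delta\subseteq E$ amounts to summing over $\Delta\subseteq E\setminus\{b\}$; by the definition of $\xi_{F,\Xi,x_0}$ this contributes $\xi_{F,\Xi,x_0}(E\setminus\{b\})$:
\[
\sum_{b\in E}d_F(x_0,b)\sum_{\Delta\subseteq E\setminus\{b\}}\sum_{a\in\Delta}\omega_F(x_0,a,\Delta)\,Z_{F,\Xi,x_0}(a,\Delta)=\sum_{b\in E}d_F(x_0,b)\,\xi_{F,\Xi,x_0}(E\setminus\{b\}).
\]
Combining the two pieces and dividing by $S$ yields the identity
\[
\xi_{F,\Xi,x_0}(E)=\frac{\sum_{a\in E}d_F(x_0,a)\bigl[\zeta_{F,\Xi,x_0}(a,E)+\xi_{F,\Xi,x_0}(E\setminus\{a\})\bigr]}{\sum_{a\in E}d_F(x_0,a)}.
\]

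Finally, since $d_F(x_0,\cdot)$ is strictly positive, the right-hand side is a genuine weighted average; hence at least one $a\in E$ must satisfy $\zeta_{F,\Xi,x_0}(a,E)+\xi_{F,\Xi,x_0}(E\setminus\{a\})\le\xi_{F,\Xi,x_0}(E)$, which is the claim. The only nontrivial step is the bookkeeping in the second piece, where one must correctly justify that summing over pairs $(\Delta,b)$ with $b\in E\setminus\Delta$ is the same as, for each $b\in E$, summing over $\Delta\subseteq E\setminus\{b\}$; this is a clean Fubini swap, so no genuine obstacle should arise. The key conceptual ingredient is the homogeneity hypothesis $d_F=s_F$, without which Fact \ref{lem:recurrence-piF} would fail and the averaging identity would break.
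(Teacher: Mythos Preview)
Your proposal is correct and follows essentially the same route as the paper: multiply $\xi_{F,\Xi,x_0}(E)$ by $\sum_{b\in E}d_F(x_0,b)$, apply Fact~\ref{lem:recurrence-piF}, regroup the two resulting sums into $\sum_{a\in E}d_F(x_0,a)\,\zeta_{F,\Xi,x_0}(a,E)$ and $\sum_{a\in E}d_F(x_0,a)\,\xi_{F,\Xi,x_0}(E\setminus\{a\})$, and conclude by averaging. The only cosmetic difference is that the paper fixes a minimizer $a_0$ at the outset and ends with an inequality, whereas you first derive the weighted-average identity and then pick $a$; the underlying computation is identical.
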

\begin{proof}
Let us call $a_{0}$ an element of $E$ minimizing $\zeta_{F,\Xi,x_{0}}\left(a,E\right)+\xi_{F,\Xi,x_{0}}\left(E\backslash\left\{ a\right\} \right)$
when $a\in E$:

\begin{eqnarray*}
\xi_{F,\Xi,x_{0}}\left(E\right)\sum_{b\in E}d_{F}\left(x_{0},b\right) & = & \sum_{b\in E}d_{F}\left(x_{0},b\right)\sum_{\substack{\Delta\subseteq E\\
a\in\Delta}
}\omega_{F}\left(x_{0},a,\Delta\right)Z_{F,\Xi,x_{0}}\left(a,\Delta\right)\\
 & = & \sum_{\substack{\Delta\subseteq E\\
a\in\Delta}
}d_{F}\left(x_{0},a\right)Z_{F,\Xi,x_{0}}\left(a,\Delta\right)\\
 &  & +\sum_{\substack{\Delta\subseteq E\\
a\in\Delta}
}\omega_{F}\left(x_{0},a,\Delta\right)\sum_{b\in E\backslash\Delta}d_{F}\left(x_{0},b\right)Z_{F,\Xi,x_{0}}\left(a,\Delta\right)\mbox{ by fact \ref{lem:recurrence-piF}}\\
 & = & \sum_{\substack{\Delta\subseteq E\\
a\in\Delta}
}d_{F}\left(x_{0},a\right)Z_{F,\Xi,x_{0}}\left(a,\Delta\right)\\
 &  & +\sum_{\substack{\Delta\subseteq E\\
a\in E\backslash\Delta\\
b\in\Delta}
}\omega_{F}\left(x_{0},b,\Delta\right)d_{F}\left(x_{0},a\right)Z_{F,\Xi,x_{0}}\left(b,\Delta\right)\\
 & = & \sum_{a\in E}d_{F}\left(x_{0},a\right)\sum_{\substack{\Delta\subseteq E\\
\Delta\ni a}
}Z_{F,\Xi,x_{0}}\left(a,\Delta\right)\\
 &  & +\sum_{a\in E}d_{F}\left(x_{0},a\right)\sum_{\Delta\subseteq E\backslash\left\{ a\right\} }\sum_{b\in\Delta}\omega_{F}\left(x_{0},b,\Delta\right)Z_{F,\Xi,x_{0}}\left(b,\Delta\right)\\
 & = & \sum_{a\in E}d_{F}\left(x_{0},a\right)\left(\zeta_{F,\Xi,x_{0}}\left(a,E\right)+\xi_{F,\Xi,x_{0}}\left(E\backslash\left\{ a\right\} \right)\right)\\
 & \geq & \sum_{a\in E}d_{F}\left(x_{0},a\right)\left(\zeta_{F,\Xi,x_{0}}\left(a_{0},E\right)+\xi_{F,\Xi,x_{0}}\left(E\backslash\left\{ a_{0}\right\} \right)\right)\enskip.\end{eqnarray*}
That gives what we want since $\sum_{b\in E}d_{F}\left(x_{0},b\right)\neq0$
(by definition \ref{def:dispatchers}, dispatchers must be positive).
 
\end{proof}

\paragraph*{Proof of Lemma \ref{lem:step-x0}}

By fact \ref{lem:Xsi}, $\Omega_{F}\left(\mathcal{S}\left(F\right),\Xi\right)=\xi_{F,\Xi,x_{0}}\left(D\right)$.
From $D$ we successively remove what we call $a_{1},a_{2},\dots,a_{d}$
till we reach the empty set; applying at each step fact \ref{lem:removal}
yields that $\xi_{F,\Xi,x_{0}}\left(D\right)\geq\sum_{i=1}^{d}\zeta_{F,\Xi,x_{0}}\left(a_{i},D\backslash\left\{ a_{1},\dots,a_{i-1}\right\} \right)+\xi_{F,\Xi,x_{0}}\left(\emptyset\right)$.
By definition, $\xi_{F,\Xi,x_{0}}\left(\emptyset\right)=0$. What
order $<_{F,x_{0}}$ shall we choose on $D$?

Of course: $a_{1}<_{F,x_{0}}a_{2}<_{F,x_{0}}\dots<_{F,x_{0}}a_{d}$.

Then by fact \ref{lem:Xsi+x0}, $\Omega_{F}\left(\mathcal{S}\left(F\right),\Xi\cup\left\{ \left(x_{0},<_{F,x_{0}}\right)\right\} \right)=\sum_{i=1}^{d}\zeta_{F,\Xi,x_{0}}\left(a_{i},D\backslash\left\{ a_{1},\dots,a_{i-1}\right\} \right)$.
So in the end $\Omega_{F}\left(\mathcal{S}\left(F\right),\Xi\right)\geq\Omega_{F}\left(\mathcal{S}\left(F\right),\Xi\cup\left\{ \left(x_{0},<_{F,x_{0}}\right)\right\} \right)$.
\begin{thm}
\label{thm:orientation-better}For any instance $F$, any positive
and unitary weighting seed $s_{F}$, when $d_{F}=s_{F}$, there exists
an instance dependent orientation $\prec_{F}$ induced by a set $\left\{ \left(x,<_{F,x}\right)\right\} _{x\in X}$
of total orders on $D$, such that $\left|\mathcal{M}_{\prec_{F}}\left(F\right)\right|\leq W_{F}\left(\mathcal{S}\left(F\right)\right)$.\end{thm}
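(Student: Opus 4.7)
The key observation is that all the machinery is already in place: Remark~\ref{rem:empty-Xsi} tells us that the pure weighting value $W_F(\mathcal{S}(F))$ equals $\Omega_F(\mathcal{S}(F), \emptyset)$, and Remark~\ref{rem:full-Xsi} tells us that when $\Xi = \{(x,<_{F,x})\}_{x \in X}$ is ``full'' (i.e.\ a total order has been chosen for every variable), $\Omega_F(\mathcal{S}(F), \Xi)$ equals $|\mathcal{M}_{\prec_F}(F)|$, where $\prec_F$ is the orientation induced by $\Xi$. So the theorem amounts to bridging between these two endpoints while controlling the quantity $\Omega_F(\mathcal{S}(F), \Xi)$ monotonically.

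The plan is a straightforward induction on the number of variables for which we have fixed an order. Start from $\Xi_0 = \emptyset$. Enumerate $X = \{x_1, \dots, x_n\}$ in any order. At step $i$, assuming we have already chosen orders $<_{F,x_1}, \dots, <_{F,x_{i-1}}$ so that $\Xi_{i-1} = \{(x_j, <_{F,x_j})\}_{j < i}$, apply Lemma~\ref{lem:step-x0} with the current $\Xi_{i-1}$ and the fresh variable $x_0 := x_i \notin \Xi_{i-1}$; this produces a total order $<_{F,x_i}$ on $D$ such that
\begin{equation*}
\Omega_F\bigl(\mathcal{S}(F), \Xi_{i-1} \cup \{(x_i, <_{F,x_i})\}\bigr) \leq \Omega_F(\mathcal{S}(F), \Xi_{i-1}).
\end{equation*}
Set $\Xi_i = \Xi_{i-1} \cup \{(x_i, <_{F,x_i})\}$ and iterate. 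The hypotheses of Lemma~\ref{lem:step-x0} ($s_F$ unitary, $d_F = s_F$) hold throughout because they are fixed data of the theorem.

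Chaining the $n$ inequalities gives $\Omega_F(\mathcal{S}(F), \Xi_n) \leq \Omega_F(\mathcal{S}(F), \Xi_0)$. By Remark~\ref{rem:empty-Xsi} the right-hand side is $W_F(\mathcal{S}(F))$, and by Remark~\ref{rem:full-Xsi} applied to the full set $\Xi_n$ and the orientation $\prec_F$ it induces (which is circuit-free by Lemma~\ref{lem:circuit-free}), the left-hand side is $|\mathcal{M}_{\prec_F}(F)|$. This yields the desired inequality.

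There is essentially no obstacle once Lemma~\ref{lem:step-x0} is available; the proof is a telescoping argument. The only thing to double-check is that nothing prevents us from re-applying the lemma after each step: the lemma takes any $\Xi$ and any $x_0 \notin \Xi$, irrespective of how $\Xi$ was built, and its hypotheses depend only on $s_F$ and $d_F$, which do not change during the induction. Hence the recursion is well-founded and terminates after $n = |X|$ steps with a complete instance-dependent orientation satisfying the claim.
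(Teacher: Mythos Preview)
Your proof is correct and follows essentially the same approach as the paper: start from $\Xi=\emptyset$, repeatedly invoke Lemma~\ref{lem:step-x0} to add one variable at a time without increasing $\Omega_F(\mathcal{S}(F),\Xi)$, and then use Remarks~\ref{rem:empty-Xsi} and~\ref{rem:full-Xsi} to identify the two endpoints with $W_F(\mathcal{S}(F))$ and $|\mathcal{M}_{\prec_F}(F)|$ respectively. Your write-up is slightly more explicit (e.g.\ you note the circuit-freeness via Lemma~\ref{lem:circuit-free} and check that the lemma's hypotheses persist across steps), but the argument is the same telescoping induction as in the paper.
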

\begin{proof}
By remark \ref{rem:empty-Xsi}, $W_{F}\left(\mathcal{S}\left(F\right)\right)=\Omega_{F}\left(\mathcal{S}\left(F\right),\emptyset\right)$.
Starting with $\Xi=\emptyset$, we add elements $\left(x_{0},<_{F,x_{0}}\right)$
to $\Xi$ such that $\Omega_{F}\left(\mathcal{S}\left(F\right),\Xi\right)\geq\Omega_{F}\left(\mathcal{S}\left(F\right),\Xi\cup\left\{ \left(x_{0},<_{F,x_{0}}\right)\right\} \right)$,
which is possible by lemma \ref{lem:step-x0}. At the end of the process
we have thus $\Omega_{F}\left(\mathcal{S}\left(F\right),\emptyset\right)\geq\Omega_{F}\left(\mathcal{S}\left(F\right),\left\{ \left(x,<_{F,x}\right)\right\} _{x\in X}\right)$.
Let $\prec_{F}$ be the orientation on $\mathcal{S}\left(F\right)$
induced by $\left\{ \left(x,<_{F,x}\right)\right\} _{x\in X}$.\\
By remark \ref{rem:full-Xsi}, $\Omega_{F}\left(\mathcal{S}\left(F\right),\left\{ \left(x,<_{F,x}\right)\right\} _{x\in X}\right)=\left|\mathcal{M}_{\prec_{F}}\left(F\right)\right|$.
So $W_{F}\left(\mathcal{S}\left(F\right)\right)\geq\left|\mathcal{M}_{\prec_{F}}\left(F\right)\right|$.
\end{proof}
 Whether this theorem is true for heterogeneous weights remains an
open question.
\begin{rem*}
In the particular case of boolean satisfiability (i.e. when $D=\left\{ 0,1\right\} $),
there is no choice on $d_{F}$: the weighting system is necessarily
homogeneous. Thus in this case weighting is not better than ordering.
\end{rem*}

\subsection{Instance Independent Case: Ordering and Weighting Are Equivalent}
\begin{defn}
The weight of a CSP instance $F$ is: \begin{equation}
\gamma\left(F\right)=W_{F}\left(\mathcal{S}\left(F\right)\right)\enskip.\end{equation}
By extension, the weight of a set $\mathcal{F}$ of CSP instances
is: \begin{equation}
\gamma\left(\mathcal{F}\right)=\sum_{F\in\mathcal{F}}\gamma\left(F\right)\enskip.\end{equation}

\end{defn}
A \emph{permutation} over the domain of values is a bijection $\pi:D\rightarrow D$.
A \emph{renaming} of values is a family of permutations $\Pi=\left(\pi_{x}\right)_{x\in X}$
over the domain $D$. For a CSP instance $F$, let $\Pi\left(F\right)$
be the instance where every occurrence of a value $a$ for every variable
$x$ are replaced by $\pi_{x}\left(a\right)$. A set of CSP instances
$\mathcal{F}$ is said to be \emph{closed under renaming} if for any
renaming $\Pi$, if $F\in\mathcal{F}$ then $\Pi\left(F\right)\in\mathcal{F}$.
By abuse of notation, for any valuation $v$, we denote by $\Pi\left(v\right)$
the valuation that assigns value $\pi_{x}\left(v\left(x\right)\right)$
to variable $x$.

Let us first give a very simple yet useful fact:
\begin{fact}
\label{lem:tricky}Let $\Pi$ be a renaming, $F$ and $G$ be CSP
instances. Then
\begin{enumerate}
\item $\sigma\in\mathcal{S}\left(F\right)$ iff $\Pi\left(\sigma\right)\in\mathcal{S}\left(\Pi\left(F\right)\right)$;
\item $A_{\Pi\left(F\right)}\left(\Pi\left(\sigma\right),x\right)=\pi_{x}\left(A_{F}\left(\sigma,x\right)\right)$.
\end{enumerate}
\end{fact}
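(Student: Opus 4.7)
The plan is to derive both items by unpacking the definition of $\Pi\left(F\right)$. Every constraint $\left\langle \left(x_{1},\dots,x_{k}\right),R\right\rangle $ of $F$ becomes $\left\langle \left(x_{1},\dots,x_{k}\right),R_{\Pi}\right\rangle $ in $\Pi\left(F\right)$, where $R_{\Pi}=\left\{ \left(\pi_{x_{1}}\left(a_{1}\right),\dots,\pi_{x_{k}}\left(a_{k}\right)\right)\,:\,\left(a_{1},\dots,a_{k}\right)\in R\right\} $: replacing every occurrence of a value $a$ for variable $x_{i}$ by $\pi_{x_{i}}\left(a\right)$ acts on the allowed tuples of $R$ in exactly this coordinate-wise fashion. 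This is the single observation everything else rests upon.

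For item 1, the above characterization of $R_{\Pi}$ gives, constraint by constraint, that $\left(\sigma\left(x_{1}\right),\dots,\sigma\left(x_{k}\right)\right)\in R$ iff $\left(\Pi\left(\sigma\right)\left(x_{1}\right),\dots,\Pi\left(\sigma\right)\left(x_{k}\right)\right)\in R_{\Pi}$, since each $\pi_{x_{i}}$ is a bijection. Conjoining over all constraints of $F$ yields that $\sigma\in\mathcal{S}\left(F\right)$ iff $\Pi\left(\sigma\right)\in\mathcal{S}\left(\Pi\left(F\right)\right)$.

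For item 2, I would first record the commutation identity $\Pi\left(\sigma\right)_{x\gets b}=\Pi\left(\sigma_{x\gets\pi_{x}^{-1}\left(b\right)}\right)$, which is immediate because $\Pi$ acts on each coordinate independently via its own permutation $\pi_{x}$. Using this and item 1 in succession, $b\in A_{\Pi\left(F\right)}\left(\Pi\left(\sigma\right),x\right)$ iff $\Pi\left(\sigma\right)_{x\gets b}$ is a solution of $\Pi\left(F\right)$, iff $\sigma_{x\gets\pi_{x}^{-1}\left(b\right)}$ is a solution of $F$, iff $\pi_{x}^{-1}\left(b\right)\in A_{F}\left(\sigma,x\right)$, iff $b\in\pi_{x}\left(A_{F}\left(\sigma,x\right)\right)$. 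Double inclusion then gives the claimed equality.

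There is no real mathematical obstacle: both items are bookkeeping about how a per-variable relabeling of values interacts with constraint satisfaction and with $x$-adjacency. The only care required is to keep track that the permutations $\pi_{x}$ are attached to individual variables (so that $\pi_{x}^{-1}$ is applied only on the single coordinate being modified) and to exploit bijectivity of each $\pi_{x}$ to make the \emph{iff} go through in both directions.
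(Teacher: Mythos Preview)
Your argument is correct: both items are indeed straightforward consequences of unfolding the definition of $\Pi\left(F\right)$, and your use of the commutation $\Pi\left(\sigma\right)_{x\gets b}=\Pi\left(\sigma_{x\gets\pi_{x}^{-1}\left(b\right)}\right)$ together with item~1 is the natural way to obtain item~2. The paper does not actually give a proof of this fact (it is merely stated as ``a very simple yet useful fact''), so there is nothing to compare against; your write-up simply supplies the routine verification the authors omitted.
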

Note that almost all sets of CSP instances we know to be dealt with
in the literature are closed under renaming.

Let $\mathcal{F}$ be some set of instances closed under renaming.
We prove in the sequel that $\gamma\left(\mathcal{F}\right)=\sum_{F\in\mathcal{F}}\left|\mathcal{M}_{\prec}\left(F\right)\right|$
for any instance independent orientation $\prec$ on solutions as
defined in section \ref{sub:Partial-ordering}. That can be interpreted
as follows: on average on $\mathcal{F}$, the weight of all solutions
is equal to the number of minimal solutions, independently of the
orientation $\prec$. The proof idea is to partition the couples (solutions,
instances) in a way that the weight of each class of the partition
has a weight of $1$ and corresponds to a minimal element for $\prec$.

We define the set $\mathcal{C}$ of couples $\left(\sigma,F\right)$
where $F$ is an element of $\mathcal{F}$ and $\sigma$ a solution
of $F$:\begin{eqnarray}
\mathcal{C} & = & \left\{ \left(\sigma,F\right)\right\} _{\substack{F\in\mathcal{F}\\
\sigma\in\mathcal{S}\left(F\right)}
}\enskip.\end{eqnarray}

$\gamma\left(\mathcal{F}\right)$ can be written as \begin{equation}
\gamma\left(\mathcal{F}\right)=\sum_{\left(\sigma,F\right)\in\mathcal{C}}W_{F}\left(\sigma\right)\enskip.\end{equation}

For some variable $x$ and some valuations $v_{1}$ and $v_{2}$,
we define the permutation $\pi_{x,v_{1},v_{2}}$ on $D$ as the transposition
which swaps $v_{1}\left(x\right)$ and $v_{2}\left(x\right)$, and
the renaming $\Pi_{v_{1},v_{2}}$ as the collection of these permutations,
variable per variable:\begin{eqnarray}
\pi_{x,v_{1},v_{2}}\left(a\right) & = & \begin{cases}
v_{1}\left(x\right) & \mbox{if }a=v_{2}\left(x\right)\\
v_{2}\left(x\right) & \mbox{if }a=v_{1}\left(x\right)\\
a & \mbox{otherwise}\end{cases}\enskip;\\
\Pi_{v_{1},v_{2}} & = & \left(\pi_{x,v_{1},v_{2}}\right)_{x\in X}\enskip.\end{eqnarray}

Note that these definitions are symmetric in $v_{1}$ and $v_{2}$.
Moreover note that $\Pi_{v_{1},v_{2}}\left(v_{1}\right)=v_{2}$ and
$\Pi_{v_{1},v_{2}}\left(v_{2}\right)=v_{1}$.

Consider a formula $F\in\mathcal{F}$ and a solution $\tau$ of $F$.
We denote by $\chi_{F}\left(\tau\right)$ the set of valuations $\sigma$
assigning each variable $x$ one of the values in the set $A_{F}\left(\tau,x\right)$:

\begin{eqnarray}
\chi_{F}\left(\tau\right) & = & \prod_{x\in X}A_{F}\left(\tau,x\right)\enskip.\end{eqnarray}

When $\tau$ is a solution of $F$, we denote by $C\left(\tau,F\right)$
the set of all renamings of $\left(\tau,F\right)$ ranging in $\chi_{F}\left(\tau\right)$: 

\begin{equation}
C\left(\tau,F\right)=\left\{ \left(\sigma,\Pi_{\sigma,\tau}\left(F\right)\right)\right\} _{\sigma\in\chi_{F}\left(\tau\right)}\enskip.\end{equation}

\begin{lem}
\label{lem:conservation-of-A}If $\tau$ is a solution of $F$ and
$\sigma\in\chi_{F}\left(\tau\right)$, then $\sigma$ is a solution
of $G=\Pi_{\sigma,\tau}\left(F\right)$ and for all variable $x\in X$,
$A_{G}\left(\sigma,x\right)=A_{F}\left(\tau,x\right)$.\end{lem}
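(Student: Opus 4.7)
The plan is to derive both claims directly from fact \ref{lem:tricky} applied to the specific renaming $\Pi_{\sigma,\tau}$. The central observation is that the transpositions making up $\Pi_{\sigma,\tau}$ act on $D$ by swapping $\sigma(x)$ with $\tau(x)$ variable by variable, while leaving everything else in $D$ fixed; hence $\Pi_{\sigma,\tau}(\tau)=\sigma$ by construction. This single identity takes care of the first claim immediately: since $\tau\in\mathcal{S}(F)$, item~1 of fact \ref{lem:tricky} with $\Pi=\Pi_{\sigma,\tau}$ gives $\sigma=\Pi_{\sigma,\tau}(\tau)\in\mathcal{S}(\Pi_{\sigma,\tau}(F))=\mathcal{S}(G)$.

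For the second claim, I would invoke item~2 of fact \ref{lem:tricky}, which yields
\begin{equation*}
A_G(\sigma,x)\;=\;A_{\Pi_{\sigma,\tau}(F)}\bigl(\Pi_{\sigma,\tau}(\tau),x\bigr)\;=\;\pi_{x,\sigma,\tau}\bigl(A_F(\tau,x)\bigr).
\end{equation*}
So the whole task reduces to showing that the transposition $\pi_{x,\sigma,\tau}$ fixes the set $A_F(\tau,x)$ as a set. Since $\pi_{x,\sigma,\tau}$ only moves the two values $\sigma(x)$ and $\tau(x)$, it suffices to check that both of them belong to $A_F(\tau,x)$: the inclusion $\tau(x)\in A_F(\tau,x)$ is immediate because $\tau$ is $x$-adjacent to itself, and the inclusion $\sigma(x)\in A_F(\tau,x)$ is exactly the hypothesis $\sigma\in\chi_F(\tau)=\prod_{y\in X}A_F(\tau,y)$.

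I expect no real obstacle beyond bookkeeping: the proof is essentially a routine application of fact \ref{lem:tricky} combined with the remark that the transposition involved swaps two elements that are already both present in $A_F(\tau,x)$. The only slightly delicate point is to remember that fact \ref{lem:tricky}'s item~2 produces $\pi_x(A_F(\tau,x))$ rather than $A_F(\tau,x)$ directly, so one must explicitly invoke the set-stability under the transposition before concluding equality.
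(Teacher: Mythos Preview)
Your proposal is correct and follows essentially the same approach as the paper: apply fact~\ref{lem:tricky} with $\Pi=\Pi_{\sigma,\tau}$ and $\Pi_{\sigma,\tau}(\tau)=\sigma$ to get both claims, then note that the transposition $\pi_{x,\sigma,\tau}$ stabilizes $A_F(\tau,x)$ because both swapped values $\sigma(x)$ and $\tau(x)$ already lie in that set.
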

\begin{proof}
Since $G=\Pi_{\sigma,\tau}\left(F\right)$, $\sigma=\Pi_{\sigma,\tau}\left(\tau\right)$
and $\tau$ is a solution of $F$, by fact \ref{lem:tricky} we know
that $\sigma$ is a solution of $G$. Moreover by fact \ref{lem:tricky},
for every variable $x$, $A_{G}\left(\sigma,x\right)=\pi_{x,\sigma,\tau}\left(A_{F}\left(\tau,x\right)\right)$.
By definition of $\chi_{F}$, $\sigma\left(x\right)\in A_{F}\left(\tau,x\right)$.
Since $\pi_{x,\sigma,\tau}$ swaps two values $\tau\left(x\right)$
and $\sigma\left(x\right)$ that are both elements of $A_{F}\left(\tau,x\right)$,
$\pi_{x,\sigma,\tau}\left(A_{F}\left(\tau,x\right)\right)=A_{F}\left(\tau,x\right)$,
hence $A_{G}\left(\sigma,x\right)=A_{F}\left(\tau,x\right)$.\end{proof}
\begin{lem}
\label{lem:partition-of-C}The set $\left\{ C\left(\tau,F\right)\right\} _{\substack{F\in\mathcal{F}\\
\tau\in\mathcal{M}_{\prec}\left(F\right)}
}$ is a partition of $\mathcal{C}$.\end{lem}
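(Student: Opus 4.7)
The plan is to establish three properties: every $C(\tau,F)$ with $\tau\in\mathcal{M}_\prec(F)$ is contained in $\mathcal{C}$, the union of all such sets covers $\mathcal{C}$, and any two of these sets are either equal or disjoint. The heart of the argument is a canonical construction that, given $(\sigma,F)\in\mathcal{C}$, produces exactly one pair $(\tau,G)$ with $\tau\in\mathcal{M}_\prec(G)$ such that $(\sigma,F)\in C(\tau,G)$. Instance independence of $\prec$ (each variable $x$ carries a fixed total order $<_x$ on $D$, the same across all instances) is precisely what makes this construction well-defined.

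Containment is immediate: for $(\sigma',F')\in C(\tau,F)$, closure of $\mathcal{F}$ under renaming gives $F'=\Pi_{\sigma',\tau}(F)\in\mathcal{F}$, and lemma \ref{lem:conservation-of-A} already states that $\sigma'\in\mathcal{S}(F')$.

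For the covering property, given $(\sigma,F)\in\mathcal{C}$ I would define $\tau$ coordinate-wise by setting $\tau(x)$ to be the $<_x$-minimum of $A_F(\sigma,x)$, and let $G=\Pi_{\sigma,\tau}(F)$. Four observations then need to be chained: (i) $G\in\mathcal{F}$ by closure under renaming; (ii) since $\Pi_{\sigma,\tau}(\sigma)=\tau$, fact \ref{lem:tricky} ensures $\tau\in\mathcal{S}(G)$; (iii) because both $\sigma(x)$ and $\tau(x)$ lie in $A_F(\sigma,x)$, the transposition $\pi_{x,\sigma,\tau}$ fixes this set, so by fact \ref{lem:tricky} we get $A_G(\tau,x)=A_F(\sigma,x)$, and hence $\tau(x)$ is the $<_x$-minimum of $A_G(\tau,x)$, giving $\tau\in\mathcal{M}_\prec(G)$; (iv) $\Pi_{\sigma,\tau}$ is an involution, so $\Pi_{\sigma,\tau}(G)=F$, and $\sigma(x)\in A_F(\sigma,x)=A_G(\tau,x)$ puts $\sigma\in\chi_G(\tau)$, so $(\sigma,F)\in C(\tau,G)$.

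For disjointness, suppose $(\sigma,F)\in C(\tau,G)\cap C(\tau',G')$ with $\tau\in\mathcal{M}_\prec(G)$ and $\tau'\in\mathcal{M}_\prec(G')$. Applying lemma \ref{lem:conservation-of-A} on each side yields $A_G(\tau,x)=A_F(\sigma,x)=A_{G'}(\tau',x)$; minimality with respect to the instance-independent order $<_x$ then forces both $\tau(x)$ and $\tau'(x)$ to equal the $<_x$-minimum of $A_F(\sigma,x)$, so $\tau=\tau'$, and consequently $G=\Pi_{\sigma,\tau}(F)=\Pi_{\sigma,\tau'}(F)=G'$. The delicate step is (iii) in the covering argument, which is the technical kernel of the proof: without instance independence, different renamings could come equipped with different orders and the coordinate-wise minimum argument would fail, so this hypothesis is used in an essential way.
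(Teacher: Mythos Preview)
Your proof is correct and follows essentially the same route as the paper's. The paper phrases the argument as ``every $(\sigma,G)\in\mathcal{C}$ lies in $C(\tau,F)$ for a unique pair $(\tau,F)$ with $\tau\in\mathcal{M}_\prec(F)$'', which is exactly your covering-plus-disjointness decomposition; the canonical construction (take $\tau(x)$ to be the $<_x$-minimum of the local clique, then rename) and the use of instance independence to force uniqueness are identical. The only cosmetic differences are that the paper swaps the roles of the names $F,G$ relative to yours, and in your step~(iii) you re-derive inline via fact~\ref{lem:tricky} what the paper obtains by citing lemma~\ref{lem:conservation-of-A} with the roles of $\sigma$ and $\tau$ exchanged.
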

\begin{proof}
If $\left(\sigma,G\right)\in C\left(\tau,F\right)$, then by lemma
\ref{lem:conservation-of-A}, $\sigma$ is a solution of $G$. Moreover,
by closure of $\mathcal{F}$ under renaming, $G\in\mathcal{F}$. Thus
$C\left(\tau,F\right)\subseteq\mathcal{C}$. Now it is sufficient
to prove that $\forall\left(\sigma,G\right)\in\mathcal{C}$ there
exists a unique $\left(\tau,F\right)$ where $F\in\mathcal{F}$, $\tau$
is a minimal solution of $F$ and $\left(\sigma,G\right)\in C\left(\tau,F\right)$.
\begin{itemize}
\item Existence of $\left(\tau,F\right)$: for every $x$, let $\tau\left(x\right)$
be the minimal value in $A_{G}\left(\sigma,x\right)$ according to
the order $<_{x}$ underlying $\prec$; by construction, $\tau\in\chi_{G}\left(\sigma\right)$.
Consider the renaming $\Pi_{\sigma,\tau}$ and let $F=\Pi_{\sigma,\tau}\left(G\right)$.
By lemma \ref{lem:conservation-of-A}, $\tau$ is a solution of $F$
and for all variable $x$, $A_{F}\left(\tau,x\right)=A_{G}\left(\sigma,x\right)$.
Since for all $x\in X$, $\tau\left(x\right)$ is the minimal value
in $A_{F}\left(\tau,x\right)$, $\tau$ is minimal for the orientation
$\prec$. Moreover for all $x$, $\sigma\left(x\right)\in A_{F}\left(\tau,x\right)$,
thus $\sigma\in\chi_{F}\left(\tau\right)$; and since $G=\Pi_{\sigma,\tau}\left(F\right)$,
we have that $\left(\sigma,G\right)\in C\left(\tau,F\right)$. 
\item Uniqueness of $\left(\tau,F\right)$: let $\left(\tau',F'\right)$
be such that $C\left(\tau',F'\right)\ni\left(\sigma,G\right)$, i.e.
$\sigma\in\chi_{F}\left(\tau'\right)$ and $G=\Pi_{\sigma,\tau'}\left(F'\right)$;
then by lemma \ref{lem:conservation-of-A}, for all variable $x$,
$A_{G}\left(\sigma,x\right)=A_{F'}\left(\tau',x\right)$. By minimality
of $\tau'$, $\tau'\left(x\right)$ must be the minimum of $A_{G}\left(\sigma,x\right)$
for each variable $x$.
\end{itemize}
\end{proof}
\begin{lem}
\label{lem:inner-sum} Suppose that the weight $w_{F}$ is obtained
from a unitary and instance independent generator $\omega$. Let $\left(\tau,F\right)$
be an element of $\mathcal{C}$; then $\sum_{\left(\sigma,G\right)\in C\left(\tau,F\right)}W_{G}\left(\sigma\right)=1$.\end{lem}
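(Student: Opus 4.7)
The plan is to exploit the rigidity that lemma \ref{lem:conservation-of-A} imposes on the neighborhoods $A_{G}\left(\sigma,x\right)$ as $(\sigma,G)$ ranges over $C\left(\tau,F\right)$: they do not depend on $\sigma$ at all, but only on the representative pair $(\tau,F)$. Combined with instance independence of the generator $\omega$, this will convert the sum over $C\left(\tau,F\right)$ into a product, over variables, of independent sums, each of which collapses to $1$ by unitarity of $\omega$.

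First I would index $C\left(\tau,F\right)$ by its defining parameter: the map $\sigma\mapsto\left(\sigma,\Pi_{\sigma,\tau}\left(F\right)\right)$ is a bijection from $\chi_{F}\left(\tau\right)=\prod_{x\in X}A_{F}\left(\tau,x\right)$ onto $C\left(\tau,F\right)$ by construction. For each such pair, writing $G=\Pi_{\sigma,\tau}\left(F\right)$, lemma \ref{lem:conservation-of-A} yields $A_{G}\left(\sigma,x\right)=A_{F}\left(\tau,x\right)$ for every $x\in X$. Because $\omega$ is instance independent, this forces
\[
w_{G}\left(\sigma,x\right)=\omega\left(x,\sigma\left(x\right),A_{G}\left(\sigma,x\right)\right)=\omega\left(x,\sigma\left(x\right),A_{F}\left(\tau,x\right)\right),
\]
so that $W_{G}\left(\sigma\right)=\prod_{x\in X}\omega\left(x,\sigma\left(x\right),A_{F}\left(\tau,x\right)\right)$, with the important feature that the third argument of $\omega$ no longer depends on $\sigma$.

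Next I would expand the sum by the usual product-of-sums identity over a finite Cartesian product:
\[
\sum_{(\sigma,G)\in C(\tau,F)}W_{G}\left(\sigma\right)=\sum_{\sigma\in\prod_{x}A_{F}\left(\tau,x\right)}\prod_{x\in X}\omega\left(x,\sigma\left(x\right),A_{F}\left(\tau,x\right)\right)=\prod_{x\in X}\sum_{a\in A_{F}\left(\tau,x\right)}\omega\left(x,a,A_{F}\left(\tau,x\right)\right).
\]
Since $\tau$ is a solution, each clique $A_{F}\left(\tau,x\right)$ is nonempty (it contains $\tau\left(x\right)$), so unitarity of $\omega$ makes every inner sum equal to $1$, and the product is $1$.

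There is no real obstacle here: the substantive work has already been done in lemma \ref{lem:conservation-of-A}, which pins down that every element of $C\left(\tau,F\right)$ shares the same variable-neighborhoods as the representative $\left(\tau,F\right)$. What remains is a routine Fubini-style interchange, made possible by the product structure of $W_{G}$ together with the fact that instance independence lets us freely read off $\omega$ at the single set $A_{F}\left(\tau,x\right)$ regardless of which renaming $G$ we are in.
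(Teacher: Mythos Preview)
Your proposal is correct and follows essentially the same route as the paper: invoke lemma~\ref{lem:conservation-of-A} to replace each $A_{G}(\sigma,x)$ by $A_{F}(\tau,x)$, use instance independence of $\omega$ to reduce $W_{G}(\sigma)$ to a product depending only on $\sigma(x)$ and $A_{F}(\tau,x)$, then swap sum and product over $\chi_{F}(\tau)$ and apply unitarity. Your explicit mention of the bijection $\sigma\mapsto(\sigma,\Pi_{\sigma,\tau}(F))$ and of the nonemptiness of $A_{F}(\tau,x)$ are minor clarifications the paper leaves implicit, but the argument is the same.
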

\begin{proof}
First note that by lemma \ref{lem:conservation-of-A}, for all $\left(\sigma,G\right)\in C\left(\tau,F\right)$,
we have $A_{G}\left(\sigma,x\right)=A_{F}\left(\tau,x\right)$. Thus:\begin{eqnarray*}
\sum_{\left(\sigma,G\right)\in C\left(\tau,F\right)}W_{G}\left(\sigma\right) & = & \sum_{\left(\sigma,G\right)\in C\left(\tau,F\right)}\prod_{x\in X}w_{G}\left(\sigma,x\right)\\
 & = & \sum_{\left(\sigma,G\right)\in C\left(\tau,F\right)}\prod_{x\in X}\omega\left(x,\sigma\left(x\right),A_{G}\left(\sigma,x\right)\right)\mbox{ since \ensuremath{\omega}\,\ is instance independent}\\
 & = & \sum_{\sigma\in\chi_{F}\left(\tau\right)}\prod_{x\in X}\omega\left(x,\sigma\left(x\right),A_{F}\left(\tau,x\right)\right)\mbox{ since \ensuremath{A_{G}\left(\sigma,x\right)=A_{F}\left(\tau,x\right)}}\\
 & = & \prod_{x\in X}\sum_{\sigma\left(x\right)\in A_{F}\left(\tau,x\right)}\omega\left(x,\sigma\left(x\right),A_{F}\left(\tau,x\right)\right)\\
 & = & \prod_{x\in X}1\mbox{ since \ensuremath{\omega}\,\ is unitary}\\
 & = & 1\enskip.\end{eqnarray*}

\end{proof}

\begin{thm}
\label{thm:set-equality}Let $\mathcal{F}$ be a set of CSP instances
which is closed under renaming. Let $w_{F}$ be a weighting system
built from a unitary and instance independent weight generator $\omega$.
Let $\prec$ be an instance independent orientation. Then it holds
that $\sum_{F\in\mathcal{F}}\left|\mathcal{M}_{\prec}\left(F\right)\right|=\gamma\left(\mathcal{F}\right)$.\end{thm}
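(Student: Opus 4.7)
My plan is to assemble the three preceding lemmas into a short chain of equalities, so that the theorem becomes an immediate corollary. Starting from the definition of $\gamma$, I would write
\[
\gamma(\mathcal{F}) \;=\; \sum_{F \in \mathcal{F}} W_F(\mathcal{S}(F)) \;=\; \sum_{(\sigma,F) \in \mathcal{C}} W_F(\sigma),
\]
which just reorganises the total weight as a sum over solution/instance pairs. This reformulation is the reason $\mathcal{C}$ was introduced in the first place, so it costs nothing.

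The key step is then to reindex the sum via the partition of $\mathcal{C}$ supplied by lemma \ref{lem:partition-of-C}: the classes $C(\tau,F)$ are indexed by pairs $(\tau,F)$ with $F \in \mathcal{F}$ and $\tau \in \mathcal{M}_\prec(F)$. Reindexing gives
\[
\gamma(\mathcal{F}) \;=\; \sum_{F \in \mathcal{F}} \sum_{\tau \in \mathcal{M}_\prec(F)} \sum_{(\sigma,G) \in C(\tau,F)} W_G(\sigma).
\]
Lemma \ref{lem:inner-sum} (which uses that $\omega$ is unitary and instance independent, and that $A_G(\sigma,x) = A_F(\tau,x)$ throughout a class) then collapses every innermost sum to $1$. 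Substituting yields
\[
\gamma(\mathcal{F}) \;=\; \sum_{F \in \mathcal{F}} \sum_{\tau \in \mathcal{M}_\prec(F)} 1 \;=\; \sum_{F \in \mathcal{F}} |\mathcal{M}_\prec(F)|,
\]
which is exactly the claim.

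There is essentially no obstacle left, since the substantive combinatorics — building the renaming-based classes $C(\tau,F)$, verifying that each such class contains a unique minimal representative $\tau$, and carrying out the unitary bookkeeping inside a class — has already been absorbed into lemmas \ref{lem:conservation-of-A}, \ref{lem:partition-of-C} and \ref{lem:inner-sum}. The only point I would pause on is that the partition in lemma \ref{lem:partition-of-C} is indexed exactly by minimal solutions rather than by arbitrary solutions, because that is what turns the final count of $1$'s into $|\mathcal{M}_\prec(F)|$ instead of $|\mathcal{S}(F)|$; this is guaranteed by the uniqueness half of the proof of that lemma, where the representative $\tau$ of a class is forced to be the coordinatewise $<_x$-minimum of $A_G(\sigma,\cdot)$. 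With that observation, the three-line calculation above closes the theorem.
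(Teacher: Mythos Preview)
Your proposal is correct and follows essentially the same argument as the paper's own proof: expand $\gamma(\mathcal{F})$ as a sum over $\mathcal{C}$, reindex via the partition of lemma~\ref{lem:partition-of-C}, collapse each inner sum to $1$ by lemma~\ref{lem:inner-sum}, and read off $\sum_{F\in\mathcal{F}}|\mathcal{M}_\prec(F)|$. The paper presents exactly this chain of equalities with no additional ingredients.
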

\begin{proof}
It is a mere combination of lemmas \ref{lem:partition-of-C} and \ref{lem:inner-sum}:

\begin{eqnarray*}
\gamma\left(\mathcal{F}\right) & = & \sum_{F\in\mathcal{F}}\gamma\left(F\right)\\
 & = & \sum_{F\in\mathcal{F}}\sum_{\sigma\in\mathcal{S}\left(F\right)}W_{F}\left(\sigma\right)\\
 & = & \sum_{\left(\sigma,F\right)\in C}W_{F}\left(\sigma\right)\\
 & = & \sum_{\substack{F\in\mathcal{F}\\
\tau\in\mathcal{M}_{\prec}\left(F\right)}
}\sum_{\left(\sigma,G\right)\in C\left(\tau,F\right)}W_{G}\left(\sigma\right)\mbox{ by lemma \ref{lem:partition-of-C}}\\
 & = & \sum_{\substack{F\in\mathcal{F}\\
\tau\in\mathcal{M}_{\prec}\left(F\right)}
}1\mbox{ by lemma \ref{lem:inner-sum}}\\
 & = & \sum_{F\in\mathcal{F}}\left|\mathcal{M}_{\prec}\left(F\right)\right|\enskip.\end{eqnarray*}

\end{proof}

Closure under renaming involves symmetry, so it is not surprising
that on average all weightings on the one hand and all orderings on
the other hand should be equivalent. What is more surprising though,
is the fact that weightings and orderings are equivalent. This is
noteworthy because weights are simpler to handle in calculations (they
yield more compact and tractable formulas, see \citet{Boufkhad2010}).

\section{Conclusion and Perspectives}

Through our Weight Conservation Theorem we gave sufficient conditions
to have a correct weighting on solutions of CSPs. We were able to
apply it to two different weightings: the first one, which is very
general, was built from seeds and dispatchers; the second one was
specifically designed to improve on Maneva et al.'s weighting. Thanks
to this new weighting scheme, we obtained an improvement on the upper
bound on the existence of non-trivial cores obtained by Maneva et
al. to $4.419$.

We also showed an equivalence between weighting and ordering over
a set closed under renaming when they are instance independent. On
the contrary, when weighting and ordering may depend on instances,
we showed that given an homogeneous weighting it is possible to find
an ordering which is not worse, but what happens for heterogeneous
weightings? is it always possible to find for a given weighting a
corresponding ordering?

Other perspectives include: is it possible to define a correct non-uniform
weighting? how to generalize boolean partial valuations to general
CSPs? how to extend our weighting when considering neighbors of neighbors,
or more generally neighbors at bounded distance?

\bibliographystyle{elsart-num-names}
\bibliography{library}

\appendix

\section{Proof of Theorem \ref{thm:4.419}}

\subsection{First Moment of Cores }

We strongly advise the reader to read \citet{Maneva2008} before reading
our calculations, since we reuse all of the notations and arguments
from there. So we only highlight the similarities and the differences.

The calculation of \citet{Maneva2008} works in two steps:
\begin{enumerate}
\item compute an upper bound function on the first moment $f$ of covers
and discard the range of variables where $f<0$ (because if there
are no covers, then there are no cores either);
\item compute an upper bound function of the first moment $f+h$ of cores
and maximize it on the remaining domain of the variables.
\end{enumerate}
In particular they introduce some variables $s,t,u$; to these we
add $v$:
\begin{description}
\item [{$s$:}] the size of the cover or core i.e. the number of variables
in a controlled self-constrained set (they get a weight of $1$);
we denote them by symbols $x_{i}$ where $i\in SC=\{1,...,s\}$; 
\item [{$v$:}] number of invertible variables (they get a weight of $\frac{1-\rho}{2}$);
we denote them by symbols $x_{i}$ where $i\in I=\{s+1,...,s+t+v\}$;
\item [{$u-v$:}] proportion of starrable but non-invertible variables
(they get a weight of $1-\rho$); we denote them by symbols $x_{i}$
where $i\in SNI=\{s+t+v+1,...,s+u\}$;
\item [{$t-u$:}] number of non-starrable variables not in the previous
self-constrained set (they get a weight of $1$); we denote them by
symbols $x_{i}$ where $i\in NS=\{s+u+1,...,s+t\}$;
\item [{$n-s-t$:}] number of variables at value $*$ (they get a weight
of $\rho$); we denote them by symbols $x_{i}$ where $i\in S=\{s+1,...,s+t-u\}$.
\end{description}
$p$ is the probability for a clause of type 3 to be included in the
Poisson model. In table \ref{tab:Clauses} we sum up all possible
types of clauses in order to count them. We assume by symmetry that
we have an assignment with values in $\left\{ 0,*\right\} $ only
(no $1$'s). Note that clauses of type 1 and 2 are the same as in
\citet{Maneva2008} and are used in the expression of $f$ rather
than in the expression of $h$.

\begin{table}
\caption{\label{tab:Clauses}Clauses and their sizes.}

\centering{}\begin{tabular}{|c|c|c|c|c|}
\hline 
Types & Clause  & Sets of subscripts & Size & Status\tabularnewline
\hline
 & $x_{i}\vee x_{j}\vee x_{k}$ & $i,j,k\in SC$ & ${s \choose 3}$ & forbidden\tabularnewline
\cline{2-5} 
1 & $x_{i}\vee x_{j}\vee x_{k}$  & $i,j\in SC$, $k\in V\backslash SC$  & $2\left(n-s\right){s \choose 2}$ & forbidden\tabularnewline
 &  $x_{i}\vee x_{j}\vee\overline{x_{k}}$ &  &  & \tabularnewline
\hline 
2 & $x_{i}\vee x_{j}\vee\overline{x_{k}}$ & $i,j,k\in SC$ & ${s \choose 2}$  & At least one\tabularnewline
 &  &  & for each $x_{k}$ & for each $k\in SC$\tabularnewline
\hline
 & $x_{i}\vee x_{j}\vee x_{k}$ & $i\in SC\cup NS\cup I\cup SNI$,  & ${t \choose 3}+s{t \choose 2}$ & forbidden\tabularnewline
 &  & $j,k\in NS\cup I\cup SNI$ &  & \tabularnewline
\cline{2-5} 
 & $x_{i}\vee x_{j}\vee x_{k}$ & $i\in SC\cup NS\cup I\cup SNI$, & $2\left(n-s-t\right)\left({t \choose 2}+st\right)$ & forbidden\tabularnewline
 &  $x_{i}\vee x_{j}\vee\overline{x_{k}}$ & $j\in NS\cup I\cup SNI$  &  & \tabularnewline
 &  & $k\in S$ &  & \tabularnewline
\cline{2-5} 
3 & $x_{i}\vee x_{j}\vee\overline{x_{k}}$ & $i\in SC\cup NS\cup I\cup SNI$,  & $u\left({t \choose 2}+st\right)$ & forbidden\tabularnewline
 &  & $j\in NS\cup I\cup SNI$  &  & \tabularnewline
 &  & $k\in I\cup SNI$ &  & \tabularnewline
\cline{2-5} 
 & $ $$x_{i}\vee\overline{x_{j}}\vee x_{k}$  & $i\in SC\cup NS\cup I\cup SNI$ & $2\left(n-s-t\right)v\left(s+t\right)$ & forbidden\tabularnewline
 &  $x_{i}\vee\overline{x_{j}}\vee\overline{x_{k}}$ & $j\in I$  &  & \tabularnewline
 &  &  $k\in S$ &  & \tabularnewline
\cline{2-5} 
 & $x_{i}\vee x_{j}\vee\overline{x_{k}}$ & $i\in SC\cup NS\cup I\cup SNI$ & ${t \choose 2}+st$  & At least one \tabularnewline
 &  & $j\in NS\cup I\cup SNI$  & for each $x_{k}$ & for each $k\in NS$\tabularnewline
 &  &  $k\in NS$ &  & \tabularnewline
\cline{2-5} 
 & $x_{i}\vee\overline{x_{j}}\vee x_{k}$  & $i\in SC\cup NS\cup I\cup SNI$ & $2\left(n-s-t\right)\left(s+t\right)$ & At least one \tabularnewline
 &  $x_{i}\vee\overline{x_{j}}\vee\overline{x_{k}}$ & $j\in SNI$  & for each $x_{j}$ & for each $j\in SNI$\tabularnewline
 &  &  $k\in S$ &  & \tabularnewline
\hline
\end{tabular}
\end{table}

As \citet{Maneva2008} did, we first define the following quantity:\begin{eqnarray*}
Q & = & \left(1-p\right)^{{t \choose 3}+s{t \choose 2}+2\left(n-s-t\right)\left({t \choose 2}+st\right)+u\left({t \choose 2}+st\right)+2v\left(n-s-t\right)\left(s+t\right)}\\
 &  & \cdot\left(1-\left(1-p\right)^{{t \choose 2}+st}\right)^{t-u}\left(1-\left(1-p\right)^{2\left(n-s-t\right)\left(s+t\right)}\right)^{u-v}\enskip.\end{eqnarray*}

\newpage

Now with $Q$ we can write down the first moment of cores having $n-s-t$
variables at value $*$:

\begin{eqnarray*}
\mathrm{E}Z_{t} & = & \rho^{n-s-t}{n-s \choose t}2^{t}\sum_{u=0}^{t}\left(1-\rho\right)^{u}{t \choose u}\sum_{v=0}^{u}2^{-v}{u \choose v}Q\\
 & = & \rho^{n-s-t}{n-s \choose t}2^{t}\left(1-p\right)^{{t \choose 3}+s{t \choose 2}+2\left(n-s-t\right)\left({t \choose 2}+st\right)}\\
 &  & \cdot\sum_{u=0}^{t}\left(1-\rho\right)^{u}{t \choose u}\left(1-p\right)^{u\left({t \choose 2}+st\right)}\left(1-\left(1-p\right)^{{t \choose 2}+st}\right)^{t-u}\\
 &  & \cdot\sum_{v=0}^{u}2^{-v}{u \choose v}\left(1-p\right)^{2v\left(n-s-t\right)\left(s+t\right)}\left(1-\left(1-p\right)^{2\left(n-s-t\right)\left(s+t\right)}\right)^{u-v}\\
 & = & \rho^{n-s-t}{n-s \choose t}2^{t}\left(1-p\right)^{{t \choose 3}+s{t \choose 2}+2\left(n-s-t\right)\left({t \choose 2}+st\right)}\\
 &  & \cdot\sum_{u=0}^{t}\left(1-\rho\right)^{u}{t \choose u}\left(1-p\right)^{u\left({t \choose 2}+st\right)}\left(1-\left(1-p\right)^{{t \choose 2}+st}\right)^{t-u}\left(1-\frac{\left(1-p\right)^{2\left(n-s-t\right)\left(s+t\right)}}{2}\right)^{u}\\
 & = & \rho^{n-s-t}{n-s \choose t}2^{t}\left(1-p\right)^{{t \choose 3}+s{t \choose 2}+2\left(n-s-t\right)\left({t \choose 2}+st\right)}\\
 &  & \cdot\left(\left(1-\rho\right)\left(1-p\right)^{{t \choose 2}+st}\left(1-\frac{\left(1-p\right)^{2\left(n-s-t\right)\left(s+t\right)}}{2}\right)+1-\left(1-p\right)^{{t \choose 2}+st}\right)^{t}\\
 & = & \rho^{n-s-t}{n-s \choose t}2^{t}\left(1-p\right)^{{t \choose 3}+s{t \choose 2}+2\left(n-s-t\right)\left({t \choose 2}+st\right)}\\
 &  & \cdot\left(1-\left(1-p\right)^{{t \choose 2}+st}\left(\rho+\frac{\left(1-\rho\right)}{2}\left(1-p\right)^{2\left(n-s-t\right)\left(s+t\right)}\right)\right)^{t}\enskip.\end{eqnarray*}

$s=\lfloor an\rfloor$, $t=bn$ and $p=\frac{3\alpha\left(1-d\right)}{n^{2}\left(4-a^{2}\left(3-a\right)\right)}+o\left(\frac{1}{n^{2}}\right)$,
so:

\begin{eqnarray*}
h=\lim_{n\to+\infty}\frac{\ln\mathrm{E}Z_{t}}{n} & = & \ln\left(\frac{\left(1-a\right)^{1-a}2^{b}\rho^{1-a-b}}{b^{b}\left(1-a-b\right)^{1-a-b}}\right)-\frac{\alpha\left(1-d\right)b\left(b\left(6-5b-3a\right)+12\left(1-a-b\right)a\right)}{2\left(4-a^{2}\left(3-a\right)\right)}\\
 &  & +b\ln\left(1-e^{-A}\left(\rho+\frac{1-\rho}{2}e^{-B}\right)\right)\enskip.\end{eqnarray*}

where $A=\frac{3\alpha\left(1-d\right)b\left(b+2a\right)}{2\left(4-a^{2}\left(3-a\right)\right)}$
and $B=\frac{6\alpha\left(1-d\right)\left(1-a-b\right)\left(a+b\right)}{\left(4-a^{2}\left(3-a\right)\right)}$.

\subsection{Maximization}

Just like \citet{Maneva2008}, we want to show that for all $\alpha\in\left[4.419,4.453\right]$,
$a\in\left[\frac{1}{4.453e^{2}},1\right]$ and $r>1$, when $\rho\left(a\right)=0.3758a+0.7067$,
either $f\left(\alpha,a,r\right)<0$ or for all $b\in\left[0,1-a\right]$,
$f\left(\alpha,a,r\right)+h\left(\alpha,a,r,\rho\left(a\right),b\right)<0$.
Note that our function $f$ is the same as in \citet{Maneva2008};
only $h$ differs.

As in \citet{Maneva2008}, if $r<1.2$ then $\frac{\partial f}{\partial r}>0$,
and if $r>670$ then $\frac{\partial f}{\partial r}<0$, so we are
left with the region $a\in\left[\frac{1}{4.453e^{2}},0.999\right]$,
$r\in\left[1.2,670\right]$. The points where $f\left(4.419,a,r\right)>-0.0001$
are depicted in figure \ref{fig:banana}. This corresponds to the
domain where we must check that $f+h$ is negative. We get bounds
slightly different from \citet{Maneva2008}: $a\in\left[0.28,0.75\right]$,
$r\in\left[1.4,14\right]$, $\alpha\in\left[4.419,4.453\right]$ and
$b\in\left[0,1-a\right]$.

Figure \ref{fig:courbes-f-f+h} gives the shapes of our $f$ and $f+h$
at $\alpha=4.419$ and at $\alpha=4.453$.

\begin{figure}
\begin{centering}
\includegraphics[width=8cm]{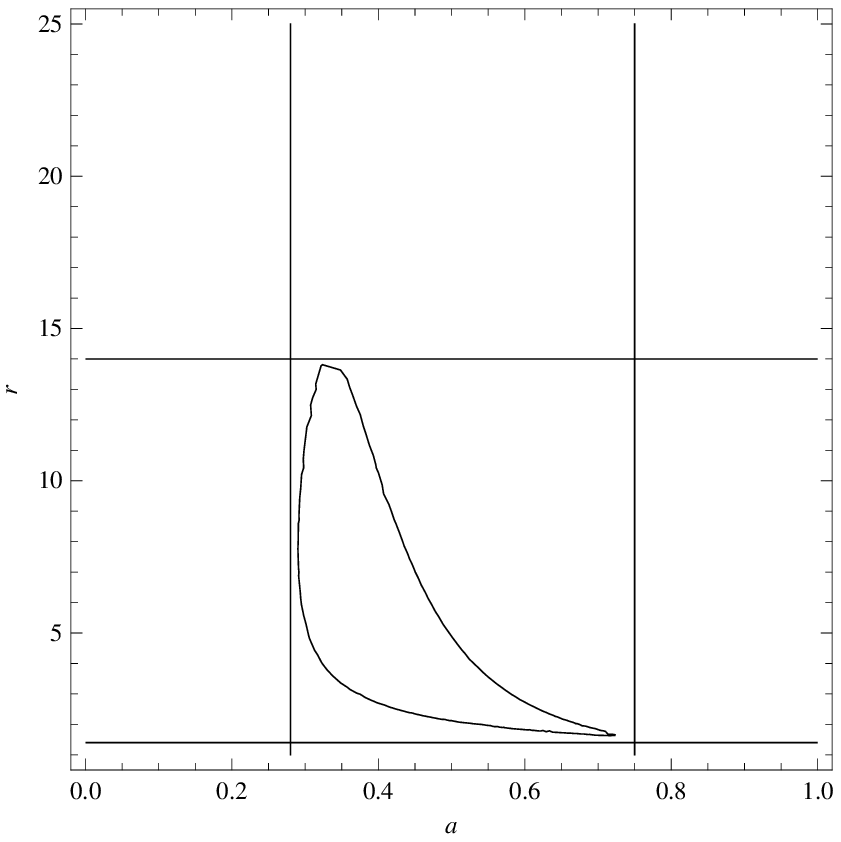}
\par\end{centering}

\caption{\label{fig:banana}$f\left(4.419,a,r\right)>-0.0001$ inside the contour
line, so only when $a\in\left[0.28,0.75\right]$ and $r\in\left[1.4,14\right]$. }

\end{figure}

\begin{figure}
\begin{centering}
\includegraphics[width=0.9\columnwidth]{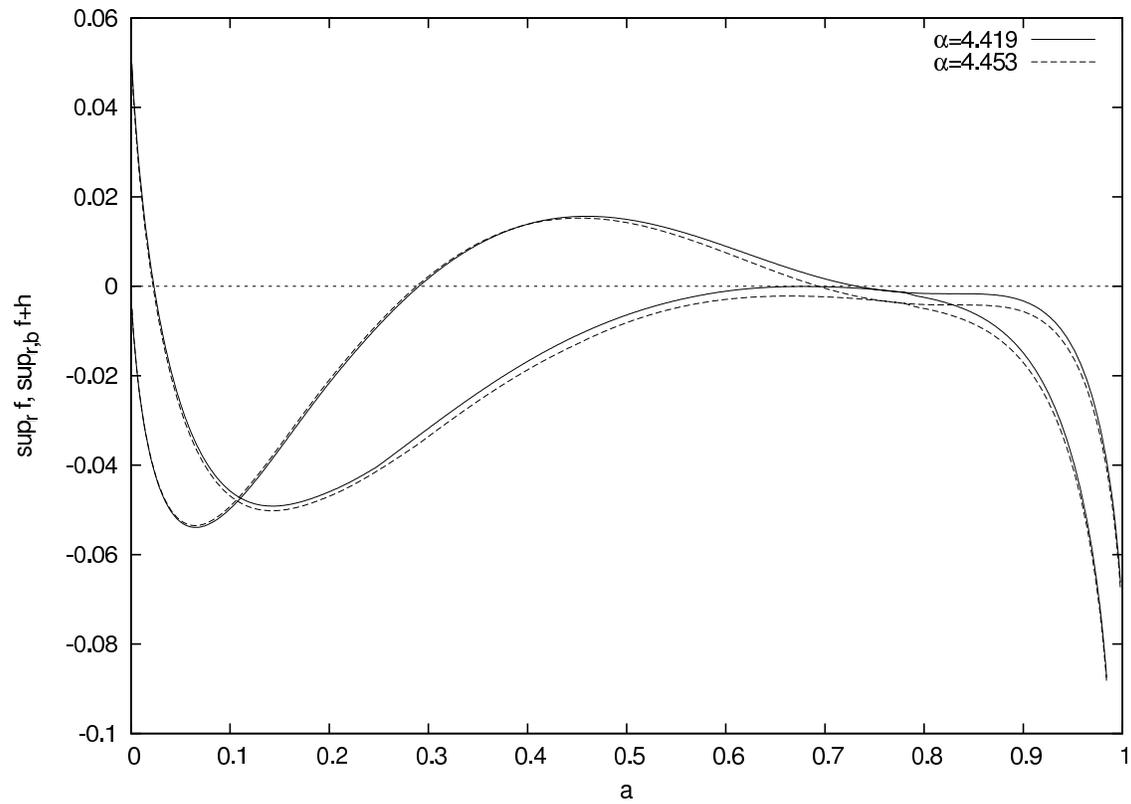}
\par\end{centering}

\caption{\label{fig:courbes-f-f+h}$f$ and $f+h$ at $\alpha=4.453$ and at
$\alpha=4.419$ (with our weights). }

\end{figure}

The domain of the variables is a finite product of segments, all functions
involved are smooth, except at the boundary points where $b=1-a$
(but this is due to the asymptotic equivalent we used for the binomial
coefficient). So in order to maximize $f+h$, following again \citet{Maneva2008},
we performed a sweep over this domain with a step of $0.001$ on all
variables (and a step of $10^{-5}$ in the vicinity of the maximum).
In the end we checked our result using the \texttt{FindMaximum} function
of \texttt{Mathematica}. The maximum of $f+h$ is $-0.0000277225$,
and the values of the variables at this point are $\alpha=4.419$,
$a=0.678206$, $b=0.0299196$ and $r=1.79833$.
\end{document}